\theoremstyle{plain}
\newtheorem{thm}{\protect\theoremname}
\theoremstyle{definition}
\newtheorem{defn}[thm]{\protect\definitionname}
\theoremstyle{remark}
\newtheorem{rem}[thm]{\protect\remarkname}
\theoremstyle{plain}
\newtheorem{lem}[thm]{\protect\lemmaname}
\theoremstyle{plain}
\newtheorem{cor}[thm]{\protect\corollaryname}
\theoremstyle{plain}
\newtheorem{prop}[thm]{\protect\propositionname}
\theoremstyle{definition}
\newtheorem{example}[thm]{\protect\examplename}
\providecommand{\corollaryname}{Corollary}
\providecommand{\definitionname}{Definition}
\providecommand{\examplename}{Example}
\providecommand{\lemmaname}{Lemma}
\providecommand{\propositionname}{Proposition}
\providecommand{\remarkname}{Remark}
\providecommand{\theoremname}{Theorem}
\begin{document}
\title{Comparison of Markov chains via weak Poincaré inequalities with application
to pseudo-marginal MCMC}
\author{Christophe Andrieu, Anthony Lee, Sam Power, Andi Q. Wang}
\maketitle
\begin{abstract}
We investigate the use of a certain class of functional inequalities
known as weak Poincaré inequalities to bound convergence of Markov
chains to equilibrium. We show that this enables the straightforward
and transparent derivation of subgeometric convergence bounds for
methods such as the Independent Metropolis--Hastings sampler and
pseudo-marginal methods for intractable likelihoods, the latter being
subgeometric in many practical settings. These results rely on novel
quantitative comparison theorems between Markov chains. Associated
proofs are simpler than those relying on drift/minorization conditions
and the tools developed allow us to recover and further extend known
results as particular cases. We are then able to provide new insights
into the practical use of pseudo-marginal algorithms, analyse the
effect of averaging in Approximate Bayesian Computation (ABC) and
the use of products of independent averages, and also to study the
case of lognormal weights relevant to particle marginal Metropolis--Hastings
(PMMH).
\end{abstract}
\global\long\def\dif{\mathrm{d}}%
\global\long\def\Var{\mathrm{Var}}%
\global\long\def\R{\mathbb{R}}%
\global\long\def\X{\mathcal{X}}%
\global\long\def\calE{\mathcal{E}}%
\global\long\def\E{\mathsf{E}}%
\global\long\def\Ebb{\mathbb{E}}%

\global\long\def\ELL{\mathrm{L}^{2}}%
\global\long\def\osc{\mathrm{osc}}%
\global\long\def\Id{\mathrm{Id}}%

\section{Introduction}

\subsection{Motivation}

The theoretical analysis of Markov chain Monte Carlo (MCMC) algorithms
can provide twofold benefits for users. On the one hand, it provides
fundamental reassurance and theoretical guarantees for the correctness
of algorithms, and on the other hand, can also offer guidance on parameter
tuning to maximise efficacy. 

Aside from high-dimensional scaling limit arguments \cite{Roberts1997},
two approaches have proved particularly successful for characterizing
the properties of MCMC algorithms \cite{Bakry08}: Lyapunov drift/minorization
conditions \cite{MeynTweedie,rosenthal-1995,Douc18book}, and functional-analytic
tools on Hilbert spaces, in particular in the reversible setup \cite[Chapter 22]{Kontoyiannis03,Douc18book}.
The former have been the most successful for the study of stability
and convergence rates, despite the inherent difficulty of constructing
an appropriate Lyapunov function. A particular success has been the
development of tools to analyse the scenario where the Markov transition
kernel does not possess a spectral gap, and hence converges at a subgeometric
rate (see \cite{Douc18book} for a book-length treatment). In contrast,
functional-analytic tools have been particularly successful at characterising
the resulting asymptotic variance, but their application to characterising
convergence rates has been limited to the scenario where a spectral
gap exists (see \cite{khare-2011} for example). This is despite the
existence of functional-analytic tools such as weak Poincaré or Nash
inequalities, which have been successfully applied to continuous-time
Markov processes in the absence of a spectral gap \cite{Rockner2001}. 

The aim of this paper is to fill this gap, and show how weak Poincaré
inequalities can be particularly useful for analyzing certain MCMC
algorithms and answering pertinent practical questions. Our main focus
here will be on pseudo-marginal algorithms \cite{Andrieu09}, a particular
type of MCMC method for which pointwise unbiased estimates of the
target density are sufficient for their implementation. We show that
weak Poincaré inequalities allow us to significantly expand and greatly
simplify the results of \cite{Andrieu15}, characterising precisely
the degradation in performance incurred when using noisy estimates
of the target density. This is particularly appealing because pseudo-marginal
Markov kernels often do not possess a spectral gap on general state
spaces, either because the noise is unbounded \cite{Andrieu09,Andrieu15}
or because the noise is not uniformly bounded and ``local proposals''
are used \cite{Lee14}, which is fairly common in practice.

To the best of our knowledge, while Nash inequalities for finite state
space Markov chains have been considered in \cite{Diaconis1996},
weak Poincaré inequalities have not received the same attention in
this context and it is not possible to point to a suitable reference
for background. In Section~\ref{sec:Weak-Poincar=0000E9-inequalities-overview}
we provide a comprehensive overview of the theory tailored to the
Markov chain scenario; some of the results given therein are new to
the best of our knowledge. In Section~\ref{sec:Chaining-Poincar=0000E9-inequalities}
we develop a series of new comparison results between Markov chains
sharing a common invariant distribution. In Section~\ref{sec:Application-to-pseudo-marginal}
we apply our results to pseudo-marginal algorithms, providing a simple
and comprehensive theory of the impact of using noisy densities on
the convergence properties of pseudo-marginal algorithms, which we
leverage to clarify implementational considerations. We consider the
effect of averaging, with applications to Approximate Bayesian Computation
(ABC) and when using products of independent averages, and finally
provide an analysis when the weights are lognormal, relevant to the
Particle Marginal MH (PMMH). The proofs not appearing in the main
text can be found in the appendices.

\subsection{Notation}

We will write $\mathbb{N}=\{1,2,\dots\}$ for the set of natural numbers,
and $\R_{+}=(0,\infty)$ for positive real numbers. 

Throughout we will be working on a general measurable space $(\E,\mathcal{F})$. 
\begin{itemize}
\item For a set $A\in\mathcal{F}$, its complement in $\E$ is denoted by
$A^{\complement}$. We denote the corresponding indicator function
by $\mathbb{I}_{A}:\E\to\{0,1\}$.
\item We assume $(\E,\mathcal{F})$ is equipped with a probability measure
$\mu$, and write $\ELL(\mu)$ for the Hilbert space of (equivalence
classes) of real-valued square-integrable measurable functions with
inner product $\langle f,g\rangle=\int_{\E}f(x)g(x)\,\dif\mu(x)$
and corresponding norm $\|\cdot\|_{2}$. We write $\ELL_{0}(\mu)$
for the set of functions $f\in\ELL(\mu)$ which also satisfy $\mu(f)=0$.
\item We assume that the diagonal is measurable in $\mathsf{E}\times\mathsf{E}$,
i.e. $\{(x,x):x\in\mathsf{E}\}\in\mathcal{F}\otimes\mathcal{F}$.
This assumption holds, e.g., on a Polish space endowed with its Borel
$\sigma$-algebra.
\item More generally, for $p\in[1,\infty)$, we write $\mathrm{L}^{p}(\mu)$
for the Banach space of real-valued measurable functions with finite
$p$-norm, $\|f\|_{p}:=\left(\int_{\E}|f|^{p}\,\dif\mu\right)^{1/p}$,
and $\mathrm{L}_{0}^{p}(\mu)$ for $f\in\mathrm{L}^{p}(\mu)$ with
$\mu(f)=0$.
\item For a measurable function $f:\mathsf{E}\to\R$, let $\|f\|_{\osc}:=\mathrm{ess_{\mu}}\sup f-\mathrm{ess}_{\mu}\inf f$.
\item For two probability measures $\mu$ and $\nu$ on $(\E,\mathcal{F})$
we let $\mu\otimes\nu(A\times B)=\mu(A)\nu(B)$ for $A,B\in\mathcal{F}$.
For a Markov kernel $P(x,\dif y)$ on $\E\times\mathcal{F}$, we write
for $\bar{A}\in\mathcal{F}\times\mathcal{F}$, the product $\sigma$-algebra,
$\mu\otimes P(\bar{A})=\int_{\bar{A}}\mu(\dif x)P(x,\dif y)$. 
\item A point mass distribution at $x$ will be denoted by $\delta_{x}(\dif y)$.
\item $\Id:\ELL(\mu)\to\ELL(\mu)$ denotes the identity mapping, $f\mapsto f$.
\item Given a bounded linear operator $T:\ELL(\mu)\to\ELL(\mu)$, we let
$\calE(T,f)$ be the Dirichlet form defined by $\langle(\Id-T)f,f\rangle$
for any $f\in\ELL(\mu)$.
\item For such an operator $T$, we write $T^{*}$ for its adjoint operator
$T^{*}:\ELL(\mu)\to\ELL(\mu)$, which satisfies $\langle f,Tg\rangle=\langle T^{*}f,g\rangle$
for any $f,g\in\ELL(\mu)$.
\item For such an operator $T$, we denote its spectrum by $\sigma(T)$.
\item We will write $a\wedge b$ to mean the (pointwise) minimum of real-valued
functions $a,b$ and $a\vee b$ for the corresponding maximum. For
$s\in\R$, we will write $\left(s\right)_{+}:=s\vee0$ for the positive
part.
\item $\inf A$ denotes the infimum of set $A$ and $\inf\emptyset=\infty$.
\item For a differentiable function $f$, we denote its derivative by ${\rm D}f$.
\end{itemize}

\section{Weak Poincaré inequalities\label{sec:Weak-Poincar=0000E9-inequalities-overview}}

\subsection{General case\label{subsec:General-case}}

\subsubsection{Definitions and basic results}

Throughout this work,\textcolor{black}{{} in analogue with the existing
notions for continuous-time Markov processes \cite{Rockner2001}},
we will call a \emph{weak Poincaré inequality} an inequality of the
following form:
\begin{defn}
(Weak Poincaré inequality, $\alpha$-parameterization.) Given a Markov
transition operator $P$ on $\E$, we will say that\emph{ $P^{*}P$
satisfies a weak Poincaré inequality} if for any $f\in\ELL_{0}\left(\mu\right)$,
\begin{align*}
\|f\|_{2}^{2} & \leq\alpha(r)\mathcal{E}(P^{*}P,f)+r\Phi(f),\quad\forall r>0,
\end{align*}
where $\alpha:(0,\infty)\to[0,\infty)$ is a decreasing function,
and $\Phi:\ELL(\mu)\to[0,\infty]$ is a functional satisfying for
any $f\in\ELL(\mu)$, $c>0$ and $n\in\mathbb{N}$,
\begin{equation}
\Phi(cf)=c^{2}\Phi(f),\quad\Phi(P^{n}f)\le\Phi(f),\quad\|f-\mu(f)\|_{2}^{2}\leq a\Phi(f-\mu(f)),\label{eq:phi_condn}
\end{equation}
where $a:=\sup_{f\in\ELL_{0}(\mu)\backslash\{0\}}\|f\|_{2}^{2}/\Phi(f)$.\label{def:Weak-Poincar=0000E9-inequality}
\end{defn}

\begin{rem}
A popular choice of $\Phi$ is $\Phi=\|\cdot\|_{\osc}^{2}$, for which
\textcolor{black}{$a\leq1$,} but we will also later consider $\Phi=\|\cdot\|_{2p}^{2}$
for $p\ge1$, which also has \textcolor{red}{$a\leq1$} by Lyapunov's
inequality.

\end{rem}

\begin{rem}
\label{rem:strong_PI}Note that $\alpha(r)$ typically diverges as
$r\to0$. By contrast, a \emph{strong Poincaré inequality} refers
to the situation when $\alpha$ is uniformly bounded above by $\alpha(r)\le1/C_{\mathrm{P}}$
for some $C_{\mathrm{P}}>0$; in this case we may take $r\to0$ and
recover the standard strong Poincaré inequality $C_{\mathrm{P}}\|f\|_{2}^{2}\leq\mathcal{E}(P^{*}P,f)$
for $f\in\ELL_{0}(\mu)$, from which one can immediately deduce geometric
convergence \cite{fill-1991}, that is for any $f\in\ELL_{0}(\mu)$,
$n\in\mathbb{N}$,
\begin{equation}
\|P^{n}f\|_{2}^{2}\le(1-C_{\mathrm{P}})^{n}\|f\|_{2}^{2}.\label{eq:conv-with-SPI}
\end{equation}
In what follows we show that a weak Poincaré inequality implies the
existence of a function $n\mapsto\gamma(n)$, which is decreasing
to 0, such that for any $f\in\ELL_{0}(\mu)$ and $\Phi(f)<\infty$,
\begin{equation}
\|P^{n}f\|_{2}^{2}\le\gamma(n)\Phi(f).\label{eq:sub-geometric-convergence}
\end{equation}
\end{rem}

A very useful equivalent formulation of the weak Poincaré inequality,
\textcolor{black}{which bears some resemblance to the `super-Poincaré
inequality' of \cite{Rockner2001},} is the following.

\begin{defn}
(Weak Poincaré inequality, $\beta$-parameterization.) Given a Markov
transition operator $P$ on $\E$, we will say that \emph{$P^{*}P$
satisfies a weak Poincaré inequality} if for any $f\in\ELL_{0}(\mu)$,\label{def:Super-Poincar=0000E9-inequality}
\[
\|f\|_{2}^{2}\leq s\mathcal{E}(P^{*}P,f)+\beta(s)\Phi(f),\quad\forall s>0,
\]
where $\beta:(0,\infty)\to[0,\infty)$ is a decreasing function with
$\beta(s)\downarrow0$ as $s\to\infty$, and $\Phi:\ELL(\mu)\to[0,\infty]$
is a functional satisfying (\ref{eq:phi_condn}) for any $f\in\ELL(\mu)$,
$c>0$ and $n\in\mathbb{N}$.
\end{defn}

\textcolor{black}{These two formulations are equivalent; see our Remark~\ref{rem:wpi_equiv}
below, and we will typically refer to a `weak Poincaré inequality'
without specifying the paramterization. If there is ambiguity, we
will write $\alpha$- or $\beta$-weak Poincaré inequality to specify
the parameterization.} Because $a$ is such that $\|f\|_{2}^{2}\leq a\Phi(f)$
for all $f\in\ELL_{0}(\mu)$, one can always take $\beta\leq a$ in
Definition~\ref{def:Super-Poincar=0000E9-inequality} and $\alpha(r)=0$
for $r\geq a$ in Definition~\ref{def:Weak-Poincar=0000E9-inequality}.

\begin{rem}
\label{rem:wpi_equiv}Suppose an\textcolor{black}{{} $\alpha$-weak
}Poincaré inequality holds for a function $\alpha$ with $\alpha(r)=0$
for $r\geq a$. Then a\textcolor{black}{{} $\beta$-weak} Poincaré inequality
holds with $\beta(s):=\inf\{r>0:\alpha(r)\le s\}$. Conversely, suppose
a \textcolor{black}{$\beta$-weak} Poincaré inequality holds for a
function $\beta$ with $\beta\leq a$. Then an\textcolor{black}{{}
$\alpha$-weak} Poincaré inequality holds with $\alpha(r):=\inf\{s>0:\beta(s)\le r\}$.
This procedure always returns a right-continuous function, so for
a given $\alpha$ (or $\beta$) satisfying a weak Poincaré inequality,
iterating this procedure will return the right-continuous version
of $\alpha$ (or $\beta$).
\end{rem}

While in practice establishing a weak Poincaré inequality is often
the most tractable option, a third (essentially) equivalent formulation
plays an important rôle to establish (\ref{eq:sub-geometric-convergence})
with optimal rate function $\gamma$. We need the following functions:
\begin{defn}
\label{def:KandKstar}For $\beta$ as in Definition~\ref{def:Super-Poincar=0000E9-inequality}
we let 
\begin{enumerate}
\item $K\colon[0,\infty)\rightarrow[0,\infty)$ be such that $K(u):=u\,\beta(1/u)$
for $u>0$ and $K(0):=0$,
\item $K^{*}\colon[0,\infty)\rightarrow[0,\infty]$ be such that $K^{*}(v):=\sup_{u\ge0}\left\{ uv-K(u)\right\} $
is the convex conjugate of $K$.
\end{enumerate}
\end{defn}

Then for $f\in\ELL_{0}(\mu)$ such that $0<\Phi(f)<\infty$, the \textcolor{black}{weak
}Poincaré inequality can be formulated as follows with $u=1/s>0$,
\[
u\left\Vert f\right\Vert _{2}^{2}\leq\mathcal{E}\left(P^{*}P,f\right)+K\left(u\right)\Phi\left(f\right),
\]
which by rearranging terms and optimising leads to
\[
K^{*}\left(\frac{\left\Vert f\right\Vert _{2}^{2}}{\Phi(f)}\right)\leq\frac{\mathcal{E}(P^{*}P,f)}{\Phi(f)}.
\]
Relevant properties of $K^{*}$ can be found in Lemma~\ref{lem:properties-K-star}.
The rate function $\gamma$ in (\ref{eq:sub-geometric-convergence})
is the inverse function of $F_{a}$ given below, which is well defined:
\begin{lem}
\label{lem:F_a_properties}Let $F_{a}(\cdot)\colon(0,a]\rightarrow\mathbb{R}$,
where $(0,a]\subset\mathrm{D}$, be given by
\[
F_{a}(x):=\int_{x}^{a}\frac{{\rm d}v}{K^{*}(v)},
\]
where $K^{*}$ is given in Definition~\ref{def:KandKstar} and $\mathrm{D}:=\{v\ge0:K^{*}(v)<\infty\}$.
Then $F_{a}(\cdot)$
\begin{enumerate}
\item is well-defined, convex, continuous and strictly decreasing;
\item is such that $\lim_{x\downarrow0}F_{a}(x)=\infty$ ;
\item has a well-defined inverse function $F_{a}^{-1}:(0,\infty)\to(0,a)$,
with $F_{a}^{-1}(x)\to0$ as $x\to\infty$.
\end{enumerate}
\end{lem}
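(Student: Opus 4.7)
The plan is to derive everything from basic properties of $K^*$ as a convex conjugate, combined with the assumption $\beta(s) \downarrow 0$ as $s \to \infty$. First I would observe that, as a supremum of affine functions, $K^*$ is convex and lower semicontinuous on $[0,\infty)$, that $K^*(0) = 0$ (since $K \ge 0$ and $K(0) = 0$), and that $K^*$ is non-decreasing (because each map $v \mapsto uv - K(u)$ is non-decreasing for $u \ge 0$). The key additional fact is that $K^*(v) > 0$ for every $v > 0$: given such a $v$, the hypothesis $\beta(s) \downarrow 0$ yields a small $u > 0$ with $\beta(1/u) \le v/2$, and then $K^*(v) \ge uv - u\beta(1/u) \ge uv/2 > 0$. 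Together with convexity, this implies $K^*$ is continuous and strictly positive on $(0, a)$; monotonicity combined with lower semicontinuity gives continuity from the left at $a$ as well, so $K^*$ is continuous and strictly positive on the whole interval $(0, a]$ (with $K^*(a) < \infty$ by the standing assumption $(0,a] \subset \mathrm{D}$).

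Given these properties of $K^*$, the claimed properties of $F_a$ follow almost directly. Well-definedness is immediate because $1/K^*$ is continuous on the compact interval $[x, a]$ for each $x \in (0, a]$ and therefore bounded there. Strict monotonicity and continuity then follow from the fundamental theorem of calculus, together with $F_a'(x) = -1/K^*(x) < 0$. Convexity of $F_a$ reduces to showing that $F_a'$ is non-decreasing, i.e., that $1/K^*$ is non-increasing, which is a direct consequence of $K^*$ being non-decreasing. To obtain $F_a(x) \to \infty$ as $x \to 0^+$, I would exploit convexity together with $K^*(0) = 0$: the interpolation $v = (v/a)\cdot a + (1 - v/a)\cdot 0$ gives $K^*(v) \le (v/a)K^*(a)$ on $[0,a]$, so $1/K^*(v) \ge a/(K^*(a)\, v)$, and the divergence of $\int_0^a \dif v / v$ forces $F_a(x) \to \infty$.

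Finally, since $F_a$ is continuous and strictly decreasing from $(0, a]$ onto $[0, \infty)$, with $F_a(a) = 0$ and $F_a(0^+) = \infty$, the inverse $F_a^{-1} : [0, \infty) \to (0, a]$ is well-defined, continuous and strictly decreasing, and its restriction to $(0, \infty)$ takes values in $(0, a)$ with $F_a^{-1}(x) \to 0$ as $x \to \infty$. The only genuine obstacle is verifying that $K^*$ is strictly positive on $(0, a]$, since without this the integrand defining $F_a$ could blow up at interior points; this is precisely where the hypothesis $\beta(s) \downarrow 0$ is essential, and where the argument relies on the structure specific to the super-Poincaré setting rather than on purely convex-analytic considerations.
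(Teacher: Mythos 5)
Your proposal is correct and follows essentially the same route as the paper, which defers to the companion Lemma~\ref{lem:properties-K-star}: you re-derive inline the same facts about $K^*$ (convexity, continuity, $K^*(0)=0$, strict positivity on $(0,a]$ via $\beta(s)\downarrow 0$) and use the identical convexity interpolation to obtain a lower bound of order $1/v$ on $1/K^*(v)$, from which divergence at $0^+$ follows. The only cosmetic difference is that your bound $K^*(v)\le (v/a)K^*(a)$ is a slightly weaker version of the paper's $K^*(v)\le v$ (Lemma~\ref{lem:properties-K-star}, item 4), but it suffices for the divergence of the integral.
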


The main result of this section is as follows.
\begin{thm}
\label{thm:WPI_F_bd}Assume that $\mu$ and $P^{*}P$ satisfy a\textcolor{black}{{}
weak }Poincaré inequality as in Definition~\ref{def:Super-Poincar=0000E9-inequality}.
Then for $f\in\ELL_{0}(\mu)$ such that $0<\Phi(f)<\infty$  and
any $n\in\mathbb{N}$,
\[
\|P^{n}f\|_{2}^{2}\leq\Phi\left(f\right)F_{a}^{-1}\left(n\right),
\]
where $F_{a}\colon(0,a]\rightarrow\mathbb{R}$ is the decreasing convex
and invertible function as in Lemma~\ref{lem:F_a_properties}.
\end{thm}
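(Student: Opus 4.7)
The plan is to convert the super-Poincaré inequality into a one-step recursion for the squared norms $u_n:=\|P^n f\|_2^2$, and then compare that recursion with the ODE $\dot y=-K^*(y)$ whose solution trajectory is precisely parametrised by $F_a^{-1}$.

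First I would observe that since $\mu(f)=0$ and $\mu$ is $P$-invariant, $P^n f\in\ELL_0(\mu)$ for all $n$. Applying the super-Poincaré inequality of Definition~\ref{def:Super-Poincar=0000E9-inequality} to $P^n f$, together with the contraction property $\Phi(P^n f)\le\Phi(f)$ from (\ref{eq:phi_condn}), yields
\[
u_n\le s\,\mathcal{E}(P^*P,P^n f)+\beta(s)\Phi(f),\qquad s>0.
\]
The Dirichlet form evaluates to $\mathcal{E}(P^*P,P^n f)=\langle(\Id-P^*P)P^n f,P^n f\rangle=\|P^n f\|_2^2-\|P^{n+1}f\|_2^2=u_n-u_{n+1}$. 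Substituting $u=1/s$, rearranging, and then optimising over $u\ge 0$ turns the super-Poincaré bound into
\[
u_n-u_{n+1}\ge\Phi(f)\,K^*\!\left(\tfrac{u_n}{\Phi(f)}\right),
\]
with $K$ and $K^*$ as in Definition~\ref{def:KandKstar}. (If $\Phi(f)=0$ the claim is trivial, and if $u_n=0$ for some $n$ then $u_m=0$ for all $m\ge n$ by contractivity of $P$ on $\ELL_0(\mu)$.)

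Setting $g(n):=u_n/\Phi(f)\in[0,a]$ (the upper bound uses $\|f\|_2^2\le a\Phi(f)$), the recursion reads $g(n+1)\le g(n)-K^*(g(n))$. I would then prove by induction that $F_a(g(n))\ge n$, equivalently $g(n)\le F_a^{-1}(n)$. The base case $n=0$ is $F_a(g(0))\ge 0$, which follows from $g(0)\le a$ and $F_a(a)=0$. For the induction step, using the monotonicity of $K^*$ (which I would invoke from Lemma~\ref{lem:properties-K-star}) we have $K^*(v)\le K^*(g(n))\le g(n)-g(n+1)$ for all $v\in[g(n+1),g(n)]$, so
\[
F_a(g(n+1))-F_a(g(n))=\int_{g(n+1)}^{g(n)}\frac{\dif v}{K^*(v)}\ge\frac{g(n)-g(n+1)}{g(n)-g(n+1)}=1,
\]
and the induction hypothesis gives $F_a(g(n+1))\ge n+1$. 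Applying the decreasing function $F_a^{-1}$ yields the theorem.

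The main subtleties I would expect are bookkeeping rather than conceptual: (i) ensuring the optimisation $\sup_{u\ge 0}\{u\cdot u_n-\Phi(f)K(u)\}=\Phi(f)K^*(u_n/\Phi(f))$ is legitimate on the whole range (which amounts to knowing $K(0)=0$ and that $K$ is convex, properties inherited from $\beta$); (ii) confirming that $K^*$ is non-decreasing on $[0,a]$ and finite on the interval of interest, so that the monotonicity step in the induction is valid (this is where citing the properties collected in Lemma~\ref{lem:properties-K-star} and Lemma~\ref{lem:F_a_properties} is essential); and (iii) handling the degenerate cases $\Phi(f)=0$ or $g(n)=0$ separately so that no division by zero occurs in the integral. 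Apart from these checks, the proof is the discrete counterpart of the standard continuous-time argument in which $F_a$ arises as the time-change of the ODE $\dot y=-K^*(y)$.
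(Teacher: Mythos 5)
Your proof is correct and follows essentially the same route as the paper's: convert the super-Poincar\'e inequality into the one-step estimate $K^{*}\bigl(u_{n}/\Phi(f)\bigr)\le (u_{n}-u_{n+1})/\Phi(f)$ and then telescope the increments $F_{a}\bigl(u_{n+1}/\Phi(f)\bigr)-F_{a}\bigl(u_{n}/\Phi(f)\bigr)\ge1$ using monotonicity of $1/K^{*}$. One minor difference worth noting: you replace $\Phi(P^{n}f)$ by $\Phi(f)$ \emph{before} optimising over $s$, which only needs $\beta\ge0$; the paper instead applies the optimised inequality with $\Phi(P^{n-1}f)$ in place and then converts to $\Phi(f)$ by invoking the property that $v\mapsto v^{-1}K^{*}(v)$ is increasing (Lemma~\ref{lem:properties-K-star}). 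Your ordering avoids that extra lemma and is, if anything, slightly cleaner; you also handle the degenerate cases $\Phi(f)=0$ and $u_{n}=0$ explicitly, which the paper leaves implicit.
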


\begin{rem}
\label{rem:F_=00005Cinfty}When $\int_{a}^{\infty}\frac{\dif v}{K^{*}\left(v\right)}<\infty$,
one can define $F_{\infty}(x):=\int_{x}^{\infty}\frac{{\rm d}v}{K^{*}\left(v\right)}$
for each $x>0$, and since $F_{a}\left(x\right)\le F_{\infty}\left(x\right)$,
one can similarly derive a bound $\|P^{n}f\|_{2}^{2}\leq\Phi\left(f\right)F_{\infty}^{-1}\left(n\right)$.
\end{rem}

\begin{rem}
A different proof relying on an alternative use of the Poincaré inequality
is given in Appendix~\ref{app:rockner-wang} for completeness, which
corresponds to the formulation of \cite[Theorem 2.1]{Rockner2001},
advocated by the authors for its tractability, but leads to suboptimal
results (see comments in Appendix~\ref{app:rockner-wang}). We have
found the formulation of Theorem~\ref{thm:WPI_F_bd} sufficiently
flexible for our applications.\textcolor{black}{{} This general approach
was in fact suggested in the continuous-time setting, see e.g. \cite[equation (1.4)]{Rockner2001}
but only later utilised in \cite{bakry}, where improved rates were
obtained. Our approach here can be seen as the natural discrete-time
analogue, however we further generalize the approach to allow for
general $\Phi$ and $a\neq\infty$, and make explicit the connection
with convex conjugates.}
\end{rem}

\begin{rem}
\label{rmk:strong-PI-super-beta}If $P^{*}P$ satisfies a strong Poincaré
inequality with constant $C_{{\rm P}}$, one may take the corresponding
$\beta$ to be $\beta(s)=a\mathbb{I}\{s\leq C_{{\rm P}}^{-1}\}$.
Conversely, if $\beta(s)=a\mathbb{I}\{s\leq C_{{\rm P}}^{-1}\}$ then
one can deduce that a strong Poincaré inequality holds. A simple calculation
shows that $K^{*}(v)=C_{{\rm P}}v$, for $0\leq v\leq a$, and
\[
F_{a}(x)=\int_{x}^{a}\frac{{\rm d}v}{C_{{\rm P}}v}{\rm d}v=C_{{\rm P}}^{-1}\log\left(\frac{a}{x}\right),
\]
from which we recover an exponential rate. However, $F_{a}^{-1}(n)=a\exp\big(-C_{{\rm P}}n\big)$
and since $\exp\big(-C_{{\rm P}}n\big)\geq(1-C_{{\rm P}})^{n}$ because
$-x/\sqrt{1-x}\leq\log(1-x)\leq-x$ for $x\in[0,1)$, this suggests
a loss compared to a more direct method leading to (\ref{eq:conv-with-SPI}).
In this setting we may also take $\Phi(f)=\|f\|_{2}^{2}$ and $a=1$.
\end{rem}

\begin{rem}
\label{rem:TV}It is possible to relate convergence in the sense of
Theorem~\ref{thm:WPI_F_bd} to $\ELL(\mu)$ convergence of $\nu(P^{*})^{n}$
and to convergence in total variation, for some initial distribution
$\nu$, where
\[
\left\Vert \nu(P^{*})^{n}-\mu\right\Vert _{2}:=\left\Vert \frac{{\rm d}(\nu(P^{*})^{n})}{{\rm d}\mu}-1\right\Vert _{2},
\]
bearing in mind that in the reversible case $P^{*}=P$. For any distribution
$\nu\ll\mu$ such that $\Phi\left(\frac{{\rm d}\nu}{{\rm d}\mu}-1\right)<\infty$,
we have 
\begin{align*}
\|\nu(P^{*})^{n}-\mu\|_{{\rm TV}} & =\int\left|\frac{{\rm d}(\nu(P^{*})^{n})}{{\rm d}\mu}(x)-1\right|{\rm d}\mu(x)\\
 & \leq\left\Vert \frac{{\rm d}(\nu(P^{*})^{n})}{{\rm d}\mu}-1\right\Vert _{2}\\
 & =\left\Vert P^{n}\left(\tfrac{{\rm d}\nu}{{\rm d}\mu}\right)-1\right\Vert _{2}\\
 & \leq\Phi\left(\frac{{\rm d}\nu}{{\rm d}\mu}-1\right)^{1/2}F_{a}^{-1}(n)^{1/2},
\end{align*}
where the first inequality follows from Jensen's inequality and the
last equality follows from \cite[p. 529]{Douc18book}. The condition
on $\nu$ is not very restrictive in practical settings when one can
choose $\nu$. For example, if $\mu$ and $\nu$ have densities with
respect to some common reference measure one can choose $\nu$ to
be uniformly distributed on some set on which $\mu$ has a positively
lower-bounded density.
\end{rem}

\begin{rem}
\textcolor{black}{By following the proof of Theorem~\ref{thm:WPI_F_bd}
and stopping early, one can obtain bounds which are tighter but sometimes
less convenient to work with. }

\textcolor{black}{For example, writing $T^{\circ n}$ for the $n$-fold
composition of the map $T$ with itself, one can obtain the bound
\begin{equation}
\frac{\left\Vert P^{n}f\right\Vert _{2}^{2}}{\Phi\left(f\right)}\leqslant\left(\Id-K^{*}\right)^{\circ n}\left(\frac{\left\Vert f\right\Vert _{2}^{2}}{\Phi\left(f\right)}\right),\label{eq:rmk_bd1}
\end{equation}
and indeed a decay estimate of this form is equivalent to the original
WPI holding with no loss of information; take $n=1$. Going one step
further in the proof, one can obtain the bound
\[
\frac{\left\Vert P^{n}f\right\Vert _{2}^{2}}{\Phi\left(f\right)}\leqslant F_{a}^{-1}\left(n+F_{a}\left(\frac{\left\Vert f\right\Vert _{2}^{2}}{\Phi\left(f\right)}\right)\right),
\]
which is weaker than (\ref{eq:rmk_bd1}) due to the integral approximation,
but stronger than the separable bound which is stated in the theorem.}
\end{rem}

A useful lemma we will make use of later concerning linear rescalings
is the following.
\begin{lem}
\label{lem:rescaling-beta-rescaling-invF}Let $\tilde{\beta}(s):=c_{1}\beta(c_{2}s)$
for $c_{1,}c_{2}>0$. Then $\tilde{K}^{*}(v):=\sup_{u\in\mathbb{R_{+}}}u[v-\tilde{\beta}(1/u)]=c_{1}c_{2}K^{*}(v/c_{1})$
and the corresponding function\textcolor{black}{{} $\tilde{F}_{a}(w)=c_{2}^{-1}F_{a/c_{1}}(w/c_{1})$.
Furthermore, when $c_{1}\ge1$, $\tilde{F}_{a}(w)\le c_{2}^{-1}F_{a}(w/c_{1})$,
and we can conclude $\tilde{F}_{a}^{-1}(x)\le c_{1}F_{a}^{-1}(c_{2}x)$.}
\end{lem}

\subsubsection{Examples of $\beta\left(s\right)$ and $\gamma=F_{a}^{-1}$\label{subsec:Examples-of-b}}

Throughout the following examples\textcolor{black}{{} (which coincide
with those of \cite[Corollary~2.4]{Rockner2001})}, we use the notation
of Theorem~\ref{thm:WPI_F_bd}.
\begin{lem}
\label{lem:rate-beta-decays-polynomial}For $\beta(s)=c_{0}s^{-c_{1}}$,
$K^{*}(v)=C\left(c_{0},c_{1}\right)v^{1+c_{1}^{-1}}$. Then with $F_{\infty}$
as in Remark~\ref{rem:F_=00005Cinfty}, the convergence rate is bounded
by
\[
F_{\infty}^{-1}\left(n\right)\le c_{0}\left(1+c_{1}\right)^{1+c_{1}}n^{-c_{1}}.
\]
\end{lem}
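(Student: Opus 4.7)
The plan is to carry out the calculation in four mechanical steps, as everything is essentially routine manipulation of power functions once one is careful with the exponents.

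First I would compute $K$ directly from its definition: for $u>0$,
\[
K(u) = u\,\beta(1/u) = u \cdot c_0 (1/u)^{-c_1} = c_0\, u^{1+c_1},
\]
and $K(0)=0$. Next, to obtain $K^*$, I would solve the unconstrained optimization $K^*(v) = \sup_{u \ge 0}\{uv - c_0 u^{1+c_1}\}$: since $v \ge 0$ and the objective is concave in $u$ (because $1+c_1 > 1$), the first-order condition $v = c_0(1+c_1) u^{c_1}$ gives the maximiser
\[
u^\star = \Bigl(\tfrac{v}{c_0(1+c_1)}\Bigr)^{1/c_1}.
\]
Substituting and collecting the $v^{1+1/c_1}$ terms yields
\[
K^*(v) = \frac{c_1}{(1+c_1)\,[c_0(1+c_1)]^{1/c_1}}\, v^{\,1+c_1^{-1}} =: C(c_0,c_1)\, v^{\,1+c_1^{-1}},
\]
which identifies the constant $C(c_0,c_1)$.

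Third, I would integrate to obtain $F_\infty$. Since $\int_x^\infty v^{-1-1/c_1}\,\mathrm{d}v = c_1\, x^{-1/c_1}$, we have
\[
F_\infty(x) = \int_x^\infty \frac{\mathrm{d}v}{K^*(v)} = \frac{c_1}{C(c_0,c_1)}\, x^{-1/c_1},
\]
which is well-defined for all $x>0$. Finally, inverting the relation $n = F_\infty(x)$ gives $x = [c_1/(C(c_0,c_1) n)]^{c_1}$, so
\[
F_\infty^{-1}(n) = \frac{c_1^{c_1}}{C(c_0,c_1)^{c_1}}\, n^{-c_1}.
\]
Plugging in the expression for $C(c_0,c_1)^{c_1}$, the $c_1^{c_1}$ factors cancel and one is left with exactly $c_0(1+c_1)^{1+c_1}\, n^{-c_1}$, giving the stated bound (with equality, in fact). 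Since the bound $F_\infty^{-1}(n) \le c_0(1+c_1)^{1+c_1} n^{-c_1}$ holds for all $n$, the same bound transfers via $F_a \le F_\infty$ (Remark on $F_\infty$) to the convergence rate when $F_\infty$ exists.

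There is no real obstacle here: the only place to be careful is the bookkeeping of the exponent $1/c_1$ and the constant $[c_0(1+c_1)]^{1/c_1}$ when raising $C(c_0,c_1)^{c_1}$ to the power $c_1$, so that the $c_1^{c_1}$ cancellation and the reappearance of the factor $c_0(1+c_1)^{1+c_1}$ can be verified cleanly.
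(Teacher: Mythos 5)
Your proof is correct and follows essentially the same route as the paper: compute $K$, form $K^*$ by first-order conditions, integrate, and invert. The only cosmetic difference is that the paper integrates $\int_w^a \mathrm{d}v/K^*(v)$ for finite $a$ and then discards the resulting $a^{-c_1^{-1}}$ term to obtain the bound, which is precisely the identity you recover directly from $F_\infty$; your observation that the bound is in fact an equality for $F_\infty^{-1}$ is a small tidiness gain.
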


\begin{lem}
\label{lem:rate-beta-decays-exponentially}Assume $\beta(s)=\eta_{0}\exp\left(-\eta_{1}s^{\eta_{2}}\right)$
for $\eta_{0},\eta_{1},\eta_{2}>0$ and choose $a>0$. Then there
exist $C>0$, $0<v_{0}<1\wedge a$ such that for $v\in[0,v_{0}]$,
\[
K^{*}\left(v\right)\geqslant Cv\left(\log\left(\frac{1}{v}\right)\right)^{-1/\eta_{2}}.
\]
 In addition, there exists $C'>0$ such that for all $n\in\mathbb{N}$,
\[
F_{a}^{-1}(n)\leq C'\exp\left(-\left(C\frac{1+\eta_{2}}{\eta_{2}}n\right)^{\eta_{2}/\left(1+\eta_{2}\right)}\right).
\]
\end{lem}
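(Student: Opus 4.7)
The plan is to first establish the pointwise lower bound on $K^{*}$ near the origin by a judicious choice of test point $u$ in the Legendre transform, and then to feed this lower bound into the integral defining $F_a$ and invert the resulting estimate.

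For the lower bound on $K^{*}$, note that $K(u)=u\beta(1/u)=\eta_{0}u\exp(-\eta_{1}u^{-\eta_{2}})$ for $u>0$, so that
\[
K^{*}(v)=\sup_{u\ge 0}\bigl\{uv-\eta_{0}u\exp(-\eta_{1}u^{-\eta_{2}})\bigr\}.
\]
The idea is to choose $u=u(v)$ so that $\eta_{0}\exp(-\eta_{1}u^{-\eta_{2}})=v/2$, i.e.\ $u=u^{*}(v):=\bigl(\eta_{1}/\log(2\eta_{0}/v)\bigr)^{1/\eta_{2}}$. For this $u^{*}$ one obtains $u^{*}v-K(u^{*})=u^{*}v/2$. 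Pick $v_{0}\in(0,1\wedge a)$ small enough that, for $v\in(0,v_{0}]$, $\log(2\eta_{0}/v)=\log(2\eta_{0})+\log(1/v)\le 2\log(1/v)$; then
\[
K^{*}(v)\ge \tfrac{v}{2}\,u^{*}(v)\ge \tfrac{v}{2}\bigl(\eta_{1}/2\log(1/v)\bigr)^{1/\eta_{2}}= Cv\bigl(\log(1/v)\bigr)^{-1/\eta_{2}},
\]
with $C:=2^{-1-1/\eta_{2}}\eta_{1}^{1/\eta_{2}}$, which is the claimed bound. The case $v=0$ is trivial since $K^{*}(0)=0$ (the expression vanishes by convention).

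For the upper bound on $F_{a}^{-1}(n)$, I will use the lower bound above to control the integral in $F_{a}$. For $x\in(0,v_{0}]$,
\[
F_{a}(x)=\int_{x}^{v_{0}}\frac{\dif v}{K^{*}(v)}+F_{a}(v_{0})\le \frac{1}{C}\int_{x}^{v_{0}}\frac{(\log(1/v))^{1/\eta_{2}}}{v}\,\dif v+F_{a}(v_{0}),
\]
and the substitution $w=\log(1/v)$ turns the integral into $\int_{\log(1/v_{0})}^{\log(1/x)}w^{1/\eta_{2}}\dif w=\frac{\eta_{2}}{1+\eta_{2}}\bigl[(\log(1/x))^{(1+\eta_{2})/\eta_{2}}-(\log(1/v_{0}))^{(1+\eta_{2})/\eta_{2}}\bigr]$, yielding
\[
F_{a}(x)\le \frac{\eta_{2}}{C(1+\eta_{2})}\bigl(\log(1/x)\bigr)^{(1+\eta_{2})/\eta_{2}}+A,
\]
for a constant $A\ge 0$ that absorbs $F_{a}(v_{0})$ and the lower-limit term. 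Setting $x:=F_{a}^{-1}(n)$ and rearranging gives, for $n\ge A$,
\[
\log(1/x)\ge \Bigl(\tfrac{C(1+\eta_{2})}{\eta_{2}}(n-A)\Bigr)^{\eta_{2}/(1+\eta_{2})},
\]
which, after using $(n-A)^{\eta_{2}/(1+\eta_{2})}\ge n^{\eta_{2}/(1+\eta_{2})}-O(1)$ and exponentiating, leads to a bound of the required form with a suitable $C'$; for the finitely many small $n$ where this manipulation fails, $F_{a}^{-1}(n)\le a$ is absorbed into $C'$.

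The main obstacle is cosmetic rather than conceptual: all the work is really in Part 1, where one must commit to a choice of $u^{*}$ that yields the right logarithmic correction with an explicit constant. The remaining task is bookkeeping to package the additive constant $A$ arising from $F_{a}(v_{0})$ and the lower endpoint into the multiplicative constant $C'$ in front of the exponential, using monotonicity of the stretched exponential in $n$; this forces one to verify that the exponent $\eta_{2}/(1+\eta_{2})$ and the inner factor $C(1+\eta_{2})/\eta_{2}$ come out as claimed.
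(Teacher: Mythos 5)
Your proof is correct and takes essentially the same approach as the paper's: a lower bound on $K^{*}$ via an explicit test point $u$ in the Legendre transform, followed by integrating that bound over $[x,v_{0}]$ and absorbing $F_{a}(v_{0})$ and the shift $n\mapsto n-A$ into the prefactor $C'$ by concavity. The only cosmetic difference is your choice of $u^{*}$, tuned so the second term cancels exactly half of $u^{*}v$, whereas the paper picks $u_{v}$ with a free parameter $\eta\in(0,\eta_{1})$ and argues the remainder $\eta_{0}v^{\eta_{1}/\eta}$ is of lower order; both yield the same constant up to the value of $C$.
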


\textcolor{red}{}
\begin{lem}
\textcolor{red}{\label{lem:rate-beta-decays-log}}\textcolor{black}{Assume
$\beta\left(s\right)=c_{0}\cdot\left(\log\max\left(c_{1},s\right)\right)^{-p}$
for $c_{0}>0$, $c_{1}>1$, $p>0$. Then there exist $v_{0}>0$, $C>0$,
such that for $v\in\left[0,v_{0}\right]$,}

\textcolor{black}{
\[
K^{*}\left(v\right)\geqslant C\cdot v^{1+1/p}\cdot\exp\left(-\left(v/c_{0}\right)^{-1/p}\right).
\]
In addition, there exists $C'>0$ such that for all $n\in\mathbb{N}$,}

\textcolor{black}{
\[
F_{a}^{-1}\left(n\right)\leqslant C'\cdot\left(\log\max\left(n,2\right)\right)^{-p}.
\]
}
\end{lem}

\subsection{Reversible case\label{subsec:Reversible-case}}

When the kernel $P$ is reversible with respect to $\mu$, we can
derive a simplified weak Poincaré inequality in terms of $P$ directly,
rather than $P^{*}P$, making the approach much more practical. This
kind of result seems to be new to the best of our knowledge, \textcolor{black}{and
indeed the need to handle $P^{*}P$ is one of the key subtleties of
our present discrete-time setting as opposed to the continuous-time
setting.} Furthermore we can also derive a converse result; a weak
Poincaré inequality is necessary for subgeometric convergence.

\subsubsection{Simplified weak Poincaré inequality\label{subsec:Simplified-weak-Poincar=0000E9}}
\begin{defn}
\label{def:WPI_rev}(Weak Poincaré inequality; reversible case.) Given
a reversible Markov transition operator $P$ on $\E$, we will say
that \emph{$P$ satisfies a weak Poincaré inequality} if for any $f\in\ELL_{0}\left(\mu\right)$,
\[
\left\Vert f\right\Vert _{2}^{2}\leq\alpha\left(r\right)\mathcal{E}\left(P,f\right)+r\Phi\left(f\right),\quad\forall r>0,
\]
where $\alpha:\left(0,\infty\right)\to[0,\infty)$ is a decreasing
function, and $\Phi:\ELL\left(\mu\right)\to\left[0,\infty\right]$
is a functional satisfying: for any $f\in\ELL\left(\mu\right)$, $c>0$
and $n\in\mathbb{N}$,
\[
\Phi\left(cf\right)=c^{2}\Phi\left(f\right),\quad\Phi\left(P^{n}f\right)\le\Phi\left(f\right),\quad\|f-\mu\left(f\right)\|_{2}^{2}\leq a\Phi\left(f-\mu\left(f\right)\right),
\]
with $a:=\sup_{f\in\ELL_{0}(\mu)\backslash\{0\}}\left\Vert f\right\Vert _{2}^{2}/\Phi\left(f\right)$.

A \textit{\textcolor{red}{$\beta$-weak}}\textit{\textcolor{red}{\emph{
}}}\emph{Poincaré inequality for $P$} is analogously defined: for
a function $\beta:\left(0,\infty\right)\to[0,\infty)$ decreasing
with $\beta\left(s\right)\downarrow0$ as $s\to\infty$, for any $f\in\ELL_{0}\left(\mu\right)$,

\[
\left\Vert f\right\Vert _{2}^{2}\leq s\mathcal{E}\left(P,f\right)+\beta\left(s\right)\Phi\left(f\right),\quad\forall s>0.
\]
\end{defn}

We are interested now to obtain an appropriate Poincaré inequality
for $P^{*}P=P^{2}$ from a corresponding Poincaré inequality for $P$.
The key complication is the left-hand side of the spectrum, around
$-1$. In order to rule out periodic behaviour (which will prevent
convergence), some assumptions on the spectrum in a neighbourhood
of $-1$ are required.
\begin{lem}
\label{lem:P-to-P2-gap}Suppose that the reversible kernel $P$ possesses
a left spectral gap: there exists some $0<c_{\mathrm{gap}}\le1$ such
that the spectrum of $P$ is bounded below: 
\[
\inf\sigma\left(P\right)\ge-1+c_{\mathrm{gap}}.
\]
Then we obtain the bound on the Dirichlet forms given $f\in\ELL\left(\mu\right)$
by
\[
\calE\left(P^{2},f\right)\ge c_{\mathrm{gap}}\calE\left(P,f\right).
\]
\end{lem}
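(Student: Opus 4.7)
The plan is to exploit the spectral-theoretic structure afforded by reversibility. Since $P$ is self-adjoint on $\ELL(\mu)$, both $\Id-P$ and $\Id+P$ are bounded, self-adjoint and mutually commuting, and one has the operator identity $\Id-P^{2}=(\Id-P)(\Id+P)$. The spectral assumption $\inf\sigma(P)\ge -1+c_{\mathrm{gap}}$ translates directly into $\inf\sigma(\Id+P)\ge c_{\mathrm{gap}}$ and $\sigma(\Id-P)\subset[0,2-c_{\mathrm{gap}}]$, so in particular $\Id-P$ is a positive operator and admits a self-adjoint square root $(\Id-P)^{1/2}$ commuting with $\Id+P$.

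Given $f\in\ELL(\mu)$, I would then write
\[
\calE(P^{2},f)=\langle(\Id-P^{2})f,f\rangle=\langle(\Id+P)(\Id-P)^{1/2}f,(\Id-P)^{1/2}f\rangle,
\]
using commutativity and self-adjointness to move $(\Id-P)^{1/2}$ across the inner product. Setting $g:=(\Id-P)^{1/2}f$, the lower bound $\inf\sigma(\Id+P)\ge c_{\mathrm{gap}}$ gives $\langle(\Id+P)g,g\rangle\ge c_{\mathrm{gap}}\|g\|_{2}^{2}$, whence
\[
\calE(P^{2},f)\ge c_{\mathrm{gap}}\|g\|_{2}^{2}=c_{\mathrm{gap}}\langle(\Id-P)f,f\rangle=c_{\mathrm{gap}}\calE(P,f),
\]
which is the desired inequality.

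The only subtle point is justifying the manipulations with $(\Id-P)^{1/2}$ and the positivity of $\Id+P$ on the correct subset of the spectrum; both follow from the standard continuous functional calculus for bounded self-adjoint operators, since on $\sigma(P)\subset[-1+c_{\mathrm{gap}},1]$ the functions $1+\lambda$ and $\sqrt{1-\lambda}$ are continuous and non-negative. Alternatively, one can bypass the square-root construction entirely and argue by spectral calculus: writing $P=\int_{-1+c_{\mathrm{gap}}}^{1}\lambda\,\dif E(\lambda)$ one has
\[
\calE(P^{2},f)=\int(1-\lambda^{2})\,\dif\langle E(\lambda)f,f\rangle=\int(1+\lambda)(1-\lambda)\,\dif\langle E(\lambda)f,f\rangle,
\]
and the bound $1+\lambda\ge c_{\mathrm{gap}}$ on the support of the spectral measure yields the inequality immediately. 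I expect no real obstacle; the main care is just in arranging the operator calculus cleanly, so I would favour the second, spectral-integral presentation as the cleanest to write down.
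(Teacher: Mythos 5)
Your proof is correct and, in its second (and favoured) presentation, is essentially the same argument as the paper's: both reduce to the spectral theorem plus the pointwise inequality $1-\lambda^{2}\ge c_{\mathrm{gap}}(1-\lambda)$ on $\sigma(P)\subset[-1+c_{\mathrm{gap}},1]$, which the paper writes in the rearranged form $\lambda^{2}-c_{\mathrm{gap}}\lambda\le 1-c_{\mathrm{gap}}$ and you obtain via the factorisation $1-\lambda^{2}=(1+\lambda)(1-\lambda)$. The square-root variant is just an operator-level rewriting of the same spectral bound.
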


\begin{cor}
\label{cor:rev_P2}In the setting of Lemma~\ref{lem:P-to-P2-gap},
it immediately follows that if $P$ satisfies a \textcolor{black}{weak}
Poincaré inequality with function $\beta$ as in Definition~\ref{def:WPI_rev},
$P^{2}$ satisfies a \textcolor{black}{weak }Poincaré inequality with
$\tilde{\beta}$ given by $\tilde{\beta}\left(s\right):=\beta\left(c_{\mathrm{gap}}s\right)$:
\[
\left\Vert f\right\Vert _{2}^{2}\leq s\mathcal{E}\left(P^{2},f\right)+\tilde{\beta}\left(s\right)\Phi\left(f\right)\quad,\forall s>0.
\]
Thus the convergence rate $\tilde{F}_{a}^{-1}$ for $\left\Vert P^{n}f\right\Vert _{2}^{2}$
can be immediately deduced from Theorem~\ref{thm:WPI_F_bd} and Lemma~\ref{lem:rescaling-beta-rescaling-invF}.
\end{cor}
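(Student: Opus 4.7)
The proof is a direct substitution followed by a reparametrization. Fix $f \in \ELL_{0}(\mu)$. The hypothesis gives, for every $s > 0$, $\|f\|_{2}^{2} \le s\,\mathcal{E}(P, f) + \beta(s)\Phi(f)$. Lemma~\ref{lem:P-to-P2-gap} rearranged reads $\mathcal{E}(P, f) \le c_{\mathrm{gap}}^{-1}\,\mathcal{E}(P^{2}, f)$, and substituting yields $\|f\|_{2}^{2} \le (s/c_{\mathrm{gap}})\,\mathcal{E}(P^{2}, f) + \beta(s)\Phi(f)$. Setting $t := s/c_{\mathrm{gap}}$ gives the claimed $\|f\|_{2}^{2} \le t\,\mathcal{E}(P^{2}, f) + \beta(c_{\mathrm{gap}} t)\Phi(f)$ for all $t > 0$, with $\tilde{\beta}(t) := \beta(c_{\mathrm{gap}} t)$ inheriting decreasingness and decay to $0$ at infinity directly from $\beta$.

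For the convergence rate, reversibility of $P$ identifies $P^{*}P = P^{2}$, so the inequality just established is a super-Poincar\'e inequality of the form required in Definition~\ref{def:Super-Poincar=0000E9-inequality}. Theorem~\ref{thm:WPI_F_bd}, applied with $\tilde{\beta}$ in place of $\beta$, then produces $\|P^{n}f\|_{2}^{2} \le \Phi(f)\,\tilde{F}_{a}^{-1}(n)$. Lemma~\ref{lem:rescaling-beta-rescaling-invF} with $c_{1} = 1$, $c_{2} = c_{\mathrm{gap}}$ expresses the new rate in terms of the old: when $F_{\infty}$ exists one has $\tilde{F}_{\infty}^{-1}(x) = F_{\infty}^{-1}(c_{\mathrm{gap}} x)$, so the only effect of passing from $P$ to $P^{2}$ is a time-rescaling by the spectral-gap-at-$-1$ constant $c_{\mathrm{gap}}$.

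I do not expect any genuine obstacle here; the substantive content sits entirely in Lemma~\ref{lem:P-to-P2-gap}, and the corollary is simply its packaging together with an application of the main theorem. The only small verifications are that $\Phi$ retains its structural properties when $P$ is replaced by $P^{2}$ --- immediate since $(P^{2})^{n} = P^{2n}$ and $\Phi(P^{m}f) \le \Phi(f)$ for every $m \in \mathbb{N}$ --- and that the constant $a$ appearing in Theorem~\ref{thm:WPI_F_bd} depends only on $\Phi$ and $\|\cdot\|_{2}$, not on the underlying operator, so it is unchanged in passing from the $P$-formulation to the $P^{2}$-formulation.
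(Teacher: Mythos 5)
Your argument is correct and is exactly the (immediate) derivation the paper has in mind: substitute the lower bound $\mathcal{E}(P^{2},f)\geq c_{\mathrm{gap}}\mathcal{E}(P,f)$ from Lemma~\ref{lem:P-to-P2-gap} into the super-Poincar\'e inequality for $P$, reparametrize $t=s/c_{\mathrm{gap}}$, and note that with $P$ reversible $P^{*}P=P^{2}$ so Theorem~\ref{thm:WPI_F_bd} and Lemma~\ref{lem:rescaling-beta-rescaling-invF} (with $c_{1}=1$, $c_{2}=c_{\mathrm{gap}}$) apply directly. One tiny remark: the hypothesis Theorem~\ref{thm:WPI_F_bd} actually needs is $\Phi(P^{n}f)\le\Phi(f)$ for the original kernel $P$, which is already part of Definition~\ref{def:WPI_rev}, so your check via $(P^{2})^{n}=P^{2n}$ is harmless but not the condition that is invoked.
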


\textcolor{black}{When there is no left spectral gap, we can generalize
the above results using a weak Poincaré inequality for $-P$.}
\begin{thm}
\textcolor{black}{\label{prop:left-WPI-P}Suppose $P$ is $\mu$-reversible.
Assume the following two weak-Poincaré inequalities hold: for all
$s>0$, $f\in\mathrm{L}_{0}^{2}(\mu)$,
\begin{align}
\|f\|_{2}^{2} & \le s\calE(-P,f)+\beta_{-}(s)\Phi(f)\label{eq:beta-}\\
\|f\|_{2}^{2} & \le s\calE(P,f)+\beta_{+}(s)\Phi(f).\label{eq:beta+}
\end{align}
Then, the following weak Poincaré inequality for $P^{2}$ holds: 
\begin{equation}
\|f\|_{2}^{2}\le s\calE(P^{2},f)+\beta(s)\tilde{\Phi}(f),\label{eq:P2_WPI_beta}
\end{equation}
for all $s>0,$ $f\in\mathrm{L}_{0}^{2}(\mu)$, where
\[
\beta(s):=\inf\{s_{1}\beta_{+}(s_{2})+\beta_{-}(s_{1})|s_{1}>0,s_{2}>0,s_{1}s_{2}=s\},
\]
\[
\tilde{\Phi}(f):=\Phi(f)\vee\Phi\left((\Id+P)^{1/2}f\right).
\]
}
\end{thm}

Recall that a $\mu$-reversible Markov kernel $P$ is \emph{positive}
if for any $f\in\ELL(\mu)$, $\langle Pf,f\rangle\ge0$, and a positive
reversible kernel $P$ has spectrum contained in the nonnegative interval
$\sigma(P)\subset[0,1]$. When $P$ is reversible and positive, convergence
of $P^{n}$ can be straightforwardly derived as then $c_{{\rm gap}}=1$. 
\begin{thm}
\label{thm:rev_pos_conv}Assume that the kernel $P$ is reversible
and positive and satisfies a \textcolor{black}{weak} Poincaré inequality
as in Definition~\ref{def:WPI_rev}. Then Theorem~\ref{thm:WPI_F_bd}
applies, so for $f\in\ELL_{0}(\mu)$ such that $0<\Phi(f)<\infty$
and any $n\in\mathbb{N}$,
\[
\left\Vert P^{n}f\right\Vert _{2}^{2}\leq\Phi\left(f\right)F_{a}^{-1}\left(n\right).
\]
\end{thm}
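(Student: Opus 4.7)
The plan is to show that positivity and reversibility of $P$ reduce Theorem~\ref{thm:rev_pos_conv} to a direct invocation of Theorem~\ref{thm:WPI_F_bd} via Corollary~\ref{cor:rev_P2} with $c_{\mathrm{gap}}=1$. Concretely, the chain of reasoning I would follow is: (i) note that reversibility gives $P^{*}=P$ so $P^{*}P=P^{2}$; (ii) positivity of $P$ means $\sigma(P)\subset[0,1]$ and in particular $\inf\sigma(P)\ge 0=-1+1$, i.e.\ Lemma~\ref{lem:P-to-P2-gap} applies with $c_{\mathrm{gap}}=1$; (iii) consequently $\calE(P^{2},f)\ge\calE(P,f)$ for every $f\in\ELL(\mu)$.

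With (iii) in hand, the super-Poincaré inequality assumed for $P$,
\[
\|f\|_{2}^{2}\leq s\,\mathcal{E}(P,f)+\beta(s)\,\Phi(f),\qquad s>0,\ f\in\ELL_{0}(\mu),
\]
yields the super-Poincaré inequality for $P^{*}P=P^{2}$ with \emph{the same} rate function $\beta$, simply by replacing $\calE(P,f)$ by the larger quantity $\calE(P^{2},f)$. The functional $\Phi$ is the same in both formulations and still satisfies the requirements of Definition~\ref{def:Super-Poincar=0000E9-inequality} (homogeneity, contractivity under $P^{n}$, and the comparison $\|f-\mu(f)\|_{2}^{2}\le a\,\Phi(f-\mu(f))$), so the hypotheses of Theorem~\ref{thm:WPI_F_bd} are met verbatim with the same $a$ and $\beta$, and thus the same $K$, $K^{*}$ and $F_{a}$.

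Applying Theorem~\ref{thm:WPI_F_bd} then delivers, for any $f\in\ELL_{0}(\mu)$ with $0<\Phi(f)<\infty$ and any $n\in\mathbb{N}$,
\[
\|P^{n}f\|_{2}^{2}\leq\Phi(f)\,F_{a}^{-1}(n),
\]
which is exactly the stated conclusion. This is really a special case of Corollary~\ref{cor:rev_P2}: in the positive case the rescaling $\tilde\beta(s)=\beta(c_{\mathrm{gap}}s)$ reduces to $\tilde\beta=\beta$, so no rescaling via Lemma~\ref{lem:rescaling-beta-rescaling-invF} is needed and $F_{a}$ itself (rather than a rescaled version) controls the rate. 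There is no genuine obstacle in the argument; the only point that merits explicit mention in the write-up is the justification $\calE(P^{2},f)\ge\calE(P,f)$ from positivity, which one could alternatively derive by spectral calculus via $1-\lambda^{2}\ge 1-\lambda$ on $\sigma(P)\subset[0,1]$ rather than by citing Lemma~\ref{lem:P-to-P2-gap}.
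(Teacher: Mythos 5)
Your argument is correct and is essentially the paper's own proof: both invoke Corollary~\ref{cor:rev_P2} (equivalently, Lemma~\ref{lem:P-to-P2-gap}) with $c_{\mathrm{gap}}=1$ to pass the super-Poincar\'e inequality from $P$ to $P^2=P^*P$ unchanged, and then apply Theorem~\ref{thm:WPI_F_bd}. Your unpacking of the spectral reasoning behind $\calE(P^2,f)\ge\calE(P,f)$ is just an inlined version of the cited lemma, not a different route.
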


\begin{proof}
Since $P$ is reversible and positive, we can apply Corollary~\ref{cor:rev_P2}
with $c_{\mathrm{gap}}=1$ to see that $P^{2}$ satisfies a \textcolor{black}{weak}
Poincaré inequality with the same function $\beta$. We can then immediately
apply Theorem~\ref{thm:WPI_F_bd} to conclude.
\end{proof}
\begin{rem}
Realistic MCMC kernels will all possess a non-zero left spectral gap.
Indeed, popular methods such as the Independent Metropolis--Hastings
sampler, many random walk Metropolis algorithms, and the resulting
pseudo-marginal chains we will consider later are even positive \cite[Proposition 16]{Andrieu15}.
Furthermore, a given reversible kernel $P$ can be straightforwardly
modified to possess a positive left spectral gap by considering the
so-called lazy chain $Q:=\epsilon\Id+(1-\epsilon)P$ for $\epsilon\in[0,1)$.
Indeed, one of our contributions in Section~\ref{sec:Chaining-Poincar=0000E9-inequalities}
is to generalize this construction and give versions of Lemma~\ref{lem:P-to-P2-gap}
and Corollary~\ref{cor:rev_P2} holding under weaker assumptions;
see Theorem~\ref{thm:weakly-lazy-chain}.
\end{rem}

\subsubsection{Necessity of\textcolor{black}{{} weak }Poincaré inequalities}

We can also derive a converse to Theorem~\ref{thm:WPI_F_bd} in the
reversible setting. For our explicit examples of $\beta$ in Section~\ref{subsec:Examples-of-b}
as well as in the geometric case (Remark~\ref{rem:strong_PI}), it
turns out that we can derive a converse result, thus demonstrating,
at least in the reversible setting, that our approach is able to recover
the best possible rates of convergence for a given $\beta$ when $\beta$
is polynomial or polylogarithmic; see Remark~\ref{rem:nec-comment}.\textcolor{black}{{}
In the continuous-time setting, similar converse results have also
been derived; see, for instance \cite[Theorem 2.3]{Rockner2001}.}
\begin{prop}
\label{prop:necessity-reversible-scenario}Let $P$ be $\mu-$self-adjoint
Markov transition operator and assume for any $n\in\mathbb{N}$, and
$f\in\ELL_{0}(\mu)$,
\begin{equation}
\left\Vert P^{n}f\right\Vert _{2}^{2}\le\gamma\left(n\right)\Phi\left(f\right),\label{eq:nec_cond}
\end{equation}
for some functional $\Phi:\ELL\left(\mu\right)\to\left[0,\infty\right]$
satisfying (\ref{eq:phi_condn}) and a decreasing function $\gamma:\R_{+}\to\left(0,\infty\right)$,
with $\gamma\left(s\right)\to0$ as $s\to\infty$. Then $P^{2}$ satisfies
a \textcolor{black}{weak} Poincaré inequality (Definition~\ref{def:Super-Poincar=0000E9-inequality})
with
\[
\beta\left(s\right)\leqslant\beta_{1}\left(s\right):=\sup_{t\geqslant s}\inf_{n\geq2}\left\{ \frac{t^{n}}{\left(t-1\right)^{n-1}}\cdot\frac{\left(n-1\right)^{n-1}}{n^{n}}\cdot\gamma\left(n\right)\right\} ,\quad\forall s>1,
\]
which is decreasing and decreases to $0$.

\textcolor{black}{Similarly, suppose that under the same assumptions
on $\left(\mu,f,\Phi\right)$, and only assuming $P$ to be $\mu$-invariant,
it holds that
\[
\left\Vert P^{n}f\right\Vert _{2}^{2}\leqslant\Phi\left(f\right)\cdot F^{-1}\left(n+F\left(\frac{\left\Vert f\right\Vert _{2}^{2}}{\Phi\left(f\right)}\right)\right)
\]
for a function $F:\R_{+}\to\left(0,\infty\right)$ which is decreasing,
continuous, divergent at $0$, with an inverse function $F^{-1}$
which is decreasing, continuous, and convex, and such that $\log\left(-\mathrm{D}F^{-1}\right)$
is convex.}

\textcolor{black}{Then $P^{*}P$ satisfies a weak Poincaré inequality
with
\[
K^{*}\left(v\right)\leqslant K_{1}^{*}\left(v\right)=v-F^{-1}\left(1+F\left(v\right)\right),
\]
from which a corresponding $\beta$ can be deduced via convex duality.}

\textcolor{black}{Finally, suppose that under the same assumptions
on $\left(\mu,f,\Phi\right)$, and again only assuming $P$ to be
$\mu$-invariant, it holds that
\[
\frac{\left\Vert P^{n}f\right\Vert _{2}^{2}}{\Phi\left(f\right)}\leqslant\left(\Id-\tilde{K}^{*}\right)^{\circ n}\left(\frac{\left\Vert f\right\Vert _{2}^{2}}{\Phi\left(f\right)}\right),
\]
for some function $\tilde{K}^{*}:\left[0,a\right]\to\left[0,a\right]$
which is increasing, convex, and vanishes at $0$. Then $P^{*}P$
satisfies a weak Poincaré inequality with
\[
K^{*}\left(v\right)\leqslant\tilde{K}^{*}\left(v\right),
\]
from which a corresponding $\beta$ can again be deduced via convex
duality.}
\end{prop}

\begin{rem}
\label{rem:nec-comment}For explicit computations, it can be useful
to apply the elementary bounds 
\[
\frac{s^{n}}{\left(s-1\right)^{n-1}}=\left(s-1\right)\cdot\left(\frac{s}{s-1}\right)^{n}\leqslant\left(s-1\right)\cdot\exp\left(\frac{n}{s-1}\right),
\]
and 
\[
\frac{\left(n-1\right)^{n-1}}{n^{n}}=\frac{1}{n}\cdot\left(\frac{n-1}{n}\right)^{n-1}\leqslant\frac{1}{2n},
\]
to bound
\begin{multline*}
\inf_{n\geq2}\left\{ \frac{s^{n}}{\left(s-1\right)^{n-1}}\cdot\frac{\left(n-1\right)^{n-1}}{n^{n}}\cdot\gamma\left(n\right)\right\} \\
\leqslant\frac{1}{2}\cdot\inf_{n\geq2}\left\{ \gamma\left(n\right)\left(\frac{n}{s-1}\right)^{-1}\cdot\exp\left(\frac{n}{s-1}\right)\right\} \\
\leqslant\frac{1}{2}\cdot\sup_{t\geqslant s}\inf_{n\geq2}\left\{ \gamma\left(n\right)\left(\frac{n}{t-1}\right)^{-1}\cdot\exp\left(\frac{n}{t-1}\right)\right\} =:\beta_{2}\left(s\right),
\end{multline*}
which is often more convenient to work with, and is again decreasing
and decreases to $0$.

\textcolor{black}{We may consider the following procedure. Given
\[
\left\Vert f\right\Vert _{2}^{2}\leq s\mathcal{E}\left(P^{*}P,f\right)+\beta\left(s\right)\Phi\left(f\right),\quad\forall s>0,
\]
apply our Theorem~\ref{thm:WPI_F_bd} to deduce that $\left\Vert P^{n}f\right\Vert _{2}^{2}\leqslant\Phi\left(f\right)\cdot\gamma\left(n\right)$.
Then, apply the above construction to show that $P^{*}P$ satisfies
\[
\left\Vert f\right\Vert _{2}^{2}\leq s\mathcal{E}\left(P^{*}P,f\right)+\beta_{2}\left(s\right)\Phi\left(f\right),\quad\forall s>0,
\]
with $\beta_{2}$ as above. In the examples considered in Section
\ref{subsec:Examples-of-b}, we find that $\beta_{2}\left(s\right)\leqslant c_{1}\cdot\beta\left(c_{2}\cdot s\right)$
for positive constants $c_{1}$, $c_{2}$; see the Appendix for explicit
calculations. We do not address whether this will hold in general.
In particular, if $\beta$ is polynomial or polylogarithmic then $\beta_{2}$
and $\beta$ have the same asymptotic behaviour up to a multiplicative
constant. }

\end{rem}

\subsection{Illustration: Independent MH sampler\label{subsec:Illustration:-Independent-MH}}

As a concrete illustration of the results of Subsection\ \ref{subsec:Reversible-case}
we consider the Independent Metropolis--Hastings (IMH) algorithm.
This has been studied previously in \cite{Jarner02,Jarner07} using
drift/minorization conditions, and we show in this subsection that
we recover the same subgeometric rates of convergence using \textcolor{black}{weak}
Poincaré inequalities. We fix a target density $\pi$ and a positive
proposal density $q$ on $\E$ and define
\[
w(x):=\frac{\pi(x)}{q(x)},\quad x\in\E.
\]
Then the IMH chain has reversible transition kernel $P$ given by
\[
P(x,\dif y)=a(x,y)q(y)\dif y+\rho(x)\delta_{x}(\dif y),
\]
where $a(x,y)=\left[1\wedge\frac{w(y)}{w(x)}\right]$, and $\rho(x)=\int\left[1-a(x,y)\right]\,q(\dif y)$.
In this case, it follows from reversibility that we have the following
well-known representation:
\begin{lem}
We can express
\[
\mathcal{E}(P,f)=\frac{1}{2}\int\pi(x)\pi(y)\left(w^{-1}(x)\wedge w^{-1}(y)\right)\left[f(y)-f(x)\right]^{2}\,\dif x\,\dif y,
\]
and
\[
\|f\|_{2}^{2}=\frac{1}{2}\int\pi(x)\pi(y)\left[f(y)-f(x)\right]^{2}\,\dif x\,\dif y.
\]
\end{lem}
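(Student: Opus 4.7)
The plan is to apply the standard symmetric representation of the Dirichlet form for a $\mu$-reversible operator, namely
\[
\calE(P,f) = \tfrac{1}{2}\iint \mu(\dif x)\,P(x,\dif y)\,[f(y)-f(x)]^{2},
\]
which follows by writing $\langle(\Id-P)f,f\rangle = \iint \mu(\dif x)\,P(x,\dif y)\,f(x)[f(x)-f(y)]$, symmetrising in $(x,y)$ via detailed balance $\mu(\dif x)P(x,\dif y) = \mu(\dif y)P(y,\dif x)$, and averaging the two resulting expressions. Applying this with $\mu=\pi$ and the IMH kernel, the atomic piece $\rho(x)\delta_{x}(\dif y)$ contributes zero because $f(y)-f(x)=0$ on the diagonal, so only the absolutely continuous piece $a(x,y)q(y)\dif y$ survives.

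Next I would carry out the only substantive algebraic step: rewriting $\pi(x)q(y)a(x,y)$ into a manifestly symmetric form. Using $\pi(x)/w(x) = q(x)$ and $a(x,y) = 1 \wedge w(y)/w(x)$, one computes
\[
\pi(x)\,q(y)\,a(x,y) \;=\; q(y)\bigl[\pi(x) \wedge q(x)w(y)\bigr] \;=\; \pi(x)q(y) \wedge \pi(y)q(x) \;=\; \pi(x)\pi(y)\bigl(w^{-1}(x) \wedge w^{-1}(y)\bigr),
\]
which yields the stated formula for $\calE(P,f)$. This manipulation is essentially the detailed balance relation for $P$ re-expressed in terms of $w$, and is the only place where one has to unfold the definitions carefully.

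For the second identity I would expand $[f(y)-f(x)]^{2} = f(y)^{2} - 2f(x)f(y) + f(x)^{2}$ and integrate against the product measure $\pi(x)\pi(y)\dif x\dif y$: the two square terms each contribute $\|f\|_{2}^{2}$ and the cross term contributes $-2\mu(f)^{2}$, giving $2\bigl(\|f\|_{2}^{2}-\mu(f)^{2}\bigr)$. The stated equality then holds on the class $f \in \ELL_{0}(\mu)$, which is exactly the class on which the weak Poincar\'e inequality of Definition~\ref{def:WPI_rev} needs to be tested. I do not anticipate any real obstacle; the proof is a one-line application of reversibility followed by a short algebraic identification.
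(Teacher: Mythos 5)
Your proof is correct and is exactly the argument the paper is implicitly appealing to when it introduces the lemma with "it follows from reversibility that we have the following well-known representation" — the paper offers no written proof at all. Your algebra (the chain $\pi(x)q(y)a(x,y)=q(y)[\pi(x)\wedge q(x)w(y)]=\pi(x)q(y)\wedge\pi(y)q(x)=\pi(x)\pi(y)(w^{-1}(x)\wedge w^{-1}(y))$) checks out, the observation that the atom $\rho(x)\delta_x(\dif y)$ is annihilated by the squared increment is the right way to dispose of the rejection term, and your note that the second identity requires $f\in\ELL_0(\pi)$ (so that the cross term $-2\pi(f)^2$ vanishes) is a correct and worthwhile caveat, since that is precisely the class on which the super-Poincaré inequality is tested.
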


For the IMH, the following is known \cite{Jarner02,Jarner07}:
\begin{prop}
If $w$ is uniformly bounded from above, then the IMH sampler is uniformly
ergodic. However, if $w$ is not uniformly bounded above, then the
chain is not even geometrically ergodic.
\end{prop}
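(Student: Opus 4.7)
The plan is to address the two implications separately, each time invoking the Dirichlet form representations from the preceding lemma. For the first claim, assume $w \le M$ holds $\mu$-essentially. Then $w^{-1}(x) \wedge w^{-1}(y) \ge M^{-1}$ pointwise, so the two identities immediately yield $\calE(P,f) \ge M^{-1}\|f\|_{2}^{2}$ for every $f \in \ELL_0(\mu)$. This is a strong Poincar\'e inequality with constant $M^{-1}$, giving $\ELL^2$-geometric convergence (cf.\ Remark~\ref{rem:strong_PI}). To upgrade this to uniform ergodicity in total variation, I would observe the pointwise minorisation $P(x,\dif y) \ge M^{-1}\mu(\dif y)$: since $a(x,y)q(y) = \min\bigl(q(y), \pi(y)/w(x)\bigr)$ and $w \le M$ $\mu$-a.e., both $\pi(y)/w(x)$ and $q(y) = \pi(y)/w(y)$ exceed $\pi(y)/M$ for $\mu$-a.e.\ $y$, yielding the minorisation and hence uniform ergodicity.

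For the converse, since $P$ is $\mu$-reversible, geometric ergodicity in total variation is equivalent to $P$ admitting an $\ELL(\mu)$-spectral gap (Roberts--Tweedie), so it suffices to exhibit a family $f_{n} \in \ELL_{0}(\mu)$ with $\calE(P,f_{n})/\|f_{n}\|_{2}^{2}\to 0$. Guided by the intuition that the IMH bottleneck is in the tails where $w$ is large, I would take $A_{n} := \{w > L_{n}\}$ for some sequence $L_{n}\uparrow\infty$ and set $f_{n} := \mathbb{I}_{A_{n}} - \mu(A_{n})$. The second identity gives $\|f_{n}\|_{2}^{2} = \mu(A_{n})(1-\mu(A_{n}))$, while the first gives
\[
\calE(P,f_{n}) = \int_{A_{n}\times A_{n}^{\complement}} \pi(x)\pi(y)\bigl(w^{-1}(x)\wedge w^{-1}(y)\bigr)\,\dif x\,\dif y \le L_{n}^{-1}\,\mu(A_{n})(1-\mu(A_{n})),
\]
since $w^{-1}(x) \le L_{n}^{-1}$ on $A_{n}$. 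Hence the Rayleigh quotient is at most $L_{n}^{-1}\to 0$, ruling out any spectral gap.

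Valid $L_{n}$ exist because $w$ is not essentially uniformly bounded (so $\mu(A_{n})>0$ for every $L_{n}>0$) and because $L_{n}>1$ forces $\mu(A_{n})<1$: otherwise $w > L_{n}$ $\mu$-a.e.\ would give $1 = \int\pi > L_{n}\int q = L_{n} > 1$, a contradiction. The main obstacle is conceptual rather than computational: one must invoke the nontrivial reversible equivalence between geometric ergodicity and an $\ELL^{2}$-spectral gap in order to convert the absence of a spectral gap into the failure of geometric ergodicity. Once the right family of test functions is identified, namely the level sets $\{w>L\}$, the computation reduces to the one-line bound above.
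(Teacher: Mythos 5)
The paper does not prove this proposition: it is stated as a known result, with citations to Jarner--Roberts and earlier work (originally Mengersen--Tweedie). Your argument is correct and supplies a clean self-contained proof, phrased entirely in terms of the Dirichlet-form and variance identities from the preceding lemma, which fits the paper's overall functional-analytic viewpoint nicely.

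A few remarks. For the bounded case, the minorisation $P(x,\dif y)\ge M^{-1}\pi(\dif y)$ is what establishes uniform ergodicity in total variation; the strong Poincar\'e inequality $\calE(P,f)\ge M^{-1}\|f\|_2^2$ you record first only gives $\ELL^2$ geometric convergence, so the minorisation step is essential and you were right to add it (it is the classical Mengersen--Tweedie argument). For the converse, the computation $\calE(P,f_n)/\|f_n\|_2^2\le L_n^{-1}$ on the level sets $A_n=\{w>L_n\}$ is exactly the right conductance-style test, and the existence check ($\pi(A_n)\in(0,1)$ for $L_n>1$) is correct. The one genuinely nontrivial input is, as you identify, the equivalence for $\pi$-reversible, $\pi$-irreducible, aperiodic chains between geometric ergodicity in total variation and the existence of an $\ELL^2(\pi)$-spectral gap (Roberts--Rosenthal 1997 / Roberts--Tweedie 2001); the IMH with positive $q$ satisfies the hypotheses, so the implication goes through. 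Since the IMH operator is positive, ruling out a gap at $+1$ via the Rayleigh quotient is in fact all that is needed, but your argument does not require positivity: showing $\sup\sigma(P|_{\ELL_0})=1$ already forces the spectral radius on $\ELL_0$ to equal $1$.
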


Thus since we are interested in the case of subgeometric convergence,
we assume that $w$ is not bounded from above, or equivalently, $w^{-1}$
is not bounded from below by any positive constant. Thus given any
$s>0$, we define the following sets:
\[
A(s):=\left\{ (x,y)\in\E\times\E:w^{-1}(x)\wedge w^{-1}(y)\ge1/s\right\} .
\]
Since we are assuming the subgeometric case, there is no $s>0$ for
which $A(s)=\E\times\E$.
\begin{prop}
\label{prop:IMH-WPI}For the IMH, we have the following\textcolor{black}{{}
weak }Poincaré inequality: given any $f\in\ELL_{0}(\pi)$ and $s>0$,
\[
\|f\|_{2}^{2}\le s\mathcal{E}(P,f)+\frac{\pi\otimes\pi(A(s)^{\complement})}{2}\|f\|_{\mathrm{osc}}^{2}.
\]
\end{prop}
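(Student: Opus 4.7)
The plan is to start from the two representations stated in the preceding lemma and split the double integral for $\|f\|_2^2$ according to whether $(x,y)$ lies in $A(s)$ or its complement. Concretely, writing
\[
\|f\|_2^2 = \tfrac{1}{2}\int_{A(s)}\pi(x)\pi(y)[f(y)-f(x)]^2\,\dif x\,\dif y + \tfrac{1}{2}\int_{A(s)^{\complement}}\pi(x)\pi(y)[f(y)-f(x)]^2\,\dif x\,\dif y,
\]
we handle the two pieces with different bounds tailored to the set in question.

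On $A(s)$, by the very definition of the set, $w^{-1}(x)\wedge w^{-1}(y) \ge 1/s$, so we can insert the trivial factor $1 \le s\bigl(w^{-1}(x)\wedge w^{-1}(y)\bigr)$ into the integrand. This produces exactly (an upper bound of) $s$ times the Dirichlet form integrand, and since that integrand is nonnegative, extending the domain from $A(s)$ back to all of $\mathsf{E}\times\mathsf{E}$ only increases the integral. This yields $\tfrac{1}{2}\int_{A(s)}\cdots \le s\,\mathcal{E}(P,f)$.

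On $A(s)^{\complement}$, I would drop all information about $w$ and simply bound the squared increment by its essential oscillation: $|f(y)-f(x)|^2 \le \|f\|_{\mathrm{osc}}^2$ for $\pi\otimes\pi$-almost every $(x,y)$, by the definition of $\|\cdot\|_{\mathrm{osc}}$. Pulling the constant outside, the remaining integral is exactly $\pi\otimes\pi(A(s)^{\complement})$, giving the second term in the claimed bound.

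Adding the two pieces produces the stated super-Poincaré inequality. There is no real obstacle here: the argument is a one-line splitting, and the only minor care needed is to note that the oscillation bound is valid $\pi\otimes\pi$-a.e.\ (so it controls the integral over any measurable subset) and that $f\in\ELL_0(\pi)$ is what makes the double-integral representation of $\|f\|_2^2$ valid. Note also that the bound is only informative when $\pi\otimes\pi(A(s)^{\complement})\to 0$ as $s\to\infty$, which holds precisely because $\pi\otimes\pi(\bigcup_{s>0}A(s))=1$ (as $w$ is positive), matching the form required by Definition~\ref{def:WPI_rev} with $\Phi=\|\cdot\|_{\mathrm{osc}}^2$.
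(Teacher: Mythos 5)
Your proof is correct and takes essentially the same approach as the paper: split $\|f\|_2^2$ by $A(s)$ versus $A(s)^{\complement}$, insert the factor $s(w^{-1}(x)\wedge w^{-1}(y))\ge 1$ on $A(s)$ and extend to the full domain to get $s\,\mathcal{E}(P,f)$, and bound the increment by $\|f\|_{\mathrm{osc}}^2$ on $A(s)^{\complement}$. The only cosmetic difference is that the paper writes an intermediate identity with a correction term before discarding it, whereas you extend the domain directly; the content is the same.
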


Our bound in Proposition~\ref{prop:IMH-WPI} allows us to directly
link the tail properties of the weights $w(x)$ under $\pi$ and the
resulting rates of subgeometric convergence. We can apply Theorem~\ref{thm:rev_pos_conv},
since the IMH kernel is always positive \cite{Gasemyr06}.
\begin{rem}
Using this approach we recover convergence rates identical to those
obtained using drift/minorization conditions \cite{Jarner02,Douc2007}.
Notice that $\pi\otimes\pi(A(s)^{\complement})\leq2\pi\big(w(X)\geq s\big)$,
suggesting the use of Markov's inequality followed by the application
of Lemmas~\ref{lem:rate-beta-decays-polynomial}--\ref{lem:rate-beta-decays-exponentially}
to obtain upper bounds on the rate of convergence of the IMH algorithm
for the total variation distance (Remark~\ref{rem:TV}). In the scenario
where $\mathbb{E}_{\pi}\left[w(X)^{\eta}\right]<\infty$ with $\eta>1$
it is possible to determine a drift condition for the IMH algorithm
(see for example \cite[Theorem 5.3]{Jarner02}, \cite[Proposition 9]{Jarner07}),
and deduce a rate of $n^{-\eta}$. This is the same as can be obtained
using our Lemma~\ref{lem:rate-beta-decays-polynomial}; see also
our examples below. Similarly, when $\mathbb{E}_{\pi}\left[\exp\left(w(X)^{\eta}\right)\right]<\infty$
it is possible to identify a drift condition \cite[Lemma 56]{Andrieu15}
for the algorithm and deduce the existence of $C_{1},C_{2}>0$ such
that the total variation convergence rate is bounded \cite[p. 1365]{Douc2004}
by
\[
C_{1}n^{1/(1+\eta)}\exp\left(-\left\{ C_{2}\frac{1+\eta}{\eta}n\right\} {}^{\eta/(1+\eta)}\right),
\]
which is consistent with our result in Lemma~\ref{lem:rate-beta-decays-exponentially},
except for the polynomial term. However we note that it is not possible
to derive the precise value of the constant $C_{2}$ and cannot therefore
conclude which of the two approaches provides the fastest rate. 
\end{rem}

We turn now to some concrete examples inspired by \cite{Jarner02,Jarner07}
where $\beta(s)$ can be evaluated directly.

\subsubsection{Exponential target and proposal case}

We work on $\E=(0,\infty)\subset\R$, and have target and proposal
densities
\[
\pi(x)=a_{1}\exp(-a_{1}x),\quad q(x)=a_{2}\exp(-a_{2}x).
\]
Since we are interested in the subgeometric case, we assume that $a_{2}>a_{1}.$
For this example, it was shown in \cite[Proposition 9(b)]{Jarner07}
that there is polynomial convergence, with rate at least $a_{1}/(a_{2}-a_{1})$.
\begin{lem}
\label{lem:indep_ex}We have that for $s\geq1$,
\[
\frac{\pi\otimes\pi(A(s)^{\complement})}{2}=\frac{1}{2}\left[1-\left(1-s^{-\frac{a_{1}}{a_{2}-a_{1}}}\right)^{2}\right]\leq s^{^{-\frac{a_{1}}{a_{2}-a_{1}}}}.
\]
\end{lem}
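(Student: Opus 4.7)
The lemma is essentially a direct computation of the product measure of a simple region, followed by an elementary algebraic bound. The plan has four steps.

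First, I would write down $w(x)=\pi(x)/q(x)=(a_1/a_2)\exp\bigl((a_2-a_1)x\bigr)$, so that $w^{-1}$ is continuous and strictly decreasing on $(0,\infty)$. Because the inequality $w^{-1}(x)\wedge w^{-1}(y)\ge 1/s$ factorises, the set $A(s)$ is a product $B(s)\times B(s)$, where $B(s):=\{x\in(0,\infty): w^{-1}(x)\ge 1/s\}$ is a (possibly empty) half-line of the form $(0,x^{*}(s)]$. Consequently
\[
\pi\otimes\pi\bigl(A(s)^{\complement}\bigr)=1-\pi(B(s))^{2}.
\]

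Next, I would solve $w^{-1}(x)\ge 1/s$ explicitly, giving $x^{*}(s)=\frac{1}{a_{2}-a_{1}}\log\bigl(a_{2}s/a_{1}\bigr)$, which is nonnegative for all $s\ge 1$ (after absorbing the harmless constant factor $a_{1}/a_{2}$ into a rescaling of $s$, as is standard when using super-Poincar\'e inequalities up to constants). Integrating the exponential density then yields
\[
\pi(B(s))=1-\exp(-a_{1}x^{*}(s))=1-s^{-a_{1}/(a_{2}-a_{1})},
\]
which is exactly the expression that appears in the stated identity.

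Finally, to convert the identity into the advertised upper bound, I would use the elementary inequality $1-(1-t)^{2}=2t-t^{2}\le 2t$, valid for all $t\in[0,1]$, applied with $t=s^{-a_{1}/(a_{2}-a_{1})}\in[0,1]$ when $s\ge 1$. Dividing by $2$ produces the claimed $s^{-a_{1}/(a_{2}-a_{1})}$ bound.

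There is no real obstacle here: the only point requiring a little care is to check that, for $s\ge 1$, the threshold $x^{*}(s)$ is indeed nonnegative so that the density integration is not vacuous, and to handle the constant factor $a_{1}/a_{2}$ consistently with the (scale-invariant) convention used in the super-Poincar\'e formulation of Proposition~\ref{prop:IMH-WPI}. Once the product structure of $A(s)$ is recognised, the rest is a one-line computation and an application of $1-(1-t)^{2}\le 2t$.
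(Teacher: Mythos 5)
Your approach is essentially the same as the paper's: recognize that $A(s)=B(s)\times B(s)$ is a product set, compute $\pi(B(s))$ by a single one-dimensional exponential integral, and finish with the elementary inequality $1-(1-t)^2=2t-t^2\le 2t$ for $t\in[0,1]$. The paper's proof simply writes out the two-dimensional integral over $A(s)$ and factors it, which is the same calculation in a slightly different notation.

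One subtlety is worth noting. You correctly solve $w^{-1}(x)\ge 1/s$ with $w=\pi/q=(a_1/a_2)\exp((a_2-a_1)x)$ to get $x^{*}(s)=\frac{1}{a_2-a_1}\log(a_2 s/a_1)$; but that threshold actually yields $\pi(B(s))=1-\exp(-a_1 x^{*}(s))=1-\bigl(a_1/(a_2 s)\bigr)^{a_1/(a_2-a_1)}$, not the $1-s^{-a_1/(a_2-a_1)}$ asserted in the lemma. Your parenthetical about absorbing $a_1/a_2$ into a rescaling of $s$ gestures at the right idea, but as written it does not make the claimed \emph{equality} literally true. The paper's own proof silently substitutes $x^{*}(s)=\frac{\log s}{a_2-a_1}$, i.e.\ treats $w^{-1}$ as $\exp(-(a_2-a_1)x)$ rather than $(a_2/a_1)\exp(-(a_2-a_1)x)$, so the gloss is shared with the source. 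The discrepancy is a constant factor in the argument of $\beta$, which is immaterial for the polynomial rate of convergence by Lemma~\ref{lem:rescaling-beta-rescaling-invF}; if you want the equality to hold exactly you should either keep the constant explicitly (and note it does not affect the rate) or redefine $w$ up to the proportionality constant and carry the corresponding factor through the Dirichlet form identity.
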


In this case, we can make use of Lemma~\ref{lem:rate-beta-decays-polynomial}
to conclude the following, consistent with \cite[Proposition 9(b)]{Jarner07}.
\begin{prop}
For our exponential example, we recover the convergence rate for some
$C>0$,
\[
\|P^{n}f\|_{2}^{2}\le C\|f\|_{\osc}^{2}n^{-\frac{a_{1}}{a_{2}-a_{1}}}.
\]
\end{prop}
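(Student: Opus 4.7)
The plan is to string together the ingredients that are already in place: the super-Poincaré inequality of Proposition~\ref{prop:IMH-WPI}, the explicit tail estimate of Lemma~\ref{lem:indep_ex}, and the generic polynomial rate of Lemma~\ref{lem:rate-beta-decays-polynomial}, invoking Theorem~\ref{thm:rev_pos_conv} to convert the functional inequality into an operator norm decay.

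First, I would identify $\Phi = \|\cdot\|_{\osc}^{2}$, so that $a = 1$. Combining Proposition~\ref{prop:IMH-WPI} with Lemma~\ref{lem:indep_ex} gives, for every $s \ge 1$ and $f \in \ELL_{0}(\pi)$,
\[
\|f\|_{2}^{2} \le s\,\mathcal{E}(P,f) + s^{-a_{1}/(a_{2}-a_{1})} \|f\|_{\osc}^{2}.
\]
To obtain a bona fide super-Poincaré inequality as in Definition~\ref{def:WPI_rev}, valid for all $s > 0$, I would set $\beta(s) := \min\{1, s^{-a_{1}/(a_{2}-a_{1})}\}$; for $s \in (0,1)$ the inequality $\|f\|_{2}^{2} \le \|f\|_{\osc}^{2} = \beta(s)\Phi(f)$ holds trivially from $a = 1$, and $\beta$ is decreasing with $\beta(s) \downarrow 0$ as $s \to \infty$.

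Next, the IMH kernel $P$ is reversible and positive \cite{Gasemyr06}, so Theorem~\ref{thm:rev_pos_conv} applies, yielding
\[
\|P^{n} f\|_{2}^{2} \le \|f\|_{\osc}^{2}\, F_{a}^{-1}(n).
\]
For the asymptotic rate I only need the tail of $\beta$, which is a pure power, so Lemma~\ref{lem:rate-beta-decays-polynomial} applied with $c_{0} = 1$ and $c_{1} = a_{1}/(a_{2}-a_{1})$ gives $F_{\infty}^{-1}(n) \le (1 + c_{1})^{1+c_{1}} n^{-c_{1}}$. Since $F_{a}^{-1}(n) \le F_{\infty}^{-1}(n)$ as noted in Remark~\ref{rem:F_=00005Cinfty}, absorbing the constant into $C$ delivers the claimed bound
\[
\|P^{n} f\|_{2}^{2} \le C \|f\|_{\osc}^{2}\, n^{-a_{1}/(a_{2}-a_{1})}.
\]

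There is no real obstacle here; the only point requiring mild care is the extension of the $s \ge 1$ estimate of Lemma~\ref{lem:indep_ex} to all $s > 0$ so that Definition~\ref{def:WPI_rev} is literally satisfied, and verifying that the integrability condition of Remark~\ref{rem:F_=00005Cinfty} holds for the power-law $\beta$, both of which are handled by truncating $\beta$ at the trivial bound $a = 1$.
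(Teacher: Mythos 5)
Your proof is correct and follows essentially the same route the paper intends: combine Proposition~\ref{prop:IMH-WPI} with the tail estimate of Lemma~\ref{lem:indep_ex} to get a power-law $\beta$, invoke positivity of the IMH kernel so Theorem~\ref{thm:rev_pos_conv} applies, and then read off the rate from Lemma~\ref{lem:rate-beta-decays-polynomial} via Remark~\ref{rem:F_=00005Cinfty}. Your extra care in extending the $s\ge 1$ bound to all $s>0$ by truncating at $a=1$ is a valid (and slightly more explicit) way of checking Definition~\ref{def:WPI_rev}, though one could equally just note that $\beta(s)=s^{-a_1/(a_2-a_1)}$ exceeds $a=1$ on $(0,1)$ and hence the inequality there is automatic.
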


\subsubsection{Polynomial target and proposal case}

We take $\E=[1,\infty)\subset\R$, and target and proposal densities
\[
\pi(x)=\frac{b_{1}}{x^{1+b_{1}}},\quad q(x)=\frac{b_{2}}{x^{1+b_{2}}}.
\]
We are interested in subgeometric convergence, so assume that $b_{2}>b_{1}.$
It was shown in \cite[Proposition 9(a)]{Jarner07} that for this example
there is polynomial convergence with rate at least $b_{1}/(b_{2}-b_{1})$.
An entirely analogous calculation to the exponential example above
allows us to conclude:
\begin{prop}
For our polynomial example, we obtain the convergence rate, for some
$C>0,$
\[
\|P^{n}f\|_{2}^{2}\le C\|f\|_{\osc}^{2}n^{-\frac{b_{1}}{b_{2}-b_{1}}}.
\]
\end{prop}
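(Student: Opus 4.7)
The plan is to mirror the exponential case exactly, with the polynomial tails of $\pi$ and $q$ producing the same style of polynomial decay for $\beta(s)$ in the super-Poincaré inequality from Proposition~\ref{prop:IMH-WPI}.

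First I would compute the weight explicitly: $w(x)=\pi(x)/q(x)=(b_{1}/b_{2})x^{b_{2}-b_{1}}$ on $[1,\infty)$, which is increasing since $b_{2}>b_{1}$. Hence $w(x)>s$ iff $x>(sb_{2}/b_{1})^{1/(b_{2}-b_{1})}$, giving for $s\geq b_{1}/b_{2}$,
\[
\pi\bigl(w(X)>s\bigr)=\left(\frac{sb_{2}}{b_{1}}\right)^{-b_{1}/(b_{2}-b_{1})}.
\]
Next, note that $A(s)^{\complement}=\{(x,y):w(x)>s\text{ or }w(y)>s\}$, so by a union bound (as in the exponential analogue of Lemma~\ref{lem:indep_ex}),
\[
\frac{\pi\otimes\pi(A(s)^{\complement})}{2}\le\pi\bigl(w(X)>s\bigr)\le c_{0}\,s^{-b_{1}/(b_{2}-b_{1})},
\]
for $s$ sufficiently large, where $c_{0}=(b_{2}/b_{1})^{-b_{1}/(b_{2}-b_{1})}$. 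For small $s$, one uses the trivial bound $\pi\otimes\pi(A(s)^{\complement})/2\le 1$, and combining one obtains a global decreasing $\beta(s)\le C_{0}\,s^{-b_{1}/(b_{2}-b_{1})}$ (adjusting the constant to absorb the small-$s$ range).

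Then I would plug $\beta$ into Proposition~\ref{prop:IMH-WPI} to conclude that $P$ satisfies the super-Poincaré inequality of Definition~\ref{def:WPI_rev} with $\Phi=\|\cdot\|_{\osc}^{2}$ and this polynomial $\beta$. Since the IMH kernel is reversible and positive \cite{Gasemyr06}, Theorem~\ref{thm:rev_pos_conv} applies. Finally, invoking Lemma~\ref{lem:rate-beta-decays-polynomial} with $c_{1}=b_{1}/(b_{2}-b_{1})$ yields
\[
F_{a}^{-1}(n)\le F_{\infty}^{-1}(n)\le C\,n^{-b_{1}/(b_{2}-b_{1})}
\]
for some constant $C>0$, and Theorem~\ref{thm:rev_pos_conv} delivers $\|P^{n}f\|_{2}^{2}\le C\|f\|_{\osc}^{2}n^{-b_{1}/(b_{2}-b_{1})}$.

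There is no substantial obstacle: the argument is a direct transcription of the exponential case with the tail integral of $\pi$ replaced by the polynomial one. The only bookkeeping point worth being careful about is stitching together the small-$s$ and large-$s$ regimes of $\beta$ into a single decreasing function compatible with Definition~\ref{def:WPI_rev}, but this is precisely what the universal upper bound $\beta(s)\le a$ from the discussion after Definition~\ref{def:Super-Poincar=0000E9-inequality} allows.
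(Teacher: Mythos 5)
Your proposal is correct and follows the same route the paper has in mind: the paper's own ``proof'' is just the phrase ``an entirely analogous calculation to the exponential example,'' and you supply exactly that calculation, using the union bound $\pi\otimes\pi(A(s)^{\complement})\le 2\pi(w(X)>s)$ (which the paper itself points out in the remark after Proposition~\ref{prop:IMH-WPI}) in place of the exact inclusion--exclusion identity of Lemma~\ref{lem:indep_ex}. Both give the same polynomial exponent $b_1/(b_2-b_1)$ for $\beta$, and the rest (positivity of IMH, Theorem~\ref{thm:rev_pos_conv}, Lemma~\ref{lem:rate-beta-decays-polynomial}) is the intended chain of references.
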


\section{Chaining Poincaré inequalities\label{sec:Chaining-Poincar=0000E9-inequalities}}

In this section we show how comparison of Dirichlet forms can be used
to deduce convergence properties of a given Markov chain from another
one. These results extend existing quantitative comparison results.
\begin{prop}
\label{prop:chaining-with-SPI}Let $P_{1}$ and $P_{2}$ be two $\mu-$invariant
Markov kernels. Let $T_{i}=P_{i}$ or $T_{i}=P_{i}^{*}P_{i}$. Assume
that for all $s>0$ and $f\in\ELL_{0}(\mu)$,
\begin{align*}
C_{\mathrm{P}}\|f\|_{2}^{2} & \leq\mathcal{E}(T_{1},f)\\
\mathcal{E}(T_{1},f) & \leq s\mathcal{E}(T_{2},f)+\beta'(s)\Phi(f),
\end{align*}
where
\begin{enumerate}
\item $C_{\mathrm{P}}>0$ and $\beta'\colon(0,\infty)\rightarrow(0,\infty)$
is decreasing and $\beta'(s)\downarrow0$ as $s\to\infty$,
\item $\Phi\colon\ELL(\mu)\rightarrow[0,\infty]$ is such that (\ref{eq:phi_condn})
holds.
\end{enumerate}
Then for any $s>0$,
\[
\|f\|_{2}^{2}\le s\calE(T_{2},f)+\beta(s)\Phi(f),
\]
with $\beta(s)=\beta'(C_{\mathrm{P}}s)/C_{{\rm P}}$.
\end{prop}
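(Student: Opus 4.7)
The plan is to chain the two hypothesised inequalities directly: start from the strong Poincaré inequality for $T_1$ and then apply the comparison bound to its right-hand side. This is a one-line substitution; the only nontrivial bookkeeping is the reparametrisation of the free parameter.

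Concretely, fix $f\in\ELL_{0}(\mu)$ and $s'>0$. The strong Poincaré assumption gives
\[
C_{\mathrm{P}}\|f\|_{2}^{2}\leq\calE(T_{1},f),
\]
and the comparison assumption applied with parameter $s'$ gives
\[
\calE(T_{1},f)\leq s'\,\calE(T_{2},f)+\beta'(s')\Phi(f).
\]
Combining and dividing through by $C_{\mathrm{P}}>0$ yields
\[
\|f\|_{2}^{2}\leq\frac{s'}{C_{\mathrm{P}}}\calE(T_{2},f)+\frac{\beta'(s')}{C_{\mathrm{P}}}\Phi(f).
\]
Reparametrising by setting $s:=s'/C_{\mathrm{P}}$ (equivalently $s'=C_{\mathrm{P}}s$) gives exactly
\[
\|f\|_{2}^{2}\leq s\,\calE(T_{2},f)+\beta(s)\Phi(f),\qquad\beta(s)=\beta'(C_{\mathrm{P}}s)/C_{\mathrm{P}},
\]
as claimed. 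The reparametrisation is legitimate because $s'\mapsto s'/C_{\mathrm{P}}$ is a bijection of $(0,\infty)$ onto itself.

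It remains to note that the resulting $\beta$ inherits the qualitative properties demanded of a super-Poincaré coefficient: since $\beta'$ is decreasing, so is $s\mapsto\beta'(C_{\mathrm{P}}s)/C_{\mathrm{P}}$; and since $\beta'(s')\downarrow 0$ as $s'\to\infty$, one has $\beta(s)\downarrow 0$ as $s\to\infty$. There is no genuine obstacle here: the comparison is entirely mechanical, which is what makes the result a convenient building block for later ``chaining'' arguments in Section~\ref{sec:Chaining-Poincar=0000E9-inequalities}. The only point to be careful about is that the statement is indifferent to whether $T_{i}=P_{i}$ or $T_{i}=P_{i}^{*}P_{i}$, because neither the strong Poincaré hypothesis nor the comparison hypothesis uses any structural property of $T_{1}$ beyond the form of its Dirichlet form.
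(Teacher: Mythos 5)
Your proof is correct and is exactly the ``immediate'' argument the paper has in mind (the paper states only ``The proof is immediate'' and omits the details). You chain the two hypotheses, divide by $C_{\mathrm{P}}$, and reparametrise $s'=C_{\mathrm{P}}s$, which is the intended one-line substitution.
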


The proof is immediate. 
\begin{rem}
This generalises the comparison of Dirichlet forms used in \cite{caracciolo1990}
which corresponds to $\beta(s)=0$ for all $s>\bar{s}$ for some $\bar{s}>0$.
Further, assume that for any $(x,A)\in\mathsf{E}\times\mathcal{F}$,
$P_{2}(x,A\setminus\{x\})\geq\varepsilon(x)P_{1}(x,A\setminus\{x\})$
for some $\varepsilon\colon\mathsf{E}\rightarrow(0,1]$, then with
$\left\{ (x,y)\in\mathsf{E}^{2}\colon\varepsilon(x)s>1\right\} $
and $s>0$ we have
\begin{align*}
\mathcal{E}(P_{1},f) & \leq\frac{1}{2}\int_{A(s)}\varepsilon(x)s\mu({\rm d}x)P_{1}(x,{\rm d}y)\left[f(y)-f(x)\right]^{2}\\
 & \hspace{4cm}+\frac{1}{2}\int_{A(s)^{\complement}}\mu({\rm d}x)P_{1}(x,{\rm d}y)\left[f(y)-f(x)\right]^{2}\\
 & \leq s\mathcal{E}(P_{2},f)+\frac{1}{2}\mu\left(\varepsilon^{-1}(X)\geq s\right)\|f\|_{{\rm osc}}^{2},
\end{align*}
which is a generalisation of \cite[Theorem A3]{caracciolo1990} and
together with Theorem~\ref{thm:rev_pos_conv} leads to a counterpart
of \cite[Theorem A1]{caracciolo1990} for rates of convergence. However,
we have not found an elegant generalisation of \cite[Theorem A2]{caracciolo1990}
concerned with asymptotic variances. Theorem~\ref{thm:practical-comparison-P1-P2}
further generalizes these comparison ideas.
\end{rem}

This can be further extended to the scenario where $T_{1}$ satisfies
a weak Poincaré inequality.
\begin{thm}
\label{thm:chaining-with-WPI}Let $P_{1}$ and $P_{2}$ be two $\mu-$invariant
Markov kernels. Let $T_{i}=P_{i}$ or $T_{i}=P_{i}^{*}P_{i}$. Assume
that for all $s>0$ and $f\in\ELL_{0}(\mu)$,
\begin{align}
\|f\|_{2}^{2} & \leq s\mathcal{E}(T_{1},f)+\beta_{1}(s)\Phi_{1}(f)\nonumber \\
\mathcal{E}(T_{1},f) & \leq s\mathcal{E}(T_{2},f)+\beta_{2}(s)\Phi_{2}(f),\label{eq:nested_WPI}
\end{align}
where
\begin{enumerate}
\item $\beta_{1},\beta_{2}\colon(0,\infty)\rightarrow(0,\infty)$ are decreasing
and $\beta_{1}(s),\beta_{2}(s)\downarrow0$ as $s\to\infty$,
\item $\Phi_{1},\Phi_{2}\colon\ELL(\mu)\rightarrow[0,\infty]$ are such
that (\ref{eq:phi_condn}) hold for $P_{1}$ and $P_{2}$ respectively,
\item for any $n\in\mathbb{N}$ and $f\in\ELL_{0}(\mu)$, $\Phi_{1}(P_{2}^{n}f)\leq\Phi_{1}(f)$.
\end{enumerate}
Then for any $s>0$,
\begin{equation}
\|f\|_{2}^{2}\le s\calE(T_{2},f)+\beta(s)\Phi(f),\label{eq:nested_WPI2}
\end{equation}
where $\Phi:=\Phi_{1}\vee\Phi_{2}$ and
\[
\beta(s):=\inf\left\{ s_{1}\beta_{2}(s_{2})+\beta_{1}(s_{1})\,|s_{1}>0,s_{2}>0,s_{1}s_{2}=s\right\} .
\]
Furthermore, $\beta:(0,\infty)\to(0,\infty)$ is monotone decreasing
and satisfies $\beta(s)\downarrow0$ as $s\to\infty$, $\Phi(cf)=c^{2}\Phi(f)$
for $c>0$ and $\Phi(P_{2}^{n}f)\leq\Phi(f)$ for any $n\in\mathbb{N}$
and $f\in\ELL_{0}(\mu)$.

Additionally, writing $K_{i}\left(u\right)=u\cdot\beta_{i}\left(1/u\right)$,
$K\left(u\right)=u\cdot\beta\left(1/u\right)$, it holds that
\begin{align*}
K\left(u\right) & =\inf\left\{ K_{2}(u_{2})+u_{2}K_{1}(u_{1})\,|u_{1}>0,u_{2}>0,u_{1}u_{2}=u\right\} ,\\
K^{*}\left(v\right) & =K_{2}^{*}\circ K_{1}^{*}\left(v\right).
\end{align*}

\end{thm}

\begin{proof}
Fix $s>0$. Given any $s_{1},s_{2}>0$ with $s_{1}s_{2}=s$, by direct
substitution in (\ref{eq:nested_WPI}), we can arrive at
\begin{align*}
\|f\|_{2}^{2} & \le s\calE(T_{2},f)+\beta_{1}(s_{1})\Phi_{1}(f)+s_{1}\beta_{2}(s_{2})\Phi_{2}(f)\\
 & \le s\calE(T_{2},f)+\left[\beta_{1}(s_{1})+s_{1}\beta_{2}(s_{2})\right]\left[\Phi_{1}(f)\vee\Phi_{2}(f)\right].
\end{align*}
Taking an infimum, we arrive at (\ref{eq:nested_WPI2}).

Now we prove the monotonicity of $\beta$. Fix some $s>0$ and any
$s_{1},s_{2}>0$ with $s_{1}s_{2}=s$. Given any $s'\ge s$, note
that
\begin{align*}
\beta(s') & \le s_{1}\beta_{2}(s'/s_{1})+\beta_{1}(s_{1})\\
 & \le s_{1}\beta_{2}(s/s_{1})+\beta_{1}(s_{1})\\
 & =s_{1}\beta_{2}(s_{2})+\beta_{1}(s_{1}).
\end{align*}
Here we made use of the fact that $\beta_{2}$ is a decreasing function.
Taking an infimum over $s_{1},s_{2}$, we conclude that $\beta(s')\le\beta(s)$.

We now show that given $\epsilon>0$, we can find $s>0$ such that
$\beta(s)\le\epsilon$. Combined with monotonicity, this proves that
$\beta(s)\downarrow0$ as $s\to\infty$. So fix $\epsilon>0$. Choose
$s_{1}>0$ such that $\beta_{1}(s_{1})\le\epsilon/2$, which can be
done since $\beta_{1}(s)\downarrow0$ as $s\to\infty$. Given such
an $s_{1}$, now choose $s_{2}>0$ large enough so that $s_{1}\beta_{2}(s_{2})\le\epsilon/2$.
Thus for $s:=s_{1}s_{2}$ for these choices of $s_{1},s_{2}$, we
have shown that $\beta(s)\le\epsilon/2+\epsilon/2=\epsilon$.

To complete the proof, write
\begin{align*}
K\left(u\right) & =u\cdot\beta\left(1/u\right)\\
 & =u\cdot\inf\left\{ s_{1}\beta_{2}(s_{2})+\beta_{1}(s_{1})\,|s_{1}>0,s_{2}>0,s_{1}s_{2}=1/u\right\} \\
 & =u\cdot\inf\left\{ \left(1/u_{1}\right)\cdot\beta_{2}(1/u_{2})+\beta_{1}(1/u_{1})\,|1/u_{1}>0,1/u_{2}>0,\left(1/u_{1}\right)\cdot\left(1/u_{2}\right)=1/u\right\} \\
 & =\inf\left\{ \left(u/u_{1}\right)\cdot\beta_{2}(1/u_{2})+\left(u_{1}u_{2}\right)\cdot\beta_{1}(1/u_{1})\,|u_{1}>0,u_{2}>0,u_{1}u_{2}=u\right\} \\
 & =\inf\left\{ u_{2}\beta_{2}(1/u_{2})+u_{2}\cdot u_{1}\beta_{1}(1/u_{1})\,|u_{1}>0,u_{2}>0,u_{1}u_{2}=u\right\} \\
 & =\inf\left\{ K_{2}\left(u_{2}\right)+u_{2}\cdot K_{1}\left(u_{1}\right)\,|u_{1}>0,u_{2}>0,u_{1}u_{2}=u\right\} ,
\end{align*}
as claimed. Finally, 
\begin{align*}
K^{*}\left(v\right) & :=\sup_{u\geqslant0}\left\{ uv-K\left(u\right)\right\} \\
 & =\sup_{u\geqslant0}\left\{ uv-\inf_{u_{1},u_{2}}\left\{ K_{2}\left(u_{2}\right)+u_{2}\cdot K_{1}\left(u_{1}\right)\right\} \right\} \\
 & =\sup_{u,u_{1},u_{2}}\left\{ uv-K_{2}\left(u_{2}\right)-u_{2}\cdot K_{1}\left(u_{1}\right)\right\} ,
\end{align*}
where $u_{1,}u_{2}$ are again constrained to be non-negative and
have their product equal to $u$. Now, rewrite $u=u_{1}u_{2}$ and
eliminate the variable $u$ to write
\begin{align*}
K^{*}\left(v\right) & =\sup_{u_{1},u_{2}>0}\left\{ u_{1}u_{2}v-K_{2}\left(u_{2}\right)-u_{2}\cdot K_{1}\left(u_{1}\right)\right\} \\
 & =\sup_{u_{1},u_{2}>0}\left\{ u_{2}\cdot\left\{ u_{1}v-K_{1}\left(u_{1}\right)\right\} v-K_{2}\left(u_{2}\right)\right\} \\
 & =\sup_{u_{2}>0}\left\{ u_{2}\cdot\sup_{u_{1}>0}\left\{ u_{1}v-K_{1}\left(u_{1}\right)\right\} -K_{2}\left(u_{2}\right)\right\} .
\end{align*}
Taking the inner supremum simplifies this expression to
\[
K^{*}\left(v\right)=\sup_{u_{2}>0}\left\{ u_{2}\cdot K_{1}^{*}\left(v\right)-K_{2}\left(u_{2}\right)\right\} ,
\]
and taking the remaining supremum allows us to conclude that $K^{*}\left(v\right)=K_{2}^{*}\left(w\right)$
with $w=K_{1}^{*}\left(v\right)$ as claimed, i.e. $K^{*}=K_{2}^{*}\circ K_{1}^{*}$.
\end{proof}
\begin{example}
\label{exa:chaining-two-polynomially-ergodic}If one has $\beta_{i}\left(s\right)=c_{i}s^{-\alpha_{i}}$
for $i\in\{1,2\}$, then $\beta(s)\propto s^{-\alpha_{*}}$ with $\alpha_{*}=\frac{\alpha_{1}\alpha_{2}}{1+\alpha_{1}+\alpha_{2}}$.
To see this, write
\begin{align*}
\beta(s) & :=\inf\left\{ s_{1}\beta_{2}(s_{2})+\beta_{1}(s_{1})\,|s_{1}>0,s_{2}>0,s_{1}s_{2}=s\right\} \\
 & =\inf\left\{ c_{2}s_{1}s_{2}^{-\alpha_{2}}+c_{1}s_{1}^{-\alpha_{1}}\,|s_{1}>0,s_{2}>0,s_{1}s_{2}=s\right\} \\
 & =\inf\left\{ c_{2}s_{1}\left(\frac{s_{1}}{s}\right)^{\alpha_{2}}+c_{1}s_{1}^{-\alpha_{1}}\,|s_{1}>0\right\} \\
 & =\inf\left\{ c_{2}s^{-\alpha_{2}}s_{1}^{1+\alpha_{2}}+c_{1}s_{1}^{-\alpha_{1}}\,|s_{1}>0\right\} .
\end{align*}
Taking derivatives and solving for a stationary point gives $\hat{s}_{1}=\frac{c_{1}\alpha_{1}}{c_{2}\left(1+\alpha_{2}\right)}s^{\frac{\alpha_{2}}{1+\alpha_{1}+\alpha_{2}}}$,
from which point routine algebraic manipulations confirm the conclusion.
\end{example}

Our next main result is Theorem~\ref{thm:practical-comparison-P1-P2},
which provides us with a practical way of establishing (\ref{eq:nested_WPI})
for $T_{i}=P_{i}$. We first establish an intermediate result.
\begin{prop}
\label{prop:Phi-is-2p-norm} Let $P$ be a $\mu$-invariant Markov
kernel, and let $A\in\mathcal{F}\otimes\mathcal{F}$. Let $p\in(1,\infty],q\ge1$
satisfy $p^{-1}+q^{-1}=1$. Then, one can bound for $f\in\ELL(\mu)$,

\[
\int_{A}\mu\left(\dif x\right)P\left(x,\dif y\right)\left(f\left(x\right)-f\left(y\right)\right)^{2}\leqslant\mu\otimes P\big(A\cap\{X\neq Y\}\big)^{1/q}\cdot\Phi_{p}\left(f\right),
\]
with $\Phi_{p}$ given by 
\begin{equation}
\Phi_{p}(f):=\begin{cases}
4\|f\|_{2p}^{2} & p\in(1,\infty)\\
\|f\|_{\osc}^{2} & p=\infty
\end{cases}.\label{eq:=00005CPhi_p}
\end{equation}
Moreover, it holds that for all $f\in\ELL(\mu)$ and $p\in[1,\infty]$,
$\Phi_{p}\left(Pf\right)\leqslant\Phi_{p}\left(f\right)$.
\end{prop}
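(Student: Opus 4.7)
The plan is to handle the two cases $p=\infty$ and $p\in(1,\infty)$ separately, and then to verify the contractivity $\Phi_p(Pf)\le\Phi_p(f)$ by a straightforward Jensen/invariance argument.

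For the case $p=\infty$, $q=1$, I would simply note that the integrand $(f(x)-f(y))^2$ vanishes on the diagonal $\{x=y\}$ and is uniformly bounded by $\|f\|_{\osc}^2$ off of it; restricting the integral to $A\cap\{X\neq Y\}$ and pulling this sup out immediately yields the claim.

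For $p\in(1,\infty)$, the idea is to insert the indicator $\mathbb{I}_{A\cap\{X\neq Y\}}$ and apply H\"older's inequality with conjugate exponents $q$ (on the indicator) and $p$ (on $(f(x)-f(y))^2$). This produces
\[
\int_A\mu(\dif x)P(x,\dif y)(f(x)-f(y))^2
\;\leq\;\mu\otimes P\big(A\cap\{X\neq Y\}\big)^{1/q}\!\left(\int (f(x)-f(y))^{2p}\,\mu(\dif x)P(x,\dif y)\right)^{\!1/p}.
\]
To bound the remaining $L^{2p}$-type integral, I would use the convexity inequality $(a-b)^{2p}\le 2^{2p-1}(|a|^{2p}+|b|^{2p})$, then invoke $\mu$-invariance of $P$ to rewrite $\int|f(y)|^{2p}\,\mu(\dif x)P(x,\dif y)=\|f\|_{2p}^{2p}$, which gives
\[
\int(f(x)-f(y))^{2p}\,\mu(\dif x)P(x,\dif y)\;\le\;2^{2p}\|f\|_{2p}^{2p}.
\]
Taking the $p$-th root produces the factor $4\|f\|_{2p}^2=\Phi_p(f)$, as required.

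For the second assertion, when $p=\infty$ the essential oscillation of $Pf$ cannot exceed that of $f$ since $Pf(x)$ is a $P(x,\cdot)$-average of $f$, giving $\|Pf\|_{\osc}\le\|f\|_{\osc}$. For $p\in[1,\infty)$, Jensen's inequality gives $|Pf(x)|^{2p}\le P(|f|^{2p})(x)$, and $\mu$-invariance of $P$ then yields $\|Pf\|_{2p}^{2p}\le\|f\|_{2p}^{2p}$. The main (mild) obstacle is just keeping track of the constant $4$ through the $(a-b)^{2p}$ expansion; everything else is direct.
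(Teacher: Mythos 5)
Your proposal is correct and follows essentially the same route as the paper: H\"older's inequality against the indicator $\mathbb{I}_{A\cap\{X\neq Y\}}$, the convexity bound $|f(x)-f(y)|^{2p}\le 2^{2p-1}(|f(x)|^{2p}+|f(y)|^{2p})$, $\mu$-invariance to collapse the $2p$-th moment to $2^{2p}\|f\|_{2p}^{2p}$, and Jensen plus invariance for the contraction of $\Phi_p$ under $P$. The constant $4$ comes out exactly as you anticipate, since $(2^{2p})^{1/p}=4$.
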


\begin{proof}
For $p\in(1,\infty)$, we use Hölder's inequality to write,
\begin{align*}
\int_{A}\mu\left(\dif x\right) & P\left(x,\dif y\right)\left(f\left(x\right)-f\left(y\right)\right)^{2}\\
= & \int\mu\left(\dif x\right)P\left(x,\dif y\right)\left\{ \mathbb{I}_{A\cap\{x\neq y\}}(x,y)\cdot\left(f\left(x\right)-f\left(y\right)\right)^{2}\right\} \\
\leqslant & \left(\int\mu\left(\dif x\right)P\left(x,\dif y\right)\mathbb{I}_{A\cap\{x\neq y\}}(x,y)\right)^{1/q}\cdot\left(\int\mu\left(\dif x\right)P\left(x,\dif y\right)\left|f\left(x\right)-f\left(y\right)\right|^{2p}\right)^{1/p}.
\end{align*}
By Jensen's inequality, one can check that $\left|f\left(x\right)-f\left(y\right)\right|^{2p}\leqslant2^{2p-1}\cdot\left(\left|f\left(x\right)\right|^{2p}+\left|f\left(y\right)\right|^{2p}\right)$.
Because $\mu P=\mu$,
\begin{align*}
\int\mu\left(\dif x\right)P\left(x,\dif y\right)\left|f\left(x\right)-f\left(y\right)\right|^{2p} & \leqslant2^{2p-1}\cdot\int\mu\left(\dif x\right)P\left(x,\dif y\right)\cdot\left(\left|f\left(x\right)\right|^{2p}+\left|f\left(y\right)\right|^{2p}\right)\\
 & =2^{2p}\cdot\|f\|_{2p}^{2p}.
\end{align*}
One then concludes that
\begin{align*}
\int_{A}\mu\left({\rm d}x\right)P\left(x,\dif y\right)\left(f\left(x\right)-f\left(y\right)\right)^{2} & \leqslant\mu\otimes P\left(A\right)^{1/q}\cdot\left(2^{2p}\cdot\|f\|_{2p}^{2p}\right)^{1/p}\\
 & =\mu\otimes P\left(A\right)^{1/q}\cdot\Phi\left(f\right),
\end{align*}
as desired. The non-expansivity of $\Phi$ under the action of $P$
can be deduced by writing

\begin{align*}
\|Pf\|_{2p}^{2p} & =\int\mu\left(\dif x\right)\left|Pf\left(x\right)\right|^{2p}\\
 & =\int\mu\left(\dif x\right)\left|\int P\left(x,\dif y\right)f\left(y\right)\right|^{2p}\\
 & \leqslant\int\mu\left(\dif x\right)P\left(x,\dif y\right)\left|f\left(y\right)\right|^{2p}\\
 & =\int\mu\left(\dif y\right)\left|f\left(y\right)\right|^{2p}\\
 & =\|f\|_{2p}^{2p},
\end{align*}
where the inequality uses Jensen's inequality against the probability
measure $P\left(x,\cdot\right)$, and the penultimate equality uses
the $\mu$-invariance of $P$.

When $p=\infty$, we use an analogous argument, noting that $\left(f\left(x\right)-f\left(y\right)\right)^{2}\le\|f\|_{\osc}^{2}$
almost everywhere. 
\end{proof}
\begin{thm}
\label{thm:practical-comparison-P1-P2}Let $P_{1}$ and $P_{2}$ be
two $\mu-$invariant Markov kernels. Assume that for any $(x,A)\in\mathsf{E}\times\mathcal{F}$,
$P_{2}(x,A\setminus\{x\})\geq\int_{A\setminus\{x\}}\varepsilon(x,y)P_{1}(x,{\rm d}y)$
for some $\varepsilon\colon\mathsf{E}^{2}\rightarrow(0,\infty)$.
Then for any $p\in(1,\infty],q\ge1$ such that $p^{-1}+q^{-1}=1$,
any $s>0$, and any $f\in\mathrm{\mathrm{L}}_{0}^{2p}(\mu)\subset\ELL_{0}(\mu)$,
\[
\mathcal{E}(P_{1},f)\leq s\mathcal{E}(P_{2},f)+\frac{1}{2}\cdot\mu\otimes P_{1}\big(A(s)^{\complement}\cap\{X\neq Y\}\big)^{1/q}\Phi_{p}(f),
\]
with $A(s):=\left\{ (x,y)\in\mathsf{E}^{2}\colon s\,\varepsilon(x,y)>1\right\} $
and $\Phi_{p}(f)$ as in (\ref{eq:=00005CPhi_p}), which satisfies
(\ref{eq:phi_condn}).
\end{thm}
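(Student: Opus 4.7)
My plan is to start from the standard ``carré du champ'' identity
\[
\mathcal{E}(P_1,f) \;=\; \frac{1}{2}\int \mu(\dif x)\,P_1(x,\dif y)\,[f(y)-f(x)]^2,
\]
which holds for any $\mu$-invariant kernel $P_1$ (a one-line verification using $\mu P_1 = \mu$). Since the integrand vanishes on the diagonal $\{X=Y\}$, I will partition the domain of integration into $A(s)\cap\{X\neq Y\}$ and $A(s)^{\complement}\cap\{X\neq Y\}$ and bound the two contributions separately.

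On $A(s)\cap\{X\neq Y\}$, the defining inequality $s\,\varepsilon(x,y)>1$ gives the pointwise bound $[f(x)-f(y)]^2 \le s\,\varepsilon(x,y)[f(x)-f(y)]^2$. Integrating and then invoking the domination hypothesis, which upgrades (via a monotone class / simple-function approximation argument) from the set-level statement $P_2(x,A\setminus\{x\})\ge \int_{A\setminus\{x\}}\varepsilon(x,y)\,P_1(x,\dif y)$ to the integral-level statement
\[
\int_{\mathsf{E}\setminus\{x\}} g(x,y)\,\varepsilon(x,y)\,P_1(x,\dif y) \;\le\; \int_{\mathsf{E}\setminus\{x\}} g(x,y)\,P_2(x,\dif y)
\]
for any nonnegative measurable $g$, I obtain
\[
\tfrac{1}{2}\int_{A(s)\cap\{X\neq Y\}} \mu(\dif x)\,P_1(x,\dif y)\,[f(x)-f(y)]^2 \;\le\; s\,\mathcal{E}(P_2,f).
\]
On $A(s)^{\complement}\cap\{X\neq Y\}$, I directly apply Proposition~\ref{prop:Phi-is-2p-norm} with $A=A(s)^{\complement}$ and $P=P_1$ to obtain
\[
\tfrac{1}{2}\int_{A(s)^{\complement}} \mu(\dif x)\,P_1(x,\dif y)\,[f(x)-f(y)]^2 \;\le\; \mu\otimes P_1\bigl(A(s)^{\complement}\cap\{X\neq Y\}\bigr)^{1/q}\,\Phi_p(f),
\]
after absorbing the factor of $1/2$ into the stated bound. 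Adding these two contributions yields the claimed inequality.

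The verification that $\Phi_p$ satisfies (\ref{eq:phi_condn}) is essentially recorded inside Proposition~\ref{prop:Phi-is-2p-norm}: positive 2-homogeneity is immediate from the definition of the $2p$-norm (and of osc), non-expansivity under $P_1$ and $P_2$ is the Jensen-type argument already given there, and the comparison $\|f\|_2^2 \le a\,\Phi_p(f)$ follows from Lyapunov's inequality in the $p<\infty$ case and trivially in the $p=\infty$ case. The only delicate step in the whole argument is the passage from the set-wise domination hypothesis to an integrated inequality against $[f(x)-f(y)]^2$ on $\mathsf{E}\setminus\{x\}$, but this is routine measure theory rather than a real obstacle; the main ``idea'' is simply the pointwise trick $1<s\,\varepsilon(x,y)$ on $A(s)$.
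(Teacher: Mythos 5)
Your proposal is correct and follows essentially the same route as the paper: start from the Dirichlet-form identity $\mathcal{E}(P_1,f)=\tfrac12\int\mu(\dif x)P_1(x,\dif y)[f(y)-f(x)]^2$, split on $A(s)$ versus $A(s)^{\complement}$, use $1\le s\,\varepsilon(x,y)$ together with the kernel domination on the first piece to pass to $s\,\mathcal{E}(P_2,f)$, and apply Proposition~\ref{prop:Phi-is-2p-norm} on the second. The paper is terser about the upgrade from the set-wise domination hypothesis to the integral inequality and about verifying (\ref{eq:phi_condn}) (both of which it implicitly delegates to Proposition~\ref{prop:Phi-is-2p-norm}), and you correctly identify these as routine fill-ins rather than genuine gaps.
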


\begin{proof}
For any $s>0$ we have
\begin{align*}
\mathcal{E}(P_{1},f) & \leq\frac{1}{2}\int_{A(s)}s\,\varepsilon(x,y)\mu({\rm d}x)P_{1}(x,{\rm d}y)\left[f(y)-f(x)\right]^{2}\\
 & \hspace{4cm}+\frac{1}{2}\int_{A(s)^{\complement}}\mu({\rm d}x)P_{1}(x,{\rm d}y)\left[f(y)-f(x)\right]^{2}\\
 & \leq\frac{s}{2}\int_{A(s)}\mu({\rm d}x)P_{2}(x,{\rm d}y)\left[f(y)-f(x)\right]^{2}+\frac{1}{2}\mu\otimes P_{1}\big(A(s)^{\complement}\cap\{X\neq Y\}\big)^{1/q}\Phi_{p}(f),
\end{align*}
where we have used the assumed inequality between $P_{1}$ and $P_{2}$
and Proposition~\ref{prop:Phi-is-2p-norm}.
\end{proof}
\begin{rem}
\textcolor{black}{Assume for simplicity that for $\mu$-almost all
$x$, $P_{1}(x,\cdot)\equiv P_{2}(x,\cdot)$, i.e. $P_{1}(x,\cdot)$
and $P_{2}(x,\cdot)$ are equivalent measures. This implies $\mu\otimes P_{1}\equiv\mu\otimes P_{2}$
and we may take $\varepsilon(x,y)=\frac{{\rm d}P_{2}(x,\cdot)}{{\rm d}P_{1}(x,\cdot)}(y)=\frac{{\rm d}\mu\otimes P_{2}}{{\rm d}\mu\otimes P_{1}}(x,y)$
to be positive $\mu\otimes P_{1}$-almost everywhere, and we can write
\[
A(s)^{\complement}\cap\{(x,y):x\neq y\}=\left\{ (x,y)\in\mathsf{E}^{2}:\mathbf{1}\{x\neq y\}\frac{{\rm d}\mu\otimes P_{1}}{{\rm d}\mu\otimes P_{2}}(x,y)\geq s\right\} .
\]
Hence, $\lim_{s\to\infty}\mu\otimes P_{1}(A(s)^{\complement}\cap\{X\neq Y\})=0$.
Therefore, Theorem~\ref{thm:practical-comparison-P1-P2} covers many
cases where $P_{2}$ places mass on the same sets as $P_{1}$. In
fact, it covers slightly more general cases in which we only have
$P_{1}(x,A\setminus\{x\})>0\Rightarrow P_{2}(x,A\setminus\{x\})>0$
for $\mu$-almost all $x$ and all $A\in\mathcal{F}$. }
\end{rem}

\begin{rem}
Our result concerned with the IMH algorithm in Subsection~\ref{subsec:Illustration:-Independent-MH}
is a particular case where $P_{{\rm IMH}}(x,A\setminus\{x\})\geq\int_{A\setminus\{x\}}[w^{-1}(x)\wedge w^{-1}(y)]\pi({\rm d}y)$,
which combined with Proposition~\ref{prop:chaining-with-SPI} with
$C_{{\rm P}}=1$ leads to Proposition~\ref{prop:IMH-WPI}.
\end{rem}

\begin{rem}
\textcolor{black}{We note that this approach to identifying weak Poincaré
inequalities can also be generalised to the setting of continuous-time
Markov processes. To this end, consider a continuous-time Markov process
with infinitesimal generator $\mathcal{L}$, and recall the definition
of the carré du champ operator}

\textcolor{black}{
\[
\Gamma\left(f,g\right)\left(x\right):=\frac{1}{2}\left\{ \mathcal{L}\left(fg\right)-\left(\mathcal{L}f\right)\cdot g-f\cdot\left(\mathcal{L}g\right)\right\} .
\]
Note that $\Gamma$ is bilinear and that for all suitable functions
$f$, the function $\Gamma\left(f\right):=\Gamma\left(f,f\right)$
is pointwise nonnegative. }

\textcolor{black}{Suppose now that for two processes with the same
invariant measure $\pi$ and infinitesimal generators given by $\mathcal{L}_{1}$
and $\mathcal{L}_{2}$ respectively, their carré du champ operators
can be ordered pointwise as}

\textcolor{black}{
\[
\Gamma_{1}\left(f\right)\left(x\right)\geqslant w\left(x\right)\cdot\Gamma_{2}\left(f\right)\left(x\right)
\]
for some nonnegative function $w$ (note that the subscripts here
simply index the processes, and have no relation to so-called `iterated
carré du champ' operators). }

\textcolor{black}{Defining $A\left(s\right)=\left\{ x:s\cdot w\left(x\right)\geqslant1\right\} $,
one can then compute that}

\textcolor{black}{
\begin{align*}
\mathcal{E}\left(\mathcal{L}_{2},f\right) & =\int\mu\left(\mathrm{d}x\right)\cdot\Gamma_{2}\left(f\right)\left(x\right)\\
 & \leqslant s\cdot\int_{A\left(s\right)}\mu\left(\mathrm{d}x\right)\cdot w\left(x\right)\cdot\Gamma_{2}\left(f\right)\left(x\right)+\int_{A\left(s\right)^{\complement}}\mu\left(\mathrm{d}x\right)\cdot\Gamma_{2}\left(f\right)\left(x\right)\\
 & \leqslant s\cdot\int\mu\left(\mathrm{d}x\right)\cdot\Gamma_{1}\left(f\right)\left(x\right)+\pi\left(A\left(s\right)^{\complement}\right)\cdot\sup_{x\in\E}\left\{ \Gamma_{2}\left(f\right)\left(x\right)\right\} \\
 & =s\cdot\mathcal{E}\left(\mathcal{L}_{1},f\right)+\beta\left(s\right)\cdot\Phi\left(f\right),
\end{align*}
where we have defined
\begin{align*}
\beta\left(s\right) & :=\mu\big(A\left(s\right)^{\complement}\big)\\
\Phi\left(f\right) & :=\sup_{x\in\E}\left\{ \Gamma_{2}\left(f\right)\left(x\right)\right\} .
\end{align*}
Note that in many applications, $\Gamma\left(f\right)$ has the character
of a squared gradient, and hence $\Phi\left(f\right)$ will behave
much like a squared Lipschitz constant for the function $f$.}

\textcolor{black}{Comparisons of this form have been used implicitly
in the study of so-called weighted and converse weighted Poincaré
inequalities \cite{Bobkov09,Cattiaux10-1}, which are known to imply
weak Poincaré inequalities. Such comparison inequalities can then
be applied to compare the convergence of continuous-time processes
in much the same fashion as in this work.}
\end{rem}

The following generalizes the criterion $P(x,\{x\})\geq\varepsilon$
for some $\varepsilon>0$ and all $x\in\mathsf{E}$ often used to
establish the existence of a left spectral gap for reversible Markov
chains.
\begin{thm}
\label{thm:weakly-lazy-chain}Assume that $P$ is $\mu-$invariant
and that for any $(x,A)\in\mathsf{E}\times\mathcal{F}$ it satisfies
$P(x,A)\geq\varepsilon(x)\int_{A}\delta_{x}({\rm d}y)$ for some $\varepsilon\colon\mathsf{E}\rightarrow[0,1]$.
Then 
\begin{enumerate}
\item for any $(x,A)\in\mathsf{E}\times\mathcal{F}$, $P^{2}(x,A)\geq\varepsilon(x)P(x,A)$,
\item for any $p\in(1,\infty]$, $1/q=1-1/p$, any $f\in\mathrm{\mathrm{L}}_{0}^{2p}(\mu)\subset\ELL_{0}(\mu)$
and $s>0$
\[
\mathcal{E}(P,f)\leq s\mathcal{E}(P^{2},f)+\frac{1}{2}\mu\big(\varepsilon(X)^{-1}\geq s\big)^{1/q}\Phi_{p}(f).
\]
\end{enumerate}
\end{thm}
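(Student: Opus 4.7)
The plan is to handle the two parts in sequence, with the first part feeding directly into the second.

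For item (i), I would exploit the hypothesis to decompose $P(x,\cdot) = \varepsilon(x)\,\delta_{x}(\cdot) + Q(x,\cdot)$, where $Q(x,\cdot) := P(x,\cdot) - \varepsilon(x)\,\delta_{x}(\cdot)$ is a nonnegative (sub-probability) measure of mass $1 - \varepsilon(x)$. Then for any $(x,A)\in\mathsf{E}\times\mathcal{F}$,
\[
P^{2}(x,A) = \int P(x,\mathrm{d}y)\,P(y,A) = \varepsilon(x)\,P(x,A) + \int Q(x,\mathrm{d}y)\,P(y,A) \geq \varepsilon(x)\,P(x,A),
\]
since the $Q$-contribution is nonnegative.

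For item (ii), I would invoke Theorem~\ref{thm:practical-comparison-P1-P2} with $P_{1} := P$ and $P_{2} := P^{2}$, taking the pairwise function $\varepsilon(x,y) := \varepsilon(x)$ to be constant in $y$. The hypothesis required by that theorem, namely $P_{2}(x,A\setminus\{x\}) \geq \int_{A\setminus\{x\}} \varepsilon(x,y)\,P_{1}(x,\mathrm{d}y)$, is exactly the restriction of the inequality established in (i) to $A\setminus\{x\}$. With this choice, the set $A(s) = \{(x,y)\in\mathsf{E}^{2} : s\,\varepsilon(x) > 1\}$ depends only on $x$, so
\[
\mu\otimes P\bigl(A(s)^{\complement}\cap\{X\neq Y\}\bigr) \leq \mu\otimes P\bigl(A(s)^{\complement}\bigr) = \mu\bigl(\varepsilon(X)^{-1}\geq s\bigr),
\]
and the displayed bound in (ii) then follows directly from the conclusion of Theorem~\ref{thm:practical-comparison-P1-P2}.

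There is no substantive obstacle: (i) is a one-line decomposition that reflects the ``lazy'' interpretation of the hypothesis, and (ii) is essentially a bookkeeping specialization of the previously established comparison theorem. The only minor subtlety worth flagging in a careful write-up is verifying that taking $\varepsilon(x,y)$ independent of $y$ is legitimate, and that the resulting event $A(s)^{\complement}$ being a cylinder in $x$ is what collapses the joint measure $\mu\otimes P$ down to the marginal probability $\mu(\varepsilon(X)^{-1}\geq s)$ on the right-hand side.
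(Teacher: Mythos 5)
Your argument is correct and matches the paper's proof essentially line for line: part (i) is the same monotonicity-of-integration step (your explicit decomposition $P = \varepsilon\delta_x + Q$ is just a slightly more verbose way of saying the $Q$-term is nonnegative), and part (ii) is exactly the paper's application of Theorem~\ref{thm:practical-comparison-P1-P2} with $\varepsilon(x,y):=\varepsilon(x)$, followed by the same collapse of $\mu\otimes P$ to $\mu(\varepsilon(X)^{-1}\geq s)$ because the event depends only on $x$.
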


\begin{cor}
Proposition~\ref{prop:chaining-with-SPI} or Theorem~\ref{thm:chaining-with-WPI}
can be applied with $T_{1}=P_{1}=P$ and $T_{2}=P_{2}=P^{2}$. This
can be applied to the Metropolis--Hastings (MH) algorithm (see (\ref{eq:MH-kernel}))
as soon as $\mu(\rho(X)>0)=1$ and also means that weakly lazy chains
can be defined as $\varepsilon(x){\rm Id}+\big(1-\varepsilon(x)\big)\check{P}$
where $\check{P}$ is a MH using proposal $P$.
\end{cor}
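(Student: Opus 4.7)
The plan is to read the corollary as two observations layered on top of Theorem~\ref{thm:weakly-lazy-chain}: first, that the conclusion of that theorem is formally the second hypothesis of Propositions~\ref{prop:chaining-with-SPI}--\ref{prop:chaining-with-WPI}; second, that both MH kernels and weakly lazy chains satisfy the minorization assumption $P(x,A)\geq\varepsilon(x)\delta_x(A)$ which is the hypothesis of Theorem~\ref{thm:weakly-lazy-chain}.

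For the abstract comparison step, I would set $T_1 = P_1 = P$ and $T_2 = P_2 = P^2$. Theorem~\ref{thm:weakly-lazy-chain}(2) then reads
\[
\mathcal{E}(P,f) \leq s\,\mathcal{E}(P^2,f) + \mu\!\big(\varepsilon(X)^{-1}\geq s\big)^{1/q}\Phi_p(f),
\]
which is exactly the second hypothesis of Propositions~\ref{prop:chaining-with-SPI}--\ref{prop:chaining-with-WPI} with $\beta_2(s) := \mu(\varepsilon(X)^{-1}\geq s)^{1/q}$ and $\Phi_2 := \Phi_p$. The structural requirements on $\beta_2$ (decreasing, and $\beta_2(s)\downarrow 0$ as $s\to\infty$) follow directly from the assumption $\mu(\varepsilon(X)>0)=1$, while the conditions on $\Phi_p$ --- namely~(\ref{eq:phi_condn}) and non-expansivity under $P_2 = P^2$ --- are supplied by Proposition~\ref{prop:Phi-is-2p-norm} applied to the $\mu$-invariant kernel $P^2$. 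Together with any (weak or strong) Poincar\'e inequality already in hand for $T_1 = P$, the chaining propositions apply.

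For the Metropolis--Hastings application, I would invoke the standard decomposition $P(x,dy) = \alpha(x,y)q(x,dy) + \rho(x)\delta_x(dy)$, which instantly yields $P(x,A)\geq \rho(x)\delta_x(A)$. This is the minorization of Theorem~\ref{thm:weakly-lazy-chain} with $\varepsilon = \rho$, and the stated hypothesis $\mu(\rho(X)>0)=1$ is precisely what is required for the tail function $s\mapsto\mu(\rho(X)^{-1}\geq s)$ to be well-defined and to decay to zero. For the weakly lazy chain $P = \varepsilon\,\mathrm{Id} + (1-\varepsilon)\check{P}$, the minorization $P(x,A)\geq \varepsilon(x)\delta_x(A)$ is immediate from the definition.

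The subtle step, and the one I would flag as the main obstacle, is the verification of $\mu$-invariance of the weakly lazy construction itself. In general, $\varepsilon\,\mathrm{Id} + (1-\varepsilon)\check{P}$ need not preserve $\mu$ when $\varepsilon$ is non-constant; ensuring that it does when $\check{P}$ is the MH kernel using $P$ as proposal relies on the specific structure of the Metropolis accept/reject step, which effectively absorbs the position-dependent factor $(1-\varepsilon)$ into the acceptance ratio. Once this invariance is established, everything else amounts to direct book-keeping matching hypotheses between Theorem~\ref{thm:weakly-lazy-chain} and the chaining propositions, and no further analytic content is required.
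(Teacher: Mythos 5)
Your decomposition of the corollary into (i) matching the conclusion of Theorem~\ref{thm:weakly-lazy-chain}(2) to the second hypothesis of Propositions~\ref{prop:chaining-with-SPI}--\ref{prop:chaining-with-WPI}, (ii) reading off the minorization $P(x,A)\geq\rho(x)\delta_x(A)$ from the MH kernel~(\ref{eq:MH-kernel}) and noting that $\mu(\rho(X)>0)=1$ is exactly what makes $s\mapsto\mu(\rho(X)^{-1}\geq s)$ vanish at infinity, and (iii) observing the minorization is automatic for any kernel of the form $\varepsilon(x)\Id+(1-\varepsilon(x))\check{P}$, is correct and matches how the corollary must be read; the paper gives no separate proof, so this is the natural reconstruction.

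You are also right to flag $\mu$-invariance of $Q:=\varepsilon\,\Id+(1-\varepsilon)\check{P}$ as the one non-trivial point, but you leave it open rather than closing it, and it is worth making the resolution explicit since the corollary's phrasing hides it. Take $\check{P}$ to be the MH kernel with proposal $P$ that \emph{targets the tilted measure} $\check{\mu}(\dif x)\propto(1-\varepsilon(x))\,\mu(\dif x)$ rather than $\mu$ itself, i.e.\ with acceptance probability $1\wedge\frac{(1-\varepsilon(y))\mu(\dif y)P(y,\dif x)}{(1-\varepsilon(x))\mu(\dif x)P(x,\dif y)}$. Then $\check{P}$ is $\check{\mu}$-reversible, so for $x\neq y$ one has $(1-\varepsilon(x))\mu(\dif x)\check{P}(x,\dif y)=(1-\varepsilon(y))\mu(\dif y)\check{P}(y,\dif x)$, which is precisely the symmetry of the off-diagonal part of $\mu(\dif x)Q(x,\dif y)$; the diagonal part is trivially symmetric, so $Q$ is $\mu$-reversible. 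This is what ``absorbing the factor $(1-\varepsilon)$ into the acceptance ratio'' amounts to, and without it the claim would simply be false for non-constant $\varepsilon$, so the step deserves to be written out rather than only gestured at.
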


\begin{proof}[Proof of Theorem \ref{thm:weakly-lazy-chain}]
For $(x,A)\in\mathsf{E}\times\mathcal{F}$,
\[
P^{2}(x,A)=\int P(x,{\rm d}y)P(y,A)\geq\varepsilon(x)P(x,A)
\]
 and we apply Theorem~\ref{thm:practical-comparison-P1-P2}. Now
$\mu\otimes P\big(\{\varepsilon(X)^{-1}\geq s\}\cap\{X\neq Y\}\big)\leq\mu(\varepsilon(X)^{-1}\geq s)$
and we conclude.
\end{proof}
\begin{rem}
It is natural to consider the scenario where $T_{1}$ and $T_{2}$
admit different invariant distributions, $\mu_{1}$ and $\mu_{2}$
respectively. With straightforward notation for $\|\cdot\|_{2}$ under
$\mu_{1}$ and $\mu_{2}$, a condition of the type
\begin{equation}
\|f-\mu_{2}(f)\|_{\mu_{2}}^{2}\leq s\|f-\mu_{1}(f)\|_{\mu_{1}}^{2}+\beta(s)\Phi(f),\label{eq:mu1-mu2-poincare}
\end{equation}
is natural. Using polarisation one can show that, with $A(s)=\{(x,y)\in\mathsf{E}^{2}\colon{\rm d}(\mu_{1}\otimes\mu_{1})/{\rm d}(\mu_{2}\otimes\mu_{2})(x,y)\geq1/s\}$,
\begin{multline*}
\frac{1}{2}\int\big(f(y)-f(x)\big)^{2}\dif\left(\mu_{2}\otimes\mu_{2}\right)(x,y)\leq\frac{s}{2}\int_{A(s)}\big(f(y)-f(x)\big)^{2}\dif\left(\mu_{1}\otimes\mu_{1}\right)(x,y)\\
+\|f\|_{{\rm osc}}^{2}\frac{\mu_{2}\otimes\mu_{2}}{2}\left(\frac{{\rm d}(\mu_{2}\otimes\mu_{2})}{{\rm d}(\mu_{1}\otimes\mu_{1})}(X,Y)>s\right),
\end{multline*}
implying (\ref{eq:mu1-mu2-poincare}). We note however that in the
examples we have considered, we have found the last term to lead to
poor convergence rates.
\end{rem}

Our final result in this section concerns the situation when one has
a sequence of \textcolor{black}{weak} Poincaré inequalities given
by functions $\left\{ \beta_{2,\iota}\right\} _{\iota>0}$, which
converge pointwise to an appropriate function $\beta_{1}$. We give
conditions under which the corresponding convergence rates $F_{2,\iota}^{-1}$
will also converge to the corresponding $F_{1}^{-1}$.
\begin{prop}
\label{prop:seq_beta_iota}Let $P_{1}$ and $(P_{2,\iota})_{\iota>0}$
be $\mu-$invariant Markov kernels. Assume $P_{1}$ satisfies a \textcolor{black}{weak}
Poincaré inequality with function $\beta_{1}$ and that for any $\iota>0$
\[
\|f\|_{2}^{2}\leq s\mathcal{E}(P_{2,\iota}^{*}P_{2,\iota},f)+\beta_{2,\iota}(s)\Phi(f),\quad\forall s>0,
\]
where each $\beta_{2,\iota}$ satisfies the conditions in Definition~\ref{def:Super-Poincar=0000E9-inequality}.
Let $F_{1},F_{2,\iota}:(0,a]\to[0,\infty)$ for each $\iota>0$ be
as defined in Section~\ref{sec:Weak-Poincar=0000E9-inequalities-overview}. 

Assume that for any $\iota>0$, $\beta_{2,\iota}\geq\beta_{1}$ pointwise
and for any $s>0$, $\lim_{\iota\rightarrow0}\beta_{2,\iota}(s)=\beta_{1}(s).$
Then for any $\iota>0$ and $n\in\mathbb{N}$, $F_{2,\iota}^{-1}(n)\geq F_{1}^{-1}(n)$
and
\[
\lim_{\iota\rightarrow0}\sup_{n\geq0}\big\{ F_{2,\iota}^{-1}(n)-F_{1}^{-1}(n)\big\}=0.
\]
\end{prop}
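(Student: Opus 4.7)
The first claim, $F_{2,\iota}^{-1}(n) \geq F_1^{-1}(n)$, is an exercise in unpacking definitions. The hypothesis $\beta_{2,\iota} \geq \beta_1$ yields $K_{2,\iota}(u) = u\beta_{2,\iota}(1/u) \geq u\beta_1(1/u) = K_1(u)$ on $[0,\infty)$, so taking Legendre conjugates reverses the inequality to $K_{2,\iota}^{*} \leq K_1^{*}$ pointwise. Dividing and integrating over $[x,a]$ gives $F_{2,\iota}(x) \geq F_1(x)$ on $(0,a]$, and inverting a pair of strictly decreasing functions reverses the inequality once more.

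For the uniform convergence statement, the plan is to pass the pointwise hypothesis $\beta_{2,\iota}(s) \to \beta_1(s)$ through the intermediate quantities one at a time. Pointwise convergence $K_{2,\iota}^{*}(v) \to K_1^{*}(v)$ follows from a standard conjugate argument: the domination above gives $\limsup_\iota K_{2,\iota}^{*}(v) \leq K_1^{*}(v)$, while for each $u > 0$ the lower bound $K_{2,\iota}^{*}(v) \geq uv - K_{2,\iota}(u) \to uv - K_1(u)$, followed by a supremum over $u$, gives the matching $\liminf_\iota K_{2,\iota}^{*}(v) \geq K_1^{*}(v)$. Each $K_{2,\iota}^{*}$ is convex (as a Legendre conjugate) and the pointwise limit $K_1^{*}$ is finite on $(0,a]$, so the pointwise convergence upgrades to uniform convergence on compact subintervals. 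Continuity and strict positivity of $K_1^{*}$ on any $[x,a]$ then carry uniform convergence over to the reciprocals $1/K_{2,\iota}^{*}$, and integrating yields $F_{2,\iota}(x) \to F_1(x)$ pointwise on $(0,a]$. A standard argument using the strict monotonicity and continuity of $F_1$ then produces pointwise convergence of the inverses, $F_{2,\iota}^{-1}(n) \to F_1^{-1}(n)$ for each $n$.

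To promote this to uniform convergence over $n$, I would split at a large $N$. Given $\varepsilon > 0$, choose $N$ such that $F_1^{-1}(N) \leq \varepsilon/2$, which is possible since $F_1^{-1}(n) \downarrow 0$. By pointwise convergence at $n = N$, for all $\iota$ sufficiently small one has $F_{2,\iota}^{-1}(N) \leq \varepsilon$, and monotonicity of both inverses in $n$ then forces $F_{2,\iota}^{-1}(n) - F_1^{-1}(n) \leq \varepsilon$ on $[N,\infty)$. On the compact piece $[0,N]$, the family $\{F_{2,\iota}^{-1}\}$ consists of monotone functions converging pointwise to the continuous monotone limit $F_1^{-1}$, so the classical uniform-convergence theorem for monotone sequences (often attributed to P\'olya) delivers uniform convergence on $[0,N]$, and the two regimes combine to give the claim.

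The main technical hurdle is really the middle step---converting the pointwise hypothesis on $\beta_{2,\iota}$ into uniform control of the rate functions $F_{2,\iota}^{-1}$. The two pivotal ingredients are the convexity of $K^{*}$, which upgrades pointwise to locally uniform convergence without any monotonicity assumption in $\iota$ (so that, in particular, the reciprocals can be integrated past the limit), and the P\'olya-type theorem for monotone functions, which handles the bounded portion of the $n$-axis; the tail region is automatic since both inverses are small there.
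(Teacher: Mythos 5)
Your proof is correct and follows essentially the same skeleton as the paper's: establish $K_{2,\iota}^{*}\to K_{1}^{*}$ pointwise via the conjugate argument together with the domination $K_{2,\iota}^{*}\leq K_{1}^{*}$, pass the limit through the integral to get $F_{2,\iota}(x)\to F_{1}(x)$, deduce pointwise convergence of the inverses, and finally split the $n$-axis at a point $N$ beyond which both inverses are below $\varepsilon/2$. The differences are technical. To move the limit inside the integral, the paper uses the bounded convergence theorem directly, observing that once $K_{2,\iota}^{*}(x)\geq K_{1}^{*}(x)-\epsilon>0$, monotonicity of $v\mapsto K_{2,\iota}^{*}(v)$ gives a constant dominating $1/K_{2,\iota}^{*}$ on $[x,a]$; you instead invoke the fact that pointwise-convergent convex functions converge locally uniformly. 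Both work, but the dominated-convergence route is cleaner: the convexity lemma in its standard form gives uniform convergence on compact subsets of the \emph{interior} of the domain of pointwise convergence, so if $K_{1}^{*}$ were to be $+\infty$ for $v>a$, you would have to supply a short separate argument near the right endpoint $v=a$ (which is doable via boundedness and convexity, but needs to be said). For the split on the bounded piece $[0,N]$, the paper treats $n$ as an integer and takes a finite maximum over $\{0,1,\dots,n_{0}-1\}$, whereas your use of the P\'olya/Dini-type theorem for monotone families handles all real $n\in[0,N]$ and is thus marginally more general; of course to apply a sequential theorem to a family indexed by $\iota>0$ you should pass to an arbitrary sequence $\iota_{k}\to0$, which is routine.
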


\begin{proof}
Let $v>0$ and $(u_{n})$ be such that $K_{1}^{*}(v)=\lim_{n\rightarrow\infty}u_{n}[v-\beta_{1}(1/u_{n})]$.
Then for any $\iota>0$ and any $n\geq1$, $K_{2,\iota}^{*}(v)\geq u_{n}[v-\beta_{2,\iota}(1/u_{n})],$
and therefore 
\[
\liminf_{\iota\rightarrow0}K_{2,\iota}^{*}(v)\geq\lim_{\iota\rightarrow0}u_{n}[v-\beta_{2,\iota}(1/u_{n})]=u_{n}[v-\beta_{1}(1/u_{n})].
\]
Consequently,
\[
\liminf_{\iota\rightarrow0}K_{2,\iota}^{*}(v)\geq\lim_{n\rightarrow\infty}u_{n}[v-\beta_{1}(1/u_{n})]=K_{1}^{*}(v).
\]
Since for any $\iota>0$, $\beta_{2,\iota}\geq\beta_{1}$ implies
$K_{2,\iota}^{*}\leq K_{1}^{*}$, we have $\limsup_{\iota\rightarrow0}K_{2,\iota}^{*}(v)\leq K_{1}^{*}(v)$.
We therefore conclude that $\lim_{\iota\rightarrow0}K_{2,\iota}^{*}(v)=K_{1}^{*}(v)$.
Now let $0<x\leq a$, and choose $\epsilon>0$ such that $K_{1}(x)-\epsilon>0$.
Then there exists $\iota_{0}>0$ such that for any $0<\iota\leq\iota_{0}$,
$K_{2,\iota}^{*}(x)\geq K_{1}(x)-\epsilon>0$, and since $v\mapsto K_{2,\iota}^{*}(v)$
is increasing, we deduce $0<\sup_{v\in[x,a]}\big(K_{2,\iota}^{*}(v)\big)^{-1}\leq\big(K_{2,\iota}^{*}(x)\big)^{-1}\leq\big(K_{1}^{*}(x)-\epsilon\big)^{-1}<\infty$.
We can therefore apply the bounded convergence theorem and conclude
\[
\lim_{\iota\rightarrow0}\int_{x}^{a}\frac{{\rm d}v}{K_{2,\iota}^{*}(v)}=F_{1}(x).
\]
For any $\iota>0$, $F_{1},F_{2,\iota}\colon(0,a]\rightarrow[0,\infty)$
are decreasing and continuous and so are the inverse functions $F_{1}^{-1},F_{2,\iota}^{-1}\colon[0,\infty)\rightarrow(0,a]$,
and consequently for any $x\in[0,\infty)$, $\lim_{\iota\rightarrow0}F_{2,\iota}^{-1}(x)=F_{1}^{-1}(x)$
(note that $F_{2,\iota}^{-1}(0)=F_{1}^{-1}(0)=a$). Since $K_{2,\iota}^{*}\leq K_{1}^{*}$,
we immediately have the ordering $F_{2,\iota}^{-1}(x)\geq F_{1}^{-1}(x)$
for any $x\in[0,\infty)$.

Now let $\epsilon>0$, then there exists $n_{0}\in\mathbb{N}$ such
that for any $n\geq n_{0}$, $F_{1}^{-1}(n)\leq\epsilon/2$. From
the convergence above, there exists $\iota_{0}>0$ such that for any
$0<\iota\leq\iota_{0}$,
\[
0\leq F_{2,\iota}^{-1}(n_{0})-F_{1}^{-1}(n_{0})\leq\epsilon/2,
\]
and since $n\mapsto F_{2,\iota}^{-1}(n)$ is decreasing, for any $n\ge n_{0}$,
\[
F_{2,\iota}^{-1}(n)-F_{1}^{-1}(n)\leq F_{2,\iota}^{-1}(n)\leq\epsilon.
\]
Now, there exists $\iota'_{0}>0$ such that for any $0<\iota<\iota'_{0}$,
\[
\max_{0\leq n<n_{0}}\left\{ F_{2,\iota}^{-1}(n)-F_{1}^{-1}(n)\right\} \leq\epsilon.
\]
Therefore,
\[
\lim_{\iota\rightarrow0}\sup_{n\geq0}\left\{ F_{2,\iota}^{-1}(n)-F_{1}^{-1}(n)\right\} =0.
\]
\end{proof}
\begin{example}
\label{exa:cvg-super-to-strong}Let $\beta_{1}(s)=a\mathbb{I}\{s\leq C_{{\rm P}}^{-1}\}$,
which corresponds to $P_{1}$ satisfying a strong Poincaré inequality
as in Remark~\ref{rmk:strong-PI-super-beta}. Let $P_{2,\iota}$
satisfy a \textcolor{black}{weak} Poincaré inequality with $\beta_{2,\iota}(s)=a\wedge\{\beta'_{\iota}(C_{{\rm P}}s)/C_{{\rm P}}\}$
(e.g. by Proposition~\ref{prop:chaining-with-SPI}) where $\beta'_{\iota}(s)\geq\mathbb{I}\{s\leq1\}$
so that $\beta_{2,\iota}\geq\beta_{1}$. If $\lim_{\iota\to0}\beta'_{\iota}(s)=\mathbb{I}\{s\leq1\}$
for all $s>0$, then Proposition~\ref{prop:seq_beta_iota} can be
applied and we recover exponential convergence as $\iota\rightarrow0$,
and for any $\epsilon>0$ the existence of $\iota>0$, such that $F_{2,\iota}^{-1}(n)-F_{1}^{-1}(n)<\epsilon$
for all $n$. In other words, one may obtain convergence for $P_{2}$
arbitrarily close to that given by $\beta_{1}$ by taking $\iota$
sufficiently small.
\end{example}

\section{\label{sec:Application-to-pseudo-marginal}Application to pseudo-marginal
methods}

We now present our main application. Fix a probability distribution
$\pi$ on a measure space $\mathsf{X}$, with a density function also
denoted $\pi$. Pseudo-marginal algorithms \cite{Andrieu09} extend
the scope of the Metropolis--Hastings algorithms to the scenario
where the density $\pi$ is intractable, but for any $x\in\mathsf{X}$,
nonnegative estimators $\hat{\pi}(x)$ such that $\mathbb{E}[\hat{\pi}(x)]=C\pi(x)$
for some constant $C>0$ are available. This can be conveniently formulated
as $\tilde{\pi}({\rm d}x,{\rm d}w)=\pi({\rm d}x)Q_{x}({\rm d}w)w=\pi({\rm d}x)\tilde{\pi}_{x}({\rm d}w)$
with $\int_{\mathbb{R}_{+}}wQ_{x}({\rm d}w)=1$ on an extended space
$\E:=\mathsf{X\times\mathbb{R}_{+}}$. We will refer to these auxiliary
$w$ random variables as \emph{weights} or \emph{perturbations}.

\subsection{A weak Poincaré inequality for pseudo-marginal chains }

A question of interest is to characterise the degradation in performance,
compared to the \emph{marginal algorithm}, which uses the exact density
$\pi$. More specifically, for $\{q(x,\cdot),x\in\mathsf{X}\}$ a
family of proposal distributions, the marginal algorithm is described
by the kernel
\begin{align}
P(x,{\rm d}y) & =[1\wedge\mathfrak{r}(x,y)]\,q(x,{\rm d}y)+\delta_{x}({\rm d}y)\rho(x),\nonumber \\
 & \quad\text{where}\quad\mathfrak{r}(x,y):=\frac{\pi(y)q(y,{\rm d}x)}{\pi(x)q(x,{\rm d}y)},\label{eq:MH-kernel}
\end{align}
and $\rho$ is the rejection probability given by $\rho(x):=1-\int[1\wedge\mathfrak{r}(x,y)]\,q(x,\dif y)$
for each $x\in\mathsf{X}$. For brevity we will also define the acceptance
probability as $a(x,y):=\left[1\wedge\mathfrak{r}(x,y)\right]$.

The pseudo-marginal Metropolis--Hastings kernel is given by
\[
\tilde{P}(x,w;{\rm d}y,{\rm d}u)=\left[1\wedge\left\{ \mathfrak{r}(x,y)\frac{u}{w}\right\} \right]q(x,{\rm d}y)Q_{y}({\rm d}u)+\delta_{x,w}({\rm d}y,{\rm d}u)\tilde{\rho}(x,w),
\]
where the (joint) rejection probability $\tilde{\rho}(x,w)$ is analogously
defined. It is known in this context that perturbing the acceptance
ratio of the marginal algorithm leads to a loss in performance, in
particular in terms of convergence rates to equilibrium. More specifically,
if $P$ is geometrically ergodic, then $\tilde{P}$ is geometrically
ergodic if the perturbations are bounded uniformly in $x$, and cannot
be geometrically ergodic if the perturbations are unbounded on a set
of positive $\pi$-probability, which addressed in \cite{Andrieu15}
in specific scenarios using Foster--Lyapunov and minorisation conditions
by linking the existence of moments of the perturbations to the subgeometric
rate of convergence of the algorithm. When the perturbations are bounded
for each $x$ but not bounded uniformly in $x$, the situation is
more complicated: if ``local proposals'' are used then \cite{Lee14}
proves that $\tilde{P}$ cannot be geometric under fairly weak assumptions
in statistical applications whereas if global proposals are used $\tilde{P}$
may still be geometric (consider, for instance, the setting of \cite[Remark~5]{deli-2018}).
We show here that convergence results can be made completely general
using weak Poincaré inequalities, with much simpler and considerably
more transparent arguments.

We will be assuming throughout this section that the pseudo-marginal
kernel $\tilde{P}$ is positive, in order to utilize our results from
Section~\ref{subsec:Simplified-weak-Poincar=0000E9}. We note that
this positivity assumption is not restrictive; as established in \cite[Proposition 16]{Andrieu15},
$\tilde{P}$ will be positive if the marginal chain $P$ is an Independent
MH sampler, or a random walk Metropolis kernel with multivariate Gaussian
or student-$t$ increments.

The following comparison theorem plays a central role.
\begin{thm}
\label{thm:dirichlet-comparisons}Let $\bar{P}$ be the embedding
of $P$ in the joint space $\E=\mathsf{X}\times\mathbb{R}_{+}$,
\[
\bar{P}(x,w;{\rm d}y,{\rm d}u):=a(x,y)q(x,{\rm d}y)\tilde{\pi}_{y}({\rm d}u)+\delta_{x,w}({\rm d}y,{\rm d}u)\rho(x).
\]
Then for any $p\in(1,\infty],q\ge1$ such that $p^{-1}+q^{-1}=1$,
any $s>0$, and any $f\in\mathrm{L}^{2p}(\tilde{\pi})\subset\ELL_{0}(\tilde{\pi})$,
\[
\mathcal{E}(\bar{P},f)\leq s\mathcal{E}(\tilde{P},f)+\frac{1}{2}\cdot\Phi_{p}(f)\left(2\int_{\mathsf{X}}\tilde{\pi}_{x}(w\geq s)\,\pi({\rm d}x)\right)^{1/q},
\]
with $\Phi_{p}(f)$ given in (\ref{eq:=00005CPhi_p}).
\end{thm}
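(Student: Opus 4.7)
The plan is to compare the integrands of $\mathcal{E}(\bar{P},f)$ and $\mathcal{E}(\tilde{P},f)$ pointwise on a ``good'' region of the joint space and invoke Proposition~\ref{prop:Phi-is-2p-norm} on the complementary ``bad'' region, with symmetry under $(x,w)\leftrightarrow(y,u)$ playing a crucial role.

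First, since $P$ is the Metropolis--Hastings kernel with target $\pi$ and proposal $q$, it is $\pi$-reversible, from which one checks directly that $\bar{P}$ is $\tilde{\pi}$-reversible; $\tilde{P}$ is $\tilde{\pi}$-reversible by the standard pseudo-marginal construction. Writing out the off-diagonal integrands using $\tilde{\pi}_y({\rm d}u)=u\,Q_y({\rm d}u)$ and $w\cdot[1\wedge\mathfrak{r}(x,y)u/w]=w\wedge\mathfrak{r}(x,y)u$, the two Dirichlet forms differ pointwise by the factor $wu\,[1\wedge\mathfrak{r}(x,y)]/(w\wedge\mathfrak{r}(x,y)u)$. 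A short case analysis on the signs of $\mathfrak{r}(x,y)-1$ and $w-\mathfrak{r}(x,y)u$ yields the pointwise bound
\[
wu\,[1\wedge\mathfrak{r}(x,y)] \leq (w\vee u)\,(w\wedge\mathfrak{r}(x,y)u).
\]

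Next, by reversibility the joint measures $\tilde{\pi}\otimes\bar{P}$ and $\tilde{\pi}\otimes\tilde{P}$ are symmetric under the swap $(x,w)\leftrightarrow(y,u)$, and since $[f(y,u)-f(x,w)]^2$ is also symmetric, each Dirichlet form equals the integral of its off-diagonal integrand over the half-space $\{w\geq u\}$ (the factor $1/2$ cancels with the doubling from symmetrization, under the standard assumption that $\{w=u\}$ is null). On this half-space $w\vee u=w$, so the above inequality becomes $wu\,a(x,y)\leq w\,(w\wedge\mathfrak{r}(x,y)u)$.

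Finally, split $\{w\geq u\}$ into $G(s):=\{w\geq u,\,w<s\}$ and $B(s):=\{w\geq u,\,w\geq s\}$. On $G(s)$ the upgraded pointwise inequality $wu\,a(x,y)\leq s\,(w\wedge\mathfrak{r}(x,y)u)$ lets us bound the corresponding piece of $\mathcal{E}(\bar{P},f)$ by $s\,\mathcal{E}(\tilde{P},f)$. On $B(s)$, apply Proposition~\ref{prop:Phi-is-2p-norm} with $\mu=\tilde{\pi}$, $P=\bar{P}$, and $A=B(s)$, yielding a bound of $[\tilde{\pi}\otimes\bar{P}(B(s)\cap\{(x,w)\neq(y,u)\})]^{1/q}\Phi_p(f)$; a direct computation shows this joint measure is dominated by the marginal $\tilde{\pi}(\{w\geq s\})=\int\tilde{\pi}_x(w\geq s)\,\pi({\rm d}x)$. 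Summing the two contributions yields the theorem. The main technical point is the pointwise inequality combined with the symmetrization step: a naive application of Theorem~\ref{thm:practical-comparison-P1-P2} with $\varepsilon(x,w;y,u)=1/(w\vee u)$ would force one to bound separately the measure of $\{u\geq s\}$, producing a spurious factor in the error term that the symmetrization precisely cancels.
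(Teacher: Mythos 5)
Your proof is correct and takes a genuinely different route from the paper's. The paper's proof invokes Theorem~\ref{thm:practical-comparison-P1-P2} directly with $\varepsilon(w,u)=w^{-1}\wedge u^{-1}$ (relying on the elementary inequality $1\wedge(ab)\geq(1\wedge a)(1\wedge b)$), which forces the ``bad'' set to be $\{w\vee u\geq s\}=\{w\geq s\}\cup\{u\geq s\}$; the union bound then yields $2\int\tilde{\pi}_x(w\geq s)\,\pi(\dif x)$, i.e.\ a spurious factor $2^{1/q}$ relative to the stated bound. Your symmetrization step circumvents exactly this: after verifying the pointwise inequality $wu\,(1\wedge\mathfrak{r})\leq(w\vee u)\,(w\wedge\mathfrak{r}u)$ (which checks out by the case analysis you sketched), restricting to $\{w\geq u\}$ replaces $w\vee u$ by $w$, so the bad set collapses to $\{w\geq s\}$ with mass exactly $\int\tilde{\pi}_x(w\geq s)\,\pi(\dif x)$. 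This is tighter and matches the theorem statement as printed; you have correctly identified this as the point of departure. One small remark: the ``$\{w=u\}$ is null'' hypothesis you flag is not actually needed. Since both $\tilde{\pi}\otimes\bar{P}$ and $\tilde{\pi}\otimes\tilde{P}$ are symmetric under the swap, both Dirichlet forms decompose identically as $\int_{\{w>u\}}(\cdot)+\tfrac{1}{2}\int_{\{w=u\}}(\cdot)$ of their respective integrands times $(f(y,u)-f(x,w))^2$; applying your pointwise bound on $\{w>u\}$ and on $\{w=u\}$ separately (where the common factor $\tfrac12$ cancels on both sides) gives the conclusion without any atomlessness assumption, at the cost of at most a crude bound $\tfrac12\int_{\{w=u,\,w\geq s\}}\leq\int_{\{w=u,\,w\geq s\}}$ in the bad-set contribution.
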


\begin{proof}
We apply Theorem~\ref{thm:practical-comparison-P1-P2}. Let $\varepsilon(w,u):=w^{-1}\wedge u^{-1}$,
then for any $(x,w)\in\mathsf{E}$ and $B\in\mathcal{F}$,
\begin{align*}
\int_{B\setminus\{x,w\}}\varepsilon(w,u)\bar{P}(x,w;{\rm d}y,{\rm d}u)\\
=\int_{B} & q(x,\dif y)\tilde{\pi}_{y}({\rm d}u)a(x,y)(w^{-1}\wedge u^{-1})\\
=\int_{B} & q(x,{\rm d}y)Q_{y}({\rm d}u)u\,a(x,y)(w^{-1}\wedge u^{-1})\\
=\int_{B} & q(x,{\rm d}y)Q_{y}({\rm d}u)a(x,y)(1\wedge\frac{u}{w})\\
\leq\int_{B} & q(x,{\rm d}y)Q_{y}({\rm d}u)[1\wedge\bigl(\mathfrak{r}(x,y)\frac{u}{w}\bigr)],\\
= & \tilde{P}(x,w;B\setminus\{x,w\}).
\end{align*}
where we have used that $1\wedge(ab)\geq(1\wedge a)(1\wedge b)$ for
$a,b\geq0$. Now for $s>0$ let
\begin{align*}
A(s) & :=\left\{ (w,u)\in\R_{+}^{2}\colon w^{-1}\wedge u^{-1}>1/s\right\} ,\\
\bar{A}(s) & :=\left\{ (x,w,y,u)\in\mathsf{E}\times\mathsf{E}\colon w^{-1}\wedge u^{-1}>1/s\right\} .
\end{align*}
Then,
\begin{align*}
\mu\otimes\bar{P}\big(\bar{A}^{\complement}(s)\cap\{(X,W)\neq(Y,U)\}\big)\leq & \int_{\bar{A}(s)^{\complement}}a(x,y)\pi({\rm d}x)q(x,{\rm d}y)\tilde{\pi}_{x}({\rm d}w)\tilde{\pi}_{y}({\rm d}u)\\
= & \int_{\mathsf{X}^{2}}\left[a(x,y)\int_{A(s)^{\complement}}\tilde{\pi}_{x}(\dif w)\tilde{\pi}_{y}(\dif u)\right]\pi({\rm d}x)q(x,{\rm d}y),
\end{align*}
and
\begin{align*}
\int_{A(s)^{\complement}}\tilde{\pi}_{x}({\rm d}w)\tilde{\pi}_{y}({\rm d}u) & =1-\tilde{\pi}_{x}(w\leq s)\tilde{\pi}_{y}(u\leq s)\\
 & =1-[1-\tilde{\pi}_{x}(w>s)][1-\tilde{\pi}_{y}(u>s)]\\
 & \leq\tilde{\pi}_{x}(w\geq s)+\tilde{\pi}_{y}(u\geq s).
\end{align*}
Therefore,
\[
\mu\otimes\bar{P}\big(\bar{A}(s)^{\complement}\cap\{(X,W)\neq(Y,U)\}\big)\leq2\int\tilde{\pi}_{x}(w\geq s)\pi({\rm d}x).
\]
We conclude.
\end{proof}

We are now in a position to apply Proposition\ \ref{prop:chaining-with-SPI}
or Theorem\ \ref{thm:chaining-with-WPI}. We will see that the tail
behaviour of the perturbations governs the rate at which our bound
on $\|\tilde{P}^{n}f\|_{2}$ vanishes as $n\rightarrow\infty$. For
simplicity, for the remainder of this section we focus on the case
where $\Phi(f)=\|f\|_{{\rm osc}}^{2}$.
\begin{cor}
\label{cor:pseudo-mar_bound}When $\bar{P}$ satisfies a strong Poincaré
inequality with constant $C_{\mathrm{P}}$ as in Remark~\ref{rem:strong_PI},
Proposition~\ref{prop:chaining-with-SPI} establishes that $\tilde{P}$
satisfies Definition~\ref{def:WPI_rev} with $\beta(s)=\beta'(C_{\mathrm{P}}s)/C_{P}$
where $\beta'(s)=\int\tilde{\pi}_{x}(w\geq s)\pi({\rm d}x)$ and $\Phi(f)=\|f\|_{{\rm osc}}^{2}$.
Consequently, Theorem~\ref{thm:rev_pos_conv} applies to $\tilde{P}$
with a rate determined by $\beta(s)$. 

Furthermore, using Markov's inequality, the existence of moments of
$W$ under $\tilde{\pi}_{x}$ of order $k\in\mathbb{N_{*}}$ implies
\[
\beta'(s)\leq s^{-k}\int_{\mathsf{X}}\mathbb{E}_{\tilde{\pi}_{x}}\left[|W|^{k}\right]\pi({\rm d}x).
\]
Provided the integral is finite, this leads to a polynomial rate of
convergence $O(n^{-k})$ by Lemma~\ref{lem:rate-beta-decays-polynomial}.

Similarly, if $\bar{P}$ satisfies a weak Poincaré inequality, one
can apply Theorem~\ref{thm:chaining-with-WPI} and deduce the new
rate of convergence as in Example~\ref{exa:chaining-two-polynomially-ergodic}.
\end{cor}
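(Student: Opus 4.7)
The plan is to string together three already-established results: the Dirichlet-form comparison between $\bar P$ and $\tilde P$ (Theorem~\ref{thm:dirichlet-comparisons}), the chaining Propositions~\ref{prop:chaining-with-SPI}/\ref{prop:chaining-with-WPI}, and the subgeometric rate Theorem~\ref{thm:rev_pos_conv} together with Lemma~\ref{lem:rate-beta-decays-polynomial}. Throughout we work with $p=\infty$, $q=1$, so that $\Phi_p(f)=\|f\|_{\osc}^{2}$, and we exploit the standing assumption that $\tilde P$ is reversible and positive.

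First, I would specialize Theorem~\ref{thm:dirichlet-comparisons} to $p=\infty$ to obtain, for every $s>0$ and every $f\in\ELL_0(\tilde\pi)$,
\[
\mathcal{E}(\bar P,f)\leq s\,\mathcal{E}(\tilde P,f)+\beta'(s)\|f\|_{\osc}^{2},\qquad \beta'(s):=\int_{\mathsf{X}}\tilde\pi_x(w\geq s)\,\pi(\dif x).
\]
This identifies the two ingredients required by Proposition~\ref{prop:chaining-with-SPI} with $T_1=\bar P$, $T_2=\tilde P$, $\Phi=\|\cdot\|_{\osc}^{2}$ and the given $\beta'$; note that $\beta'$ is decreasing and, by dominated convergence applied to $s\mapsto\tilde\pi_x(w\ge s)$, satisfies $\beta'(s)\downarrow 0$ as $s\to\infty$ whenever $\tilde\pi_x(\{+\infty\})=0$ for $\pi$-a.e.\ $x$, which holds by construction of $\tilde\pi_x$. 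Feeding the hypothesised strong Poincaré inequality $C_{\mathrm P}\|f\|_2^{2}\leq\mathcal E(\bar P,f)$ into Proposition~\ref{prop:chaining-with-SPI} then yields the super-Poincaré inequality for $\tilde P$ with $\beta(s)=\beta'(C_{\mathrm P}s)/C_{\mathrm P}$ exactly as claimed; since $\tilde P$ is reversible and positive, Theorem~\ref{thm:rev_pos_conv} is directly applicable and delivers the convergence bound $\|\tilde P^n f\|_2^{2}\leq\|f\|_{\osc}^{2}F_a^{-1}(n)$.

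For the polynomial specialisation, Markov's inequality gives $\tilde\pi_x(w\geq s)\leq s^{-k}\mathbb{E}_{\tilde\pi_x}[|W|^{k}]$; integrating against $\pi$ produces the stated upper bound on $\beta'(s)$. Substituting into $\beta(s)=\beta'(C_{\mathrm P}s)/C_{\mathrm P}$ shows that $\beta$ is bounded above by a constant multiple of $s^{-k}$, so Lemma~\ref{lem:rate-beta-decays-polynomial} (invoked via the rescaling Lemma~\ref{lem:rescaling-beta-rescaling-invF}) yields $F_a^{-1}(n)=O(n^{-k})$. The final clause, treating the case where $\bar P$ itself only satisfies a weak Poincaré inequality with some function $\beta_1$, proceeds identically except that the comparison $\mathcal E(\bar P,f)\leq s\mathcal E(\tilde P,f)+\beta'(s)\|f\|_{\osc}^{2}$ is now chained with the weak Poincaré inequality for $\bar P$ via Proposition~\ref{prop:chaining-with-WPI}, with $\beta_2=\beta'$ and $\Phi_1=\Phi_2=\|\cdot\|_{\osc}^{2}$; the resulting $\beta$ is the infimal convolution produced there, and Example~\ref{exa:chaining-two-polynomially-ergodic} describes the composition rule in the polynomial case.

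The only nontrivial check, and therefore the main obstacle, is verifying the hypotheses of Proposition~\ref{prop:chaining-with-WPI} in the weak Poincaré clause: in particular one must confirm that $\Phi_1(\tilde P^n f)\leq\Phi_1(f)$, which for $\Phi_1=\|\cdot\|_{\osc}^{2}$ follows from the contractivity of Markov kernels on $\mathrm{L}^\infty$. Everything else is a direct citation; there is no genuinely new inequality to prove, just careful bookkeeping of constants under the rescaling $\beta\mapsto\beta'(C_{\mathrm P}\,\cdot\,)/C_{\mathrm P}$ and the identification of $\Phi$ throughout.
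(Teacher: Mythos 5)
Your proposal is correct and tracks exactly the reasoning the paper intends (the corollary has no separate proof; the ingredients are precisely Theorem~\ref{thm:dirichlet-comparisons} with $p=\infty$, $q=1$, fed into Proposition~\ref{prop:chaining-with-SPI} with $T_1=\bar P$, $T_2=\tilde P$, then Theorem~\ref{thm:rev_pos_conv}, Markov's inequality and Lemma~\ref{lem:rate-beta-decays-polynomial}, and Proposition~\ref{prop:chaining-with-WPI} plus Example~\ref{exa:chaining-two-polynomially-ergodic} for the weak-Poincar\'e clause). The only remark worth adding is that the check you flag as ``the main obstacle''---that $\|\tilde P^n f\|_{\osc}\le\|f\|_{\osc}$---is routine, and that the constant in $\beta'$ here matches the statement of Theorem~\ref{thm:dirichlet-comparisons} as printed rather than the sharper bound with a factor of $2$ that its proof actually produces, but that discrepancy is in the paper itself and not something your argument introduces.
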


\begin{rem}
\label{rem:pm-bounded-strong-pi}Notice that when the perturbations
are uniformly bounded, i.e. there exists $\bar{w}$ such that for
all $x\in\mathsf{X}$, $\tilde{\pi}_{x}(w\geq\bar{w})=0$, and $\bar{P}$
satisfies a strong Poincaré inequality, then $C_{{\rm P}}\|f\|^{2}\leq\mathcal{E}(\bar{P},f)\leq\bar{w}\mathcal{E}(\tilde{P},f)$
and we recover the known results of \cite{Andrieu09,Andrieu15}. 

\textcolor{black}{Examples of chains for which $\bar{P}$ satisfies
a strong Poincaré inequality are numerous; the IMH and random walk
Metropolis algorithms often possess a spectral gap; see \cite{Andrieu15}
where these examples are considered in the context of pseudo-marginal
algorithms.}
\end{rem}

We provide a general result demonstrating that under very weak conditions
pseudo-marginal convergence can be made arbitrarily close to marginal
convergence, strengthening the result of \cite[Section~4]{Andrieu09}.
\begin{rem}
\label{rem:pseudo-cv-marginal-general}Assume that there is a parameter
$\iota>0$ controlling the quality of the perturbations $W\sim\tilde{\pi}_{x,\iota}$
such that for each $x\in\mathsf{X}$, $W$ converges in probability
to $1$ as $\iota\to0$:
\[
\lim_{\iota\to0}\tilde{\pi}_{x,\iota}(w\geq s)=\mathbb{I}\{s\leq1\},\qquad x\in\mathsf{X},s>0.
\]
Let
\[
\beta_{\iota}(s):=\int_{\mathsf{X}}\tilde{\pi}_{x,\iota}(w\geq s)\,\pi({\rm d}x),
\]
then the bounded convergence theorem implies that
\[
\lim_{\iota\to0}\beta_{\iota}(s)=\mathbb{I}\{s\leq1\},\qquad s>0.
\]
Assume now that $\bar{P}$ satisfies a weak Poincaré inequality with
function $\bar{\beta}$. Similar to Example~\ref{exa:cvg-super-to-strong},
one can compare the convergence bounds for $\tilde{P}_{\iota}$ and
$\bar{P}$ via their respective functions $\tilde{\beta}_{\iota}$
and $\bar{\beta}$. Indeed, Theorem~\ref{thm:dirichlet-comparisons}
and Theorem~\ref{thm:chaining-with-WPI} imply that $\tilde{P}_{\iota}$
satisfies a \textcolor{red}{weak} Poincaré inequality with function
\[
\tilde{\beta}_{\iota,\epsilon}(s)=\frac{s}{1+\epsilon}\beta_{\iota}(1+\epsilon)+\bar{\beta}\left(\frac{s}{1+\epsilon}\right),
\]
where $\epsilon>0$ is arbitrary. Note that for $s>0$, $\tilde{\beta}_{\iota,\epsilon}(s)\geq\bar{\beta}\big(s/(1+\epsilon)\big)\geq\bar{\beta}(s)$
and since $\lim_{\iota\rightarrow0}\tilde{\beta}_{\iota}(s)=\bar{\beta}\left(s/(1+\epsilon)\right)$
we can apply Proposition~\ref{prop:seq_beta_iota} to obtain convergence
bounds arbitrarily close to those of $\bar{P}$ with rate function
$\bar{\beta}$.
\end{rem}

\subsection{The effect of averaging\label{subsec:The-effect-of}}

A natural idea to reduce the variability of pseudo-marginal chains
is to average several estimators $\hat{\pi}$ of the target density
at each iteration. As pointed out in \cite{Andrieu15}, this is unlikely
to affect asymptotic rates of convergence. Furthermore, it was established
in \cite{bornn2014one,sherlock-2017} that when considering asymptotic
variance, it is preferable to combine the output of $N$ independent
chains each using $1$ estimator, rather than running $1$ chain averaging
$N$ estimators at each iteration. The following, motivated by the
application of Markov's inequality, adds nuance to these conclusions
by showing how bias can be reduced by averaging, particularly in situations
where higher order moments of the perturbations are large.
\begin{lem}
\label{lem:averaging}Let $\{W_{i}\}$ be i.i.d., of expectation $1$
and such that for a given $p\in\mathbb{N}$ with $p\ge2$, $\mathbb{E}\left(|W_{1}|^{p}\right)<\infty$.
Then there are some constants $\left\{ C_{p,k}\right\} $, such that
for any $N\in\mathbb{N}$,
\begin{equation}
\mathbb{E}\left[\left|\frac{1}{N}\sum_{i=1}^{N}W_{i}\right|^{p}\right]\le1+\sum_{k=2}^{p}N^{-k/2}C_{p,k}\Ebb\left[|W_{1}-1|^{k}\right].\label{eq:averaging_bd}
\end{equation}
For large $N$, this bound is $1+O(N^{-1})$.
\end{lem}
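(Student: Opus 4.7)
The plan is to center the weights by setting $Z_i := W_i - 1$, so that $\mathbb{E}[Z_i]=0$ and $\bar W_N := N^{-1}\sum_{i=1}^N W_i = 1 + N^{-1}S_N$ with $S_N := \sum_{i=1}^N Z_i$. In the intended pseudo-marginal application the $W_i$ are non-negative (the general case goes through verbatim when $p$ is even), so $|\bar W_N|^p = (1 + N^{-1}S_N)^p$ and the binomial theorem gives
\begin{equation*}
\mathbb{E}[|\bar W_N|^p] = \sum_{k=0}^p \binom{p}{k} N^{-k} \mathbb{E}[S_N^k].
\end{equation*}
The $k=0$ term contributes $1$ and the $k=1$ term vanishes because $\mathbb{E}[S_N]=0$; this vanishing is the mechanism producing the $O(N^{-1})$ leading rate of decay around the mean.

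The core step will be to establish, for each $k \in \{2,\ldots,p\}$, the moment bound
\begin{equation*}
|\mathbb{E}[S_N^k]| \leq c_k\, N^{\lfloor k/2 \rfloor}\, \mathbb{E}[|W_1 - 1|^k].
\end{equation*}
I would derive this directly from the multinomial expansion $S_N^k = \sum_{i_1,\ldots,i_k} Z_{i_1}\cdots Z_{i_k}$, grouping $k$-tuples $(i_1,\ldots,i_k)$ according to the partition of $\{1,\ldots,k\}$ that records which indices coincide. Because $\mathbb{E}[Z_i]=0$, any partition containing a singleton block contributes $0$, so only partitions whose blocks all have size at least $2$ survive; each such partition has at most $\lfloor k/2 \rfloor$ blocks, hence the number of admissible index tuples is $O(N^{\lfloor k/2 \rfloor})$. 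H\"older's inequality then gives $\prod_j \mathbb{E}[|Z_1|^{n_j}] \leq \mathbb{E}[|Z_1|^k]$ whenever the block sizes $n_j \geq 2$ sum to $k$, and $c_k$ can be taken as a finite combinatorial constant depending only on $k$.

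Combining, $N^{-k}|\mathbb{E}[S_N^k]| \leq c_k\, N^{-\lceil k/2 \rceil} \mathbb{E}[|W_1-1|^k] \leq c_k\, N^{-k/2}\mathbb{E}[|W_1-1|^k]$. Since summands with $\mathbb{E}[S_N^k]<0$ (possible for odd $k$) only decrease the left-hand side of the binomial identity, replacing them by their absolute values preserves the upper bound and yields
\begin{equation*}
\mathbb{E}[|\bar W_N|^p] \leq 1 + \sum_{k=2}^p \binom{p}{k}\, c_k\, N^{-k/2}\, \mathbb{E}[|W_1-1|^k],
\end{equation*}
so one can take $C_{p,k} := \binom{p}{k}\,c_k$. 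For large $N$ the $k=2$ term dominates, contributing $\binom{p}{2}\,c_2\, N^{-1}\,\Var(W_1)$ at leading order and all higher-$k$ contributions decaying strictly faster.

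The only step demanding care will be the combinatorial moment estimate in the second paragraph; this is a classical variant of the Marcinkiewicz--Zygmund/Rosenthal-type inequality, here obtained by direct partition counting rather than invoking them, and does not pose a genuine obstacle. The sign handling for odd $k$ is benign, since those contributions appear with their natural sign and bounding them by their absolute values keeps the inequality one-sided.
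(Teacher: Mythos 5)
Your proof is correct and reaches the same conclusion, but the key moment estimate is established by a genuinely different argument. Where the paper invokes the Marcinkiewicz--Zygmund inequality, $\Ebb\big[|N^{-1}\sum_i \overline W_i|^k\big] \le B_k\,\Ebb\big[(N^{-2}\sum_i \overline W_i^2)^{k/2}\big]$, followed by Minkowski's inequality with exponent $k/2$ to extract the factor $N^{-k/2}\,\Ebb[|\overline W_1|^k]$, you instead bound $|\Ebb[S_N^k]|$ directly through a multinomial expansion and a partition-counting argument: only set partitions of $\{1,\dots,k\}$ with all blocks of size at least two survive after taking expectations (since $\Ebb Z_i = 0$), such a partition has at most $\lfloor k/2\rfloor$ blocks so contributes at most $c_k N^{\lfloor k/2\rfloor}$ index tuples, and Lyapunov's inequality gives $\prod_j \Ebb[|Z_1|^{n_j}] \le \Ebb[|Z_1|^k]$ when the block sizes $n_j\ge 2$ sum to $k$. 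Since $\lfloor k/2\rfloor - k = -\lceil k/2\rceil \le -k/2$, this yields the same $N^{-k/2}$ decay. Your route is essentially a self-contained proof of the Rosenthal-type bound that the paper treats as a black box; it is more elementary and makes the combinatorial source of the rate transparent, at the cost of a slightly longer argument. Both proofs handle the shared framing (centering, binomial theorem, vanishing $k=1$ term, absolute-value bound for odd $k$, nonnegativity of the $W_i$) in the same way, and both correctly produce the $1+O(N^{-1})$ asymptotic via the dominant $k=2$ term.
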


As an illustration, we focus here on the scenario where the marginal
chain satisfies a strong Poincaré inequality (Remark~\ref{rem:strong_PI})
and the moments are uniformly bounded in $x\in\mathsf{X}$. Let $\mathcal{W}_{N}:=N^{-1}\sum_{i=1}^{N}W_{i}$,
then Markov's inequality implies that for the pseudo-marginal algorithm
which averages $N$ estimators,
\begin{align*}
\beta'_{N}(s) & \leq\left[\sup_{x\in\mathsf{X}}\mathbb{E}_{\tilde{\pi}_{x}}\big(\mathcal{W}_{N}^{p}\big)\right]s^{-p},
\end{align*}
and while the rate of convergence in $s$ is independent of $N$,
the multiplicative constant in square brackets does depend on $N$.
Indeed, by Lemma~\ref{lem:averaging}, averaging by choosing $N>1$
can reduce its magnitude and reduce our convergence upper bounds in
Theorem~\ref{thm:rev_pos_conv}, thanks to Lemma~\ref{lem:rescaling-beta-rescaling-invF}.
The bound obtained in (\ref{eq:averaging_bd}) suggests that while
the asymptotic rate of decay for large $N$ is governed by the term
$\Ebb\left[\left|W_{1}-1\right|^{2}\right]N^{-1}$, inversely proportional
to the increased computational cost at each MCMC iteration, higher
order moments may play an important role for small to moderate values
of $N$. 

This is expected for heavy-tailed distributions: for example, consider
an inverse gamma distribution of expectation $1$ and shape parameter
$\mathsf{s}>1$. Its raw (polynomial) moments grow very rapidly as
$\mathbb{E}\left(W_{1}^{k}\right)=(\mathsf{s}-1)^{k}/\prod_{i=1}^{k}(\mathsf{s}-i)$
for $k\in\mathbb{N}$, $k<\mathsf{s}$ and $\mathsf{s}$ large, and
for small and moderate values of $N$, summands other than $k=2$
in (\ref{eq:averaging_bd}) will be most prominent.

\subsection{\label{subsec:ABC-example}ABC example}

We consider an Approximate Bayesian Computation (ABC) setting, using
notation inspired by \cite{Lee14}. We assume we have a true posterior
density $\pi_{0}(x)\propto\nu(x)\ell_{y}(x)$ on a space $\mathsf{X}\subset\mathbb{R}^{d_{x}}$,
where $\nu(\cdot)$ represents the prior and $x\mapsto\ell_{y}(x)$
is an intractable likelihood corresponding to a probability density
$f_{x}(y)=\ell_{y}(x)$ for some fixed observations $y\in\mathsf{Y}\subset\R^{\mathrm{d}}$.
It is known that ABC Markov chains of the type considered here cannot
be geometrically ergodic under fairly weak conditions when a ``local
proposal'' is used \cite[Theorem 2]{Lee14}.

Fix an $\epsilon>0$ and $x\in\mathcal{\mathsf{X}}$, and for $j=1,\dots,N$,
let $z_{j}\overset{{\rm iid}}{\sim}f_{x}(\cdot)$ be auxiliary random
variables and define the random variables $W_{j}$, where $\left|\cdot\right|$
denotes the Euclidean norm, 
\[
W_{j}=\begin{cases}
1/\ell_{{\rm ABC}}(x) & \text{if }\left|z_{j}-y\right|<\epsilon,\\
0 & \text{else,}
\end{cases}
\]
with $\ell_{{\rm ABC}}(x):=\mathbb{P}_{x}(\left|z_{1}-y\right|<\epsilon)$.
In an ABC setup, the intractable $\pi_{0}$ is replaced with the ABC
posterior $\pi(x)\propto\nu(x)\ell_{{\rm ABC}}(x)$, which is typically
also intractable and itself approximated using a pseudo-marginal approach:
for fixed $N\in\mathbb{N}$, define
\begin{align*}
\tilde{\pi}(x,z_{1},\ldots,z_{N}) & \propto\nu(x)\ell_{{\rm ABC}}(x)\left[\prod_{j=1}^{N}f_{x}(z_{j})\right]\cdot\frac{1}{N}\sum_{j=1}^{N}W_{j}.
\end{align*}
It is easily seen that $\mathcal{W}_{N}:=\frac{1}{N}\sum_{j=1}^{N}W_{j}$
has expectation 1 under $\big[\prod_{j=1}^{N}f_{x}(z_{j})\big]{\rm d}z_{1}\times\cdots\times{\rm d}z_{N}$
for a fixed $x\in\mathsf{X}$. In our previous notation, $Q_{x}(\dif w)$
is then the law of $\mathcal{W}_{N}$ when the $(z_{1},\ldots,z_{N})$
are drawn from $\big[\prod_{j=1}^{N}f_{x}(z_{j})\big]{\rm d}z_{1}\times\cdots\times{\rm d}z_{N}$,
and $\tilde{\pi}_{x}(\dif w)=wQ_{x}(\dif w)$. Given $x\in\mathsf{X}$,
it is clear that under $Q_{x}$, we have that $\ell_{{\rm ABC}}(x)\sum_{j=1}^{N}W_{j}\sim\mathrm{Bin}\big(N,\ell_{{\rm ABC}}(x)\big)$.

Thus from our previous result, Corollary~\ref{cor:pseudo-mar_bound},
in order to bound the rate of convergence of the resulting pseudo-marginal
algorithm, we need to bound for $s>0$,
\[
\int_{\mathsf{X}}\pi(\dif x)\tilde{\pi}_{x}(\mathcal{W}_{N}\ge s).
\]
So given $x\in\mathcal{\mathsf{X}}$, $s>0$, we first consider $\tilde{\pi}_{x}(\mathcal{W}_{N}\ge s)$.
Using Markov's inequality, for any $p\in\mathbb{N}$, we can bound
\[
\tilde{\pi}_{x}(\mathcal{W}_{N}\ge s)\le\frac{\tilde{\pi}_{x}\left[\mathcal{W}_{N}^{p}\right]}{s^{p}}=\frac{Q_{x}\left[\mathcal{W}_{N}^{p+1}\right]}{s^{p}}.
\]
This seems to suggest that if the marginal algorithm is geometrically
ergodic, then its ABC approximation converges at any polynomial rate.
The following result tells us that this may not be the case.
\begin{prop}
\label{prop:ABC-example}For a given $p\in\mathbb{N}$, suppose that
$\int_{\mathsf{X}}\nu(x)\ell_{{\rm ABC}}^{-(p-1)}(x)\dif x<\infty$.
Then there is $C_{N,p}>0$ such that for all $s>0$,
\[
\int_{\mathcal{\mathsf{X}}}\pi(\dif x)\tilde{\pi}_{x}(\mathcal{W}_{N}\ge s)\le C_{N,p}s^{-p},
\]
and as $N\to\infty$, $C_{N,p}=1+O(1/N)$. In particular, we may always
take $p=1$. The resulting convergence rate for the pseudo-marginal
chain is then also $O(n^{-p})$ as in Lemma~\ref{lem:rate-beta-decays-polynomial}.
\end{prop}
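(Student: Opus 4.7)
My strategy begins from the Markov bound already displayed just above the proposition, which reduces matters to showing that $C_{N,p}:=\int_{\mathsf{X}}\pi(\dif x)\,Q_{x}[\mathcal{W}_{N}^{p+1}]$ is finite and of the form $1+O(1/N)$. The key observation is that under $Q_{x}$ the quantity $\ell_{\mathrm{ABC}}(x)\sum_{j=1}^{N}W_{j}$ is $\mathrm{Bin}\big(N,\ell_{\mathrm{ABC}}(x)\big)$-distributed, so writing $q:=\ell_{\mathrm{ABC}}(x)$ and $B\sim\mathrm{Bin}(N,q)$ one has $\mathcal{W}_{N}=B/(Nq)$, and the problem reduces to computing moments of $B$.

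I would then use the Stirling-number-of-the-second-kind expansion $B^{p+1}=\sum_{k=1}^{p+1}S(p+1,k)\,(B)_{k}$, with $(B)_{k}:=B(B-1)\cdots(B-k+1)$, combined with the standard identity $\mathbb{E}[(B)_{k}]=(N)_{k}\,q^{k}$, to obtain
\[
Q_{x}[\mathcal{W}_{N}^{p+1}]=\sum_{k=1}^{p+1}S(p+1,k)\,\frac{(N)_{k}}{N^{p+1}}\,q^{k-(p+1)}.
\]
Integrating against $\pi(\dif x)\propto\nu(x)\,\ell_{\mathrm{ABC}}(x)\,\dif x$ absorbs one factor of $q$, so with $Z:=\int\nu\,\ell_{\mathrm{ABC}}\,\dif x$ each summand becomes a constant multiple of $Z^{-1}\int\nu(x)\,\ell_{\mathrm{ABC}}(x)^{k-p}\,\dif x$.

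The $k=p+1$ term contributes $S(p+1,p+1)\cdot(N)_{p+1}/N^{p+1}=1-O(1/N)$. For $1\le k\le p$, since $\ell_{\mathrm{ABC}}\le 1$ the singular factor satisfies $\ell_{\mathrm{ABC}}^{k-p}\le\ell_{\mathrm{ABC}}^{1-p}$, and the hypothesis $\int\nu\,\ell_{\mathrm{ABC}}^{-(p-1)}\,\dif x<\infty$ controls every such integral; moreover each such term carries a prefactor $(N)_{k}/N^{p+1}=O(N^{k-p-1})=O(1/N)$ because $k\le p$. Summing these contributions gives $C_{N,p}<\infty$ and $C_{N,p}=1+O(1/N)$, and the convergence-rate claim for the pseudo-marginal chain follows from Corollary~\ref{cor:pseudo-mar_bound}. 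I do not anticipate deep obstacles; the main point to track is the interplay between the normalisation $Z$ and the powers of $\ell_{\mathrm{ABC}}$ generated by binomial moments, with the integrability assumption calibrated precisely to absorb the worst singular factor and with $p=1$ requiring no extra condition since $\ell_{\mathrm{ABC}}^{0}=1$.
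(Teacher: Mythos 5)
Your proposal is correct and follows essentially the same route as the paper: reduce via Markov's inequality to $C_{N,p}=\int\pi(\dif x)\,Q_x[\mathcal{W}_N^{p+1}]$, express the binomial raw moments as $\sum_k c_{p+1,k}N^{\underline{k}}\ell_{\mathrm{ABC}}(x)^k$ (with $c_{p+1,k}$ being Stirling numbers of the second kind, exactly your $S(p+1,k)$), and observe that the $k=p+1$ term yields the leading $1-O(1/N)$ while the lower-order terms are $O(1/N)$ and integrable by hypothesis. You simply spell out the final step (``the result follows immediately'') that the paper leaves implicit, noting $S(p+1,p+1)=1$ and using $\ell_{\mathrm{ABC}}\le1$ to bound all singular factors by the worst case $k=1$, which is exactly what the integrability assumption is calibrated to control.
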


\subsection{Products of averages}

The results in Sections~\ref{subsec:The-effect-of}--\ref{subsec:ABC-example}
suggest that $N$ may not need to be taken too large in the case of
simple averaging. We consider here a scenario where the perturbation
is instead a product of $T$ independent averages, which gives different
conclusions, and can be seen as a simple version of the perturbation
involved in a particle marginal Metropolis--Hastings (PMMH) algorithm
\cite{Andrieu10}, a special case of a pseudo-marginal algorithm.
Such scenarios can arise in random effects and latent variable models.
For example, \cite[Section 4.1]{tran-2016} uses a random effects
model from \cite[Section 6.1]{donohue-2011} to analyze the data from
\cite{greenberg-1990}, while \cite[Section 4.2]{Lee-2019} considers
an ABC example with i.i.d. data and \cite[Section 4.3]{Lee-2019}
considers a single-cell gene expression model proposed by \cite{Peccoud-95}
and employed, e.g., by \cite{Tiberi-2018}. 

The following bound can be used in Corollary~\ref{cor:pseudo-mar_bound},
and indicates that it is sufficient to take $N$ proportional to $T$
to obtain $T$-independent bounds on the relevant tail probabilities
as long as $\pi$ is sufficiently concentrated. 

\begin{prop}
\label{prop:prod-avg-bound}Assume $W\sim Q_{x}$ can be written as
$W=\prod_{t=1}^{T}W_{t}$, where each $W_{t}$ is independent and
nonnegative, and for each $t\in\{1,\ldots,T\}$, 
\[
W_{t}=\frac{1}{N}\sum_{i=1}^{N}W_{t,i},
\]
is an average of nonnegative, identically distributed random variables
with mean $1$. Assume that for some $p\in\mathbb{N}$ with $p\ge2$,
and any $x\in\mathsf{X}$, 
\[
\max_{t\in\{1,\ldots,T\}}\mathbb{E}\left[W_{t,1}^{p}\right]<\infty.
\]
Then there exists a function $M_{p}:\mathsf{X}\to\mathbb{R}_{+}$
such that if we take
\[
N\geq\alpha T+\frac{1}{2}+\sqrt{\alpha T},
\]
for some $\alpha>0$, then
\[
\int\pi({\rm d}x)\tilde{\pi}_{x}(W\geq s)\leq s^{-p+1}\int\pi({\rm d}x)\exp\left(\frac{M_{p}(x)}{\alpha}\right),
\]
where the right-hand side may be finite or infinite depending on $\pi$.
\end{prop}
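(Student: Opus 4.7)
The plan is to combine a Markov-style tail-to-moment inequality with the product structure of $W$ and Lemma~\ref{lem:averaging} applied factor-by-factor. Since $\tilde{\pi}_x(\mathrm{d}w) = w\,Q_x(\mathrm{d}w)$ by construction, the first move is to write
\[
\tilde{\pi}_x(W\geq s) = \mathbb{E}_{Q_x}\left[W\,\mathbb{I}\{W\geq s\}\right] \leq s^{-(p-1)}\,\mathbb{E}_{Q_x}[W^p],
\]
using $W \leq s^{-(p-1)} W^p$ on the event $\{W\geq s\}$; this absorbs the extra factor of $w$ present in $\tilde{\pi}_x$ at the cost of passing from $p-1$ to $p$ in the order of the moment, which directly explains the $s^{-(p-1)}=s^{-p+1}$ appearing in the target bound.

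The second move exploits the assumed independence of the factors $W_t$ under $Q_x$ to split
\[
\mathbb{E}_{Q_x}[W^p] = \prod_{t=1}^T \mathbb{E}_{Q_x}[W_t^p],
\]
and then bounds each factor via Lemma~\ref{lem:averaging}:
\[
\mathbb{E}_{Q_x}[W_t^p] \leq 1 + \sum_{k=2}^p N^{-k/2}\,C_{p,k}\,\mathbb{E}\left[|W_{t,1}-1|^k\right].
\]
The $k$-th moments on the right are finite by the standing hypothesis, using $|W_{t,1}-1|^k \leq 2^{k-1}(W_{t,1}^k + 1)$ and Lyapunov's inequality. Applying the elementary bound $1+u \leq e^u$ to the $T$-fold product then gives
\[
\prod_{t=1}^T \mathbb{E}_{Q_x}[W_t^p] \leq \exp\!\left(T\sum_{k=2}^p N^{-k/2}\,C_{p,k}\,\max_{t}\mathbb{E}\left[|W_{t,1}-1|^k\right]\right).
\]

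The last move is to exploit the hypothesis $N \geq \alpha T + 1/2 + \sqrt{\alpha T}$, which in particular forces $N \geq \alpha T$ and $N \geq 1$, so that for each $k\in\{2,\ldots,p\}$ one has $T\,N^{-k/2} = (T/N)\cdot N^{-(k-2)/2} \leq 1/\alpha$. Setting
\[
M_p(x) := \sum_{k=2}^p C_{p,k}\,\max_{t}\mathbb{E}_{Q_x}\!\left[|W_{t,1}-1|^k\right],
\]
which depends only on $x$ and is finite by hypothesis, this yields $\mathbb{E}_{Q_x}[W^p] \leq \exp(M_p(x)/\alpha)$; integrating the combined bound $s^{-(p-1)}\exp(M_p(x)/\alpha)$ against $\pi$ finishes the argument.

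The main (and rather modest) obstacle is calibrating $N$, $T$ and $p$ so that the resulting exponent depends only on $\alpha$: Lemma~\ref{lem:averaging} decays only polynomially in $N$, yet its per-factor bound must absorb the $T$-fold product without reintroducing any $T$- or $N$-dependence on the right-hand side, which is precisely what the scaling $N \gtrsim \alpha T$ achieves. Whether $\int \pi(\mathrm{d}x)\,\exp(M_p(x)/\alpha)$ is finite then hinges only on how $M_p(x)$ grows across the support of $\pi$, which is the qualitative content reflected in the statement.
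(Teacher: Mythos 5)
Your proof is correct and follows essentially the same route as the paper: Markov's inequality passes from the $\tilde{\pi}_{x}$-tail to $Q_{x}(W^{p})$, independence factors this into $\prod_{t}Q_{x}(W_{t}^{p})$, Lemma~\ref{lem:averaging} bounds each factor, and $1+u\le e^{u}$ collapses the $T$-fold product into an exponential. The only (cosmetic) divergence is in absorbing the $N$-dependence: the paper first takes a max over $k$, extends $\sum_{k=2}^{p}N^{-k/2}$ to the infinite geometric series $1/(N-\sqrt{N})$, and uses the constraint on $N$ to get $N-\sqrt{N}\ge\alpha T$, whereas you keep the finite sum and bound each term $TN^{-k/2}\le 1/\alpha$ directly from $N\ge\alpha T$ and $N\ge 1$; both yield $\exp(M_{p}(x)/\alpha)$ for an admissible choice of $M_{p}$.
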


In particular, we can see that if the function $M_{p}$ grows quickly
in the tails of $\pi$, then the bound is finite only if $\pi$ has
sufficiently light tails.
\begin{example}
Assume $M_{p}(x)=bx^{k}$ and $\pi({\rm d}x)\propto{\bf 1}_{\mathbb{R}_{+}}(x)\exp(-cx^{\ell}){\rm d}x$
for some $k,\ell\geq0$. If $\ell<k$ then the integral $\int\pi({\rm d}x)\exp\left(M_{p}(x)/\alpha\right)$
in Proposition~\ref{prop:prod-avg-bound} is infinite. If $\ell>k$,
then the integral is finite. If $\ell=k$, then the integral is finite
if and only if $\alpha>b/c$.
\end{example}

\subsection{\label{subsec:Lognormal-example}Lognormal example}

We consider now a limiting case of the perturbations in a PMMH algorithm,
motivated by \cite[Theorem 1.1]{berard14}, which has also been analyzed
using other techniques \cite{doucet-2015,sherlock-2015}. The result
of \cite{berard14} concerns a particular mean 1 perturbation $W_{T,N}$
that is also a product of $T$ averages, with $N$ random variables
involved in each average, but where the random variables are not independent.
They show that, under regularity conditions, if $N=\alpha T$ there
is a $\sigma_{0}^{2}$ such that with $\sigma^{2}=\sigma_{0}^{2}/\alpha$,
$\log(W_{T,N})$ converges in distribution to $N(-\frac{1}{2}\sigma^{2},\sigma^{2})$
as $T\to\infty$.

We consider here the setting where for some large $T$, the log-perturbation
is exactly $N(-\frac{1}{2}\sigma^{2},\sigma^{2})$, in which case
one can think of $\sigma^{2}=\sigma_{0}^{2}T/N$, and since the precise
value of $T$ is not relevant we suppress it in the sequel. To be
explicit, we have that $W$ has law
\[
Q_{x}(\dif w)=\frac{1}{w\sigma\sqrt{2\pi}}\exp\left(-\frac{\left(\log w+\sigma^{2}/2\right)^{2}}{2\sigma^{2}}\right)\dif w,
\]
where $\dif w$ is Lebesgue measure on $\R_{+}$ and $\sigma>0$ is
a variance parameter, which we assume for simplicity is independent
of $x$. We note that a pseudo-marginal kernel with log-normal perturbations
can only converge subgeometrically, since the perturbations are not
bounded.

\subsubsection{Tail probabilities and convergence bound}
\begin{lem}
\label{lem:beta-s-log-normal}We have the bound, for $s>0$, $\tilde{\pi}_{x}(W\ge s)\le\beta(s)$,
where
\begin{equation}
\beta(s):=\exp\left(-\frac{\left(\left(\log s-\sigma^{2}/2\right)_{+}\right)^{2}}{2\sigma^{2}}\right).\label{eq:beta_lognormal_noCP}
\end{equation}
\end{lem}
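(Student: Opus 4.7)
The plan is to exploit the fact that $\tilde{\pi}_x$ is obtained from $Q_x$ by size-biasing with respect to $w$, which corresponds to an exponential tilt of the Gaussian for $\log W$. First I would pass to the log variable $Y := \log W$ and compute its density under $\tilde{\pi}_x$. Since $\tilde{\pi}_x(\dif w) = w\, Q_x(\dif w)$, the change of variables $w = e^y$ (with $\dif w = e^y \dif y$) gives the density of $Y$ under $\tilde{\pi}_x$ as
\[
\frac{1}{\sigma\sqrt{2\pi}}\exp\!\left(y - \frac{(y+\sigma^2/2)^2}{2\sigma^2}\right).
\]
A routine completion of the square inside the exponent reduces this to $-(y-\sigma^2/2)^2/(2\sigma^2)$, so that under $\tilde{\pi}_x$, $Y \sim N(\sigma^2/2,\sigma^2)$. (This is exactly the shift-by-$\sigma^2$ that the mean-one normalization $\mathbb{E}_{Q_x}[W]=1$ forces on the size-biased log-normal.)

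Next I would use this explicit Gaussian law to bound the tail. Writing $Z := (Y - \sigma^2/2)/\sigma$, which is standard normal under $\tilde{\pi}_x$,
\[
\tilde{\pi}_x(W \ge s) = \tilde{\pi}_x(Y \ge \log s) = \mathbb{P}\!\left(Z \ge \frac{\log s - \sigma^2/2}{\sigma}\right).
\]
When $\log s \ge \sigma^2/2$ the right-hand threshold is non-negative, and the classical Gaussian tail bound $\mathbb{P}(Z \ge t) \le \exp(-t^2/2)$ for $t \ge 0$ yields precisely
\[
\tilde{\pi}_x(W \ge s) \le \exp\!\left(-\frac{(\log s - \sigma^2/2)^2}{2\sigma^2}\right) = \beta(s).
\]
When $\log s < \sigma^2/2$, the positive-part $(\log s - \sigma^2/2)_+ = 0$, so $\beta(s) = 1$ and the inequality holds trivially since $\tilde{\pi}_x(W \ge s) \le 1$. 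Combining the two regimes gives the stated bound.

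There is no real obstacle here; the only steps that require care are the algebraic completion of the square (to identify the size-biased distribution correctly, so that the mean of $Y$ ends up at $+\sigma^2/2$ rather than $-\sigma^2/2$) and the switch between the two regimes for $\log s$, which is what the positive-part notation $(\cdot)_+$ in the definition of $\beta$ is precisely designed to absorb.
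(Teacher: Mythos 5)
Your proof is correct and follows essentially the same route as the paper's: identify the density of $W$ under the size-biased measure $\tilde{\pi}_x$, pass to $\log W$, complete the square to see that $\log W \sim N(\sigma^2/2,\sigma^2)$ under $\tilde{\pi}_x$, and apply the standard Gaussian tail bound. The only cosmetic difference is that you explicitly spell out the trivial regime $\log s < \sigma^2/2$, which the paper leaves implicit in the $(\cdot)_+$ notation.
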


\begin{rem}
Note that despite (\ref{eq:beta_lognormal_noCP}) being an upper bound
on the quantity of interest, it satisfies the conditions in Example~\ref{exa:cvg-super-to-strong}
and Remark~\ref{rem:pseudo-cv-marginal-general}, implying that the
rate of convergence of the marginal algorithm is recovered in the
limit, as $\sigma\rightarrow0$.
\end{rem}

Although it is theoretically possible to work directly with $\beta$
as in (\ref{eq:beta_lognormal_noCP}), in order to derive clean and
practically useful tuning guidelines, we now derive some tractable
bounds on the corresponding convergence rate.
\begin{lem}
\label{lem:boundK*}We have a lower bound on the convex conjugate,
for $0<v<1$, $K^{*}(v)\ge\frac{v}{2}\exp\left(-\sigma\sqrt{-2\log\frac{v}{2}}-\sigma^{2}/2\right)$.
\end{lem}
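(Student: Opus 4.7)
The plan is to obtain the lower bound by evaluating the supremum defining $K^{*}(v)=\sup_{u\ge 0}\{uv-K(u)\}$ at a carefully chosen point $u^{*}=u^{*}(v)$, rather than trying to maximize exactly.

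First I would write $K(u)=u\,\beta(1/u)$ explicitly. Setting $s=1/u$, note that $(\log s-\sigma^{2}/2)_{+}=(-\log u-\sigma^{2}/2)_{+}$, which is strictly positive precisely when $u<e^{-\sigma^{2}/2}$. In that regime the positive part is redundant and we get the clean expression
\[
K(u)=u\exp\!\left(-\frac{(\log u+\sigma^{2}/2)^{2}}{2\sigma^{2}}\right).
\]
Since any valid lower bound on $K^{*}(v)$ comes from any admissible $u\ge 0$, the strategy is to pick $u^{*}$ so that $K(u^{*})\le \tfrac{1}{2}u^{*}v$, which immediately gives $K^{*}(v)\ge u^{*}v-K(u^{*})\ge \tfrac{1}{2}u^{*}v$.

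To achieve $K(u^{*})=\tfrac12 u^{*}v$, I solve $\beta(1/u^{*})=v/2$, i.e.
\[
\frac{(\log u^{*}+\sigma^{2}/2)^{2}}{2\sigma^{2}}=\log\!\frac{2}{v}=-\log\!\frac{v}{2}.
\]
Taking the negative root (which makes $u^{*}$ small enough to sit in the clean regime $u^{*}<e^{-\sigma^{2}/2}$) yields
\[
\log u^{*}=-\sigma^{2}/2-\sigma\sqrt{-2\log(v/2)},\qquad u^{*}=\exp\!\left(-\sigma\sqrt{-2\log(v/2)}-\sigma^{2}/2\right).
\]
Since $0<v<1$, we have $-\log(v/2)>\log 2>0$, so $u^{*}$ is well-defined, positive, and strictly less than $e^{-\sigma^{2}/2}$, so the simplified formula for $K$ is applicable. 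Substituting gives $K^{*}(v)\ge \tfrac12 u^{*}v=\tfrac{v}{2}\exp\!\big(-\sigma\sqrt{-2\log(v/2)}-\sigma^{2}/2\big)$, as claimed.

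There is essentially no obstacle here beyond the right choice of $u^{*}$; the only point requiring a tiny bit of care is verifying that $u^{*}<e^{-\sigma^{2}/2}$ so that the positive part in the definition of $\beta$ does not trivialise the computation, and this is immediate from the sign of $-\sigma\sqrt{-2\log(v/2)}$ when $v<1$. This lower bound is not the exact supremum (one could optimise more tightly), but it has the correct functional form and is sharp enough for the convergence-rate computations that follow.
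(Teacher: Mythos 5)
Your proof is correct and takes essentially the same approach as the paper: both lower bound $K^{*}(v)$ by evaluating $uv-u\beta(1/u)$ at the specific point $u^{*}=\exp(-\sigma\sqrt{-2\log(v/2)}-\sigma^{2}/2)$, which is chosen precisely so that $\beta(1/u^{*})=v/2$. You additionally spell out the derivation of $u^{*}$ and verify that the positive part in $\beta$ is active, which the paper leaves implicit.
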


\begin{proof}
This is immediate from choosing $u=\exp\left(-\sigma\sqrt{-2\log\frac{v}{2}}-\sigma^{2}/2\right)$
in the definition of the convex conjugate, $K^{*}(v)=\sup_{u>0}\left\{ uv-u\beta(1/u)\right\} $.
\end{proof}
As before, we define $F(w):=\int_{w}^{1}\frac{\dif v}{K^{*}(v)}$.
We are able to deduce the following convergence bound.
\begin{lem}
\label{lem:log_normal_lambert}We have the upper bound for $x>0$,
\begin{equation}
F^{-1}(x)\le2\exp\left\{ -\frac{1}{2\sigma^{2}}\mathsf{W}^{2}\left(\frac{x\sigma^{2}}{2\exp(\sigma^{2}/2)}\right)\right\} ,\label{eq:lognormal_bound}
\end{equation}
where $\mathsf{W}^{2}$ denotes the Lambert function squared.
\end{lem}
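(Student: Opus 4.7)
The plan is to obtain an explicit, invertible upper bound $H$ on $F$, and then conclude via the elementary fact that for decreasing functions on a common domain, $F\le H$ implies $F^{-1}\le H^{-1}$. Since Lemma~\ref{lem:boundK*} furnishes a lower bound on $K^*(v)$, inverting it and integrating yields
\[
F(w)\le 2e^{\sigma^2/2}\int_w^1\frac{1}{v}\exp\bigl(\sigma\sqrt{-2\log(v/2)}\bigr)\,\dif v.
\]

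The key step is the substitution $y=\sigma\sqrt{-2\log(v/2)}$, equivalently $v=2\exp(-y^2/(2\sigma^2))$, which satisfies $\dif v/v=-(y/\sigma^2)\,\dif y$ and transforms the right-hand side into
\[
F(w)\le\frac{2e^{\sigma^2/2}}{\sigma^2}\int_{\sigma\sqrt{2\log 2}}^{y_w}y e^{y}\,\dif y,\qquad y_w:=\sigma\sqrt{-2\log(w/2)}.
\]
This substitution is essentially forced by the Gaussian form of the tail bound in Lemma~\ref{lem:beta-s-log-normal}, and produces the integrand $ye^y$ whose structure is tailor-made for inversion by the Lambert $\mathsf{W}$ function.

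I would next bound the remaining integral crudely: since $ye^y\ge 0$ and $y\le y_w$ on the integration interval,
\[
\int_{\sigma\sqrt{2\log 2}}^{y_w}y e^y\,\dif y\le y_w\int_0^{y_w}e^y\,\dif y\le y_w e^{y_w},
\]
so that $F(w)\le H(w):=(2e^{\sigma^2/2}/\sigma^2)\,y_w e^{y_w}$. Inverting $H$ amounts to solving $y_w e^{y_w}=x\sigma^2/(2e^{\sigma^2/2})$ for $y_w$, which by the defining identity of the Lambert $\mathsf{W}$ function gives $y_w=\mathsf{W}(x\sigma^2/(2e^{\sigma^2/2}))$. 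Substituting back through $w=2\exp(-y_w^2/(2\sigma^2))$ reproduces exactly the right-hand side of \eqref{eq:lognormal_bound}, and the monotonicity comparison $F^{-1}\le H^{-1}$ completes the argument.

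The main obstacle is selecting the change of variables that makes the Lambert $\mathsf{W}$ appear naturally; the Gaussian tail of $\beta$ dictates $v=2\exp(-y^2/(2\sigma^2))$, and this transformation is what converts an otherwise opaque integral into the canonical Lambert form. The subsequent replacement of $\int y e^y\,\dif y$ by $y_w e^{y_w}$ discards a constant but preserves the leading rate, which is precisely what is needed to obtain a clean, closed-form statement.
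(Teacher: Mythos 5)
Your proof is correct and follows essentially the same route as the paper's: bound $1/K^*$ via Lemma~\ref{lem:boundK*}, change variables to reach an integrand of the form $ye^{y}$, bound it by $y_w e^{y_w}$, and invert via the Lambert $\mathsf{W}$ function. The only cosmetic difference is that you perform the substitution in a single step $y=\sigma\sqrt{-2\log(v/2)}$ and then crudely bound the integral, whereas the paper substitutes $z=-\log(v/2)$, evaluates the exact antiderivative $\sigma^{-2}e^{\sigma\sqrt{2z}}(\sigma\sqrt{2z}-1)$, and then drops terms; both land at the same expression $F(w)\le (2e^{\sigma^2/2}/\sigma^2)\,y_w e^{y_w}$.
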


\begin{prop}
Assume that the marginal chain $\bar{P}$ satisfies a strong Poincaré
inequality with constant $C_{\mathrm{P}}$ as in Remark~\ref{rem:strong_PI}.
Then the final convergence bound for the pseudo-marginal chain is
given by
\begin{equation}
F_{\mathrm{PM}}^{-1}(n)\le\frac{2}{C_{\mathrm{P}}}\exp\left\{ -\frac{1}{2\sigma^{2}}\mathsf{W}^{2}\left(\frac{C_{\mathrm{P}}n\sigma^{2}}{2\exp(\sigma^{2}/2)}\right)\right\} .\label{eq:log_normal_bound}
\end{equation}
\end{prop}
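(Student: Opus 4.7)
The plan is to assemble the ingredients already established in the paper: the tail bound on the lognormal perturbation from Lemma~\ref{lem:beta-s-log-normal}, the Dirichlet form comparison for pseudo-marginal chains (Theorem~\ref{thm:dirichlet-comparisons}), the chaining result for strong Poincaré inequalities (Proposition~\ref{prop:chaining-with-SPI}, or equivalently Corollary~\ref{cor:pseudo-mar_bound}), the convergence theorem for reversible positive kernels (Theorem~\ref{thm:rev_pos_conv}), and finally the explicit estimate in Lemma~\ref{lem:log_normal_lambert}. The key observation is that once a super-Poincaré inequality for $\tilde{P}$ is in hand, the rest is routine rescaling.

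First I would invoke Theorem~\ref{thm:dirichlet-comparisons} with $p=\infty$ to obtain, for all $s>0$ and $f\in\ELL_0(\tilde\pi)$,
\[
\mathcal{E}(\bar{P},f)\leq s\,\mathcal{E}(\tilde{P},f)+\|f\|_{\mathrm{osc}}^{2}\int_{\mathsf{X}}\tilde{\pi}_{x}(w\geq s)\,\pi(\dif x).
\]
Combined with the lognormal tail bound of Lemma~\ref{lem:beta-s-log-normal}, which gives $\int\tilde\pi_x(w\geq s)\pi(\dif x)\leq \beta(s)$ with $\beta$ as in (\ref{eq:beta_lognormal_noCP}), and with the standing assumption that $\bar{P}$ satisfies a strong Poincaré inequality with constant $C_{\mathrm{P}}$, Proposition~\ref{prop:chaining-with-SPI} (equivalently Corollary~\ref{cor:pseudo-mar_bound}) yields a super-Poincaré inequality for $\tilde{P}$ with rate function
\[
\beta_{\mathrm{PM}}(s)=\frac{1}{C_{\mathrm{P}}}\beta(C_{\mathrm{P}}s),\qquad \Phi(f)=\|f\|_{\mathrm{osc}}^{2},\qquad a=1.
\]

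Next, since $\tilde{P}$ is assumed positive and is $\tilde\pi$-reversible, Theorem~\ref{thm:rev_pos_conv} applies and produces the bound $\|\tilde P^{n}f\|_{2}^{2}\leq \|f\|_{\mathrm{osc}}^{2}\,F_{\mathrm{PM}}^{-1}(n)$, where $F_{\mathrm{PM}}$ is built from $\beta_{\mathrm{PM}}$ via Definition~\ref{def:KandKstar} and Lemma~\ref{lem:F_a_properties}. Applying Lemma~\ref{lem:rescaling-beta-rescaling-invF} with $c_{1}=1/C_{\mathrm{P}}$ and $c_{2}=C_{\mathrm{P}}$ converts the rescaling of $\beta$ into the rescaling of the inverse function:
\[
F_{\mathrm{PM}}^{-1}(n)=\frac{1}{C_{\mathrm{P}}}\,F^{-1}\!\bigl(C_{\mathrm{P}}n\bigr),
\]
where $F$ is the function associated with the unscaled $\beta$ from (\ref{eq:beta_lognormal_noCP}).

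Finally I would substitute the estimate of Lemma~\ref{lem:log_normal_lambert}, namely $F^{-1}(x)\leq 2\exp\bigl\{-\tfrac{1}{2\sigma^{2}}\mathsf{W}^{2}\bigl(x\sigma^{2}/(2\exp(\sigma^{2}/2))\bigr)\bigr\}$, evaluated at $x=C_{\mathrm{P}}n$. The factor $1/C_{\mathrm{P}}$ from the rescaling combines with the constant $2$ from Lemma~\ref{lem:log_normal_lambert} to produce the announced prefactor $2/C_{\mathrm{P}}$, and the argument of the Lambert $\mathsf{W}$ becomes $C_{\mathrm{P}}n\sigma^{2}/(2\exp(\sigma^{2}/2))$, yielding exactly (\ref{eq:log_normal_bound}). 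The proof is essentially bookkeeping: there is no real obstacle beyond checking that the two rescalings (the $C_{\mathrm{P}}$ from chaining and the $\sigma$-dependent argument of $\mathsf{W}$ from the lognormal estimate) compose in the right order, which is exactly what Lemma~\ref{lem:rescaling-beta-rescaling-invF} guarantees.
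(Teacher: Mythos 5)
Your proposal is correct and follows essentially the same route as the paper, which simply cites Corollary~\ref{cor:pseudo-mar_bound} and Lemma~\ref{lem:log_normal_lambert}; you have merely unpacked those references (Theorem~\ref{thm:dirichlet-comparisons} with $p=\infty$, the lognormal tail bound, Proposition~\ref{prop:chaining-with-SPI}, Theorem~\ref{thm:rev_pos_conv}, and the Lambert estimate). One small imprecision worth noting: you write $F_{\mathrm{PM}}^{-1}(n)=\tfrac{1}{C_{\mathrm{P}}}F^{-1}(C_{\mathrm{P}}n)$ as an equality via Lemma~\ref{lem:rescaling-beta-rescaling-invF}, but that lemma is stated for $F_\infty$, whereas $F$ in Lemma~\ref{lem:log_normal_lambert} is $F_1$ (upper limit $1$, not $\infty$); with $F_1$ the rescaling gives $F_{\mathrm{PM},1}(x)=\tfrac{1}{C_{\mathrm{P}}}\bigl(F_1(C_{\mathrm{P}}x)-F_1(C_{\mathrm{P}})\bigr)\leq\tfrac{1}{C_{\mathrm{P}}}F_1(C_{\mathrm{P}}x)$, hence only $F_{\mathrm{PM},1}^{-1}(n)\leq\tfrac{1}{C_{\mathrm{P}}}F_1^{-1}(C_{\mathrm{P}}n)$. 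The inequality runs in the favourable direction, so the final bound is unaffected, but the equals sign should be $\leq$.
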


\begin{proof}
This is immediate from Corollary~\ref{cor:pseudo-mar_bound}, Lemma~\ref{lem:log_normal_lambert}.
\end{proof}

\subsubsection{Mixing times}

It is possible to obtain mixing time type results. 
\begin{prop}
\label{prop:mix-log-normal-basic}Let $\epsilon\in(0,1]$ and $\sigma^{2}>0$.
Then, to obtain $F_{{\rm PM}}^{-1}(n)\leq\epsilon^{2}$ it is sufficient
to take
\begin{equation}
n\geq\frac{2\sqrt{H(\epsilon)}}{C_{{\rm P}}\sigma}\exp\left(\frac{\sigma^{2}}{2}+\sqrt{H(\epsilon)}\sigma\right),\label{eq:n_bound-lognormal}
\end{equation}
where $H(\epsilon)=2\log(2/(\epsilon^{2}C_{{\rm P}}))$.
\end{prop}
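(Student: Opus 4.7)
The plan is to start from the bound supplied by the preceding proposition, namely
\[
F_{\mathrm{PM}}^{-1}(n)\le \frac{2}{C_{\mathrm{P}}}\exp\!\left\{-\frac{1}{2\sigma^{2}}\mathsf{W}^{2}\!\left(\frac{C_{\mathrm{P}}n\sigma^{2}}{2\exp(\sigma^{2}/2)}\right)\right\},
\]
and then invert it to find the smallest $n$ for which the right-hand side is at most $\epsilon^{2}$. Since the exponential and the Lambert $W$ function are both increasing on the relevant ranges, this is a purely monotone rearrangement and the only step that is not bookkeeping is unwinding the $\mathsf{W}$.

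First I would impose the condition $\frac{2}{C_{\mathrm{P}}}\exp\{-\mathsf{W}^{2}(\cdot)/(2\sigma^{2})\}\le \epsilon^{2}$ and take logarithms on both sides. This yields
\[
\mathsf{W}^{2}\!\left(\frac{C_{\mathrm{P}}n\sigma^{2}}{2\exp(\sigma^{2}/2)}\right)\ge 2\sigma^{2}\log\!\left(\frac{2}{\epsilon^{2}C_{\mathrm{P}}}\right)=\sigma^{2}H(\epsilon),
\]
with $H(\epsilon)$ as defined in the statement. Note that $H(\epsilon)\ge 0$ for $\epsilon\in(0,1]$ (using $C_{\mathrm{P}}\le 1$), so the square root is well defined and I can take positive square roots to obtain $\mathsf{W}(C_{\mathrm{P}}n\sigma^{2}/(2\exp(\sigma^{2}/2)))\ge \sigma\sqrt{H(\epsilon)}$.

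Next I would use the defining identity $\mathsf{W}(z)e^{\mathsf{W}(z)}=z$, equivalently $\mathsf{W}^{-1}(y)=ye^{y}$, which is strictly increasing on $[0,\infty)$. Applying $\mathsf{W}^{-1}$ to the previous inequality gives
\[
\frac{C_{\mathrm{P}}n\sigma^{2}}{2\exp(\sigma^{2}/2)}\ge \sigma\sqrt{H(\epsilon)}\,\exp\!\left(\sigma\sqrt{H(\epsilon)}\right),
\]
and a direct rearrangement produces exactly the stated lower bound on $n$. The main obstacle, such as it is, is simply being careful that the monotonicity direction is preserved at each inversion and that $\mathsf{W}$ is being used on its principal branch where its argument is nonnegative; both are immediate here because $\sigma>0$, $n\ge 1$, $C_{\mathrm{P}}\in(0,1]$, and $\epsilon\in(0,1]$ together make every quantity under $\mathsf{W}$ and every square root nonnegative.
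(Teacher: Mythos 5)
Your proof is correct and amounts to the same computation the paper performs, just written in the opposite direction: the paper substitutes the right-hand side of \eqref{eq:n_bound-lognormal} directly into the bound \eqref{eq:lognormal_bound}, checks that the result simplifies to $2\exp(-H(\epsilon)/2)/C_{\mathrm{P}}=\epsilon^2$, and then invokes monotonicity of $F_{\mathrm{PM}}^{-1}$, whereas you solve the inequality for $n$ by unwinding the exponential and the Lambert $\mathsf{W}$ one monotone step at a time. Both are the same calculation; your version makes the monotonicity bookkeeping and the sign conditions (in particular $H(\epsilon)\geq 0$, which holds because $C_{\mathrm{P}}\leq 1$ for any Markov operator) explicit, which is a small expository gain but not a different argument.
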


\begin{proof}
One can calculate directly that the bound of $F_{{\rm PM}}^{-1}$
in (\ref{eq:lognormal_bound}) evaluated at the right-hand side of
(\ref{eq:n_bound-lognormal}) is equal to $2\exp(-H(\epsilon)/2)/C_{{\rm P}}$,
which combined with the definition of $H(\epsilon)$ and the monotonicity
of $F_{{\rm PM}}^{-1}$ gives the result.
\end{proof}
Now we consider the minimum computational budget required to achieve
a given precision of $\epsilon$, and the corresponding split between
the number of MCMC iterations $n$ and the number of particles $N$.
The budget required is significantly lower than the result in Proposition~\ref{prop:mix-log-normal-basic}
would imply for a fixed $N$ and therefore $\sigma^{2}$.
\begin{prop}
\label{prop:mixing-time-lognormal}Let $\epsilon\in(0,1]$. For simplicity,
let $\bar{n}$ and $\bar{N}$ be real-valued counterparts of $n$
and $N$, respectively. The `budget' function $(\bar{n},\bar{\sigma})\mapsto\mathsf{B}(\bar{n},\bar{\sigma})=\bar{n}\bar{N}=\bar{n}\sigma_{0}^{2}/\bar{\sigma}^{2}$
on $\mathbb{R}_{+}^{2}$ is minimized subject to the constraint $F_{{\rm PM}}^{-1}(\bar{n};\bar{\sigma})=\epsilon^{2}$
(with $F_{{\rm PM}}^{-1}$ as in (\ref{eq:lognormal_bound})) when
\[
\bar{\sigma}=\bar{\sigma}_{\star}(\epsilon):=\frac{\sqrt{H(\epsilon)+12}-\sqrt{H(\epsilon)}}{2},
\]
where $H(\epsilon):=2\log\bigl(\frac{2}{C_{{\rm P}}\epsilon^{2}}\bigr)>0$
and $\lim_{H(\epsilon)\to\infty}\sqrt{H(\epsilon)}\bar{\sigma}_{\star}(\epsilon)=3$.
Moreover, for $\epsilon>0$ such that $H(\epsilon)\geq1$, we obtain
$F_{{\rm PM}}^{-1}(\bar{n};\bar{\sigma})=\epsilon^{2}$ with $\bar{\sigma}(\epsilon)=3/\sqrt{H(\epsilon)}$,
\begin{align*}
\bar{N}(\epsilon) & =\frac{2}{9}\sigma_{0}^{2}\log\left(\tfrac{2}{C_{{\rm P}}\epsilon^{2}}\right),\\
\bar{n}(\epsilon) & \leq\frac{4\exp\left(15/2\right)}{3C_{{\rm P}}}\log\left(\tfrac{2}{C_{{\rm P}}\epsilon^{2}}\right),\\
\mathsf{B}(\epsilon) & \leq\frac{8\sigma_{0}^{2}\exp\left(15/2\right)}{27C_{{\rm P}}}\log\left(\tfrac{2}{C_{{\rm P}}\epsilon^{2}}\right)^{2},
\end{align*}
which is asymptotically accurate and optimal as $H(\epsilon)\to\infty$,
i.e. if $\epsilon\downarrow0$ or $C_{{\rm P}}\downarrow0$, except
that the constant factors $\exp\left(15/2\right)$ will tend to $\exp(3)$.
\end{prop}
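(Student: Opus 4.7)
The plan is to start from the explicit upper bound (\ref{eq:log_normal_bound}) on $F_{{\rm PM}}^{-1}(\bar n;\bar\sigma)$ derived in the preceding subsection and solve the equation $F_{{\rm PM}}^{-1}(\bar n;\bar\sigma)=\epsilon^2$ explicitly for $\bar n$ as a function of $\bar\sigma$, then substitute into the budget $\mathsf{B}(\bar n,\bar\sigma)=\bar n\sigma_0^2/\bar\sigma^2$ and minimize over $\bar\sigma$.

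First, setting the right-hand side of (\ref{eq:log_normal_bound}) equal to $\epsilon^2$ and rearranging, I obtain
\[
\mathsf{W}^{2}\!\left(\tfrac{C_{{\rm P}}\bar n\bar\sigma^{2}}{2\exp(\bar\sigma^{2}/2)}\right)=2\bar\sigma^{2}\log\!\bigl(\tfrac{2}{C_{{\rm P}}\epsilon^{2}}\bigr)=\bar\sigma^{2}H(\epsilon),
\]
so by the defining identity $\mathsf{W}(y)e^{\mathsf{W}(y)}=y$ applied with $y=C_{{\rm P}}\bar n\bar\sigma^{2}/(2\exp(\bar\sigma^{2}/2))$ I recover exactly the bound of Proposition~\ref{prop:mix-log-normal-basic}:
\[
\bar n=\bar n(\bar\sigma,\epsilon):=\frac{2\sqrt{H(\epsilon)}}{C_{{\rm P}}\bar\sigma}\exp\!\bigl(\bar\sigma^{2}/2+\bar\sigma\sqrt{H(\epsilon)}\bigr).
\]

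Substituting this into $\mathsf{B}(\bar n,\bar\sigma)=\bar n\sigma_{0}^{2}/\bar\sigma^{2}$ and writing $h:=\sqrt{H(\epsilon)}$, I get
\[
\log\mathsf{B}(\bar\sigma)=\log\!\bigl(\tfrac{2\sigma_{0}^{2}h}{C_{{\rm P}}}\bigr)-3\log\bar\sigma+\tfrac{\bar\sigma^{2}}{2}+h\bar\sigma.
\]
Differentiating in $\bar\sigma$ and setting the derivative to zero yields the stationarity condition $\bar\sigma^{2}+h\bar\sigma-3=0$; since $\mathsf{B}$ is strictly convex in $\bar\sigma$ (the second derivative of $\log\mathsf{B}$ in $\bar\sigma$ is $3/\bar\sigma^{2}+1>0$), the unique positive root $\bar\sigma_{\star}(\epsilon)=\tfrac{1}{2}\bigl(\sqrt{H(\epsilon)+12}-\sqrt{H(\epsilon)}\bigr)$ is the global minimizer. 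Multiplying by $h$ gives $h\bar\sigma_{\star}=\tfrac{1}{2}(\sqrt{h^{4}+12h^{2}}-h^{2})$, and rationalizing $\sqrt{H+12}-\sqrt{H}=12/(\sqrt{H+12}+\sqrt{H})$ readily produces the asymptotics $h\bar\sigma_{\star}\to 3$ as $h\to\infty$.

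For the explicit bounds I use the slightly suboptimal (but asymptotically equivalent) choice $\bar\sigma(\epsilon):=3/\sqrt{H(\epsilon)}$, valid when $H(\epsilon)\geq 1$. Plugging into $\bar N=\sigma_{0}^{2}/\bar\sigma^{2}$ gives $\bar N(\epsilon)=\tfrac{1}{9}\sigma_{0}^{2}H(\epsilon)=\tfrac{2}{9}\sigma_{0}^{2}\log(2/(C_{{\rm P}}\epsilon^{2}))$. With this $\bar\sigma$, we have $\bar\sigma h=3$ and $\bar\sigma^{2}/2=9/(2H(\epsilon))\leq 9/2$ when $H(\epsilon)\geq 1$, so $\exp(\bar\sigma^{2}/2+\bar\sigma h)\leq\exp(15/2)$, yielding
\[
\bar n(\epsilon)=\frac{2h}{C_{{\rm P}}\bar\sigma}\exp(\bar\sigma^{2}/2+\bar\sigma h)\leq\frac{2h^{2}}{3C_{{\rm P}}}\exp(15/2)=\frac{4\exp(15/2)}{3C_{{\rm P}}}\log\!\bigl(\tfrac{2}{C_{{\rm P}}\epsilon^{2}}\bigr),
\]
and multiplying by $\bar N(\epsilon)$ produces the claimed budget bound. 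The asymptotic optimality statement follows because at the exact minimizer the prefactor $\exp(\bar\sigma_{\star}^{2}/2+\bar\sigma_{\star}h)$ tends to $e^{3}$ as $H(\epsilon)\to\infty$, while $h\bar\sigma_{\star}\to 3$ so $\bar N$ still scales like $\sigma_{0}^{2}H(\epsilon)/9$ up to $o(1)$ corrections.

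The only delicate part is the final passage: verifying that the approximation $\bar\sigma=3/\sqrt{H(\epsilon)}$ gives the constants as stated for \emph{all} $\epsilon$ with $H(\epsilon)\geq 1$ (not merely asymptotically), which amounts to the elementary inequality $\bar\sigma^{2}/2\leq 9/2$ on that range; and checking that the resulting bound is indeed asymptotically tight, which follows by comparing with the exact minimizer $\bar\sigma_{\star}$ and using $h\bar\sigma_{\star}\to 3$ together with $\bar\sigma_{\star}\to 0$.
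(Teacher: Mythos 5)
Your proof is correct and takes essentially the same route as the paper: solve the constraint for $\bar n$ via the Lambert function, substitute into the budget, and find the unique positive stationary point of $\log\mathsf{B}$ in $\bar\sigma$, then plug in the tractable surrogate $\bar\sigma=3/\sqrt{H(\epsilon)}$ to get clean bounds. The one small improvement you make is explicitly verifying strict convexity of $\log\mathsf{B}$ in $\bar\sigma$ (second derivative $3/\bar\sigma^2+1>0$) to justify that the stationary point is the global minimizer, a step the paper leaves implicit; otherwise the computations coincide.
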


These non-asymptotic results take into account both $C_{{\rm P}}$
and $\sigma_{0}^{2}$ in a natural manner and are easily interpretable.
We note that they also indicate how a given computational budget $\mathsf{B}$
should be split between $N$ and $n$ in order to achieve best precision:
in particular $N$ should increase as $\mathsf{B}$ increases. This
is to be contrasted with results (see \cite{doucet-2015,sherlock-2015}
and below) concerned with the asymptotic variance which recommend
a fixed number of particles for any $\mathsf{B}$ sufficiently large
and allocation of the remaining resources to iterating the MCMC algorithm
for this fixed number of particles.

\subsubsection{Asymptotic variance}

We now show that our bounds lead to recommendations for $N$ similar
to those of \cite{doucet-2015,sherlock-2015} when considering the
asymptotic variance as a criterion. We can use the bound (\ref{eq:log_normal_bound})
to give an upper bound on the resulting asymptotic variance.
\begin{lem}
\label{lem:bound-asymp-var}Fix a test function $f\in\ELL_{0}(\mu)$.
In the setting of Theorem~\ref{thm:rev_pos_conv}, for a reversible
Markov kernel $P$, the asymptotic variance $v(f,P)$ is bounded by
\[
v(f,P)\le-\|f\|_{2}^{2}+4\Phi(f)\sum_{n=0}^{\infty}F^{-1}(n).
\]
\end{lem}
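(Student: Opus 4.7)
The plan is to use the standard spectral representation of the asymptotic variance for a stationary reversible Markov chain, namely
\[
v(f,P) \;=\; -\|f\|_{2}^{2} + 2\sum_{n=0}^{\infty}\langle f, P^{n}f\rangle,
\]
valid for $f \in \ELL_0(\mu)$, and then control each inner product $\langle f, P^n f\rangle$ using the convergence bound of Theorem~\ref{thm:rev_pos_conv}. Note that since $P$ is $\mu$-invariant we have $P^{n}f \in \ELL_{0}(\mu)$ whenever $f \in \ELL_{0}(\mu)$, and moreover $\|Pg\|_{2}\leq\|g\|_{2}$ for every $g\in \ELL(\mu)$ by Jensen's inequality and invariance, so the series is at least well-defined as a bound.

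Next, I would split the sum into even and odd indices. For even $n=2m$, self-adjointness of $P$ gives $\langle f, P^{2m}f\rangle = \langle P^{m}f, P^{m}f\rangle = \|P^{m}f\|_{2}^{2}$, which by Theorem~\ref{thm:rev_pos_conv} is bounded by $\Phi(f)F^{-1}(m)$. For odd $n=2m+1$, I would use self-adjointness and Cauchy--Schwarz to write
\[
\langle f, P^{2m+1}f\rangle \;=\; \langle P^{m}f, P^{m+1}f\rangle \;\leq\; \|P^{m}f\|_{2}\,\|P^{m+1}f\|_{2} \;\leq\; \|P^{m}f\|_{2}^{2} \;\leq\; \Phi(f)F^{-1}(m),
\]
using contractivity of $P$ on $\ELL(\mu)$ and again Theorem~\ref{thm:rev_pos_conv}. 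Summing the two halves and inserting into the asymptotic variance formula yields
\[
v(f,P) \;\leq\; -\|f\|_{2}^{2} + 2\cdot 2\,\Phi(f)\sum_{m=0}^{\infty}F^{-1}(m) \;=\; -\|f\|_{2}^{2} + 4\Phi(f)\sum_{n=0}^{\infty}F^{-1}(n),
\]
as required.

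I do not anticipate any serious obstacles. The only point to be careful with is justifying the spectral/Cesaro identity for $v(f,P)$ in the full generality of the statement; in practice one either cites the standard result for reversible chains or observes that the bound for partial sums $-\|f\|_{2}^{2}+2\sum_{n=0}^{N}\langle f,P^{n}f\rangle$ is uniform in $N$ once the right-hand-side series converges, which occurs as soon as $\sum_{n}F^{-1}(n)<\infty$ (if this series diverges the stated bound is vacuous). Everything else is a routine even/odd split together with the $\ELL^{2}$-contractivity of $P$.
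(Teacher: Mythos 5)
Your proof is correct and takes essentially the same approach as the paper: both start from the representation $v(f,P)=2\sum_{n\ge0}\langle f,P^{n}f\rangle-\|f\|_{2}^{2}$, pair the $n=2l$ and $n=2l+1$ terms via self-adjointness (the paper writes this compactly as $\langle P^{l}f,(\Id+P)P^{l}f\rangle$), bound the odd term by the even one using Cauchy--Schwarz and $\ELL^{2}$-contractivity, and then apply Theorem~\ref{thm:rev_pos_conv}.
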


\begin{example}
For our log-normal pseudo-marginal example, we can then ask for a
given $f$, how to tune $\sigma$ in order to minimize the resulting
bound on the asymptotic variance. We can bound
\begin{align*}
\sum_{n=1}^{\infty}F_{\mathrm{PM}}^{-1}(n) & \le\sum_{n=1}^{\infty}\frac{2}{C_{\mathrm{P}}}\exp\left\{ -\frac{1}{2\sigma^{2}}\mathsf{W}^{2}\left(\frac{C_{\mathrm{P}}n\sigma^{2}}{2\exp(\sigma^{2}/2)}\right)\right\} \\
 & \le\frac{2}{C_{\mathrm{P}}}\int_{0}^{\infty}\exp\left(-\mathsf{a}\mathsf{W}^{2}(\mathsf{b}x)\right)\,\dif x,
\end{align*}
where $\mathsf{a}:=1/(2\sigma^{2})$ and $\mathsf{b}:=C_{\mathrm{P}}\sigma^{2}/(2\exp(\sigma^{2}/2))$.
Here we used the fact that the Lambert function is increasing. Through
routine calculations and making use of the substitution $\mathsf{b}x=u\exp(u)\Leftrightarrow u=\mathsf{W}(\mathsf{b}x)$,
this integral can be simplified and written as
\[
\tilde{v}(\sigma):=\frac{1}{\mathsf{b}}\left[\exp(1/(4\mathsf{a}))\left(1+\frac{1}{2\mathsf{a}}\right)\mathsf{a}^{-1/2}\int_{-\mathsf{a}^{-1/2}/2}^{\infty}\exp(-w^{2})\,\dif w+\frac{1}{2\mathsf{a}}\right].
\]
In this final expression, both $\mathsf{a}$ and $\mathsf{b}$ depend
on $\sigma$, and the resulting function of $\sigma\mapsto\tilde{v}(\sigma)/\sigma^{2}$
can be optimized numerically, where we divide by $\sigma^{2}$ to
take into account the additional computational cost; see Figure~\ref{fig:logn_avar}.
Note that the optimal value $\sigma_{*}$ of $\sigma$ does not depend
on $C_{\mathrm{P}}$, and we find numerically that $\sigma_{*}\approx0.973$.
This is consistent with \cite{doucet-2015} who report optimal values
in the range $\sigma_{*}\approx1.0-1.7$ (dependent on the performance
of the marginal algorithm) using another bound on the asymptotic variance,
while \cite{sherlock-2015} find $\sigma_{*}\approx1.812$ using a
scaling and diffusion approximation.
\end{example}

\begin{figure}
\begin{centering}
\includegraphics[width=12cm,height=9cm]{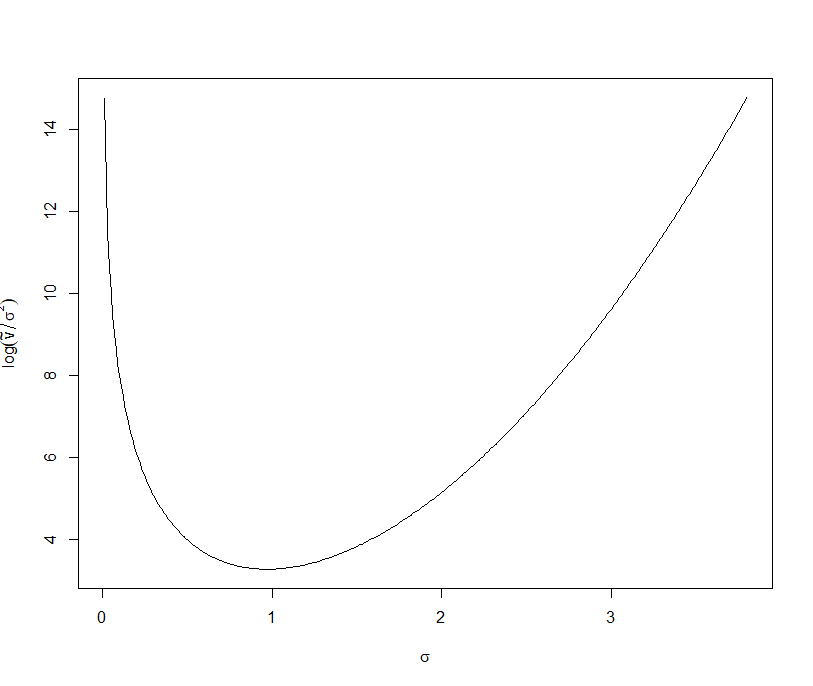}
\par\end{centering}
\caption{A plot of the function $\sigma\protect\mapsto\log\left(\tilde{v}(\sigma)/\sigma^{2}\right)$
in the case $C_{\mathrm{P}}=1$. \label{fig:logn_avar}}
\end{figure}

\appendix

\section{Proofs and other results for Section \ref{sec:Weak-Poincar=0000E9-inequalities-overview}
\label{app:rockner-wang}}

\subsection{Proofs for Section~\ref{subsec:General-case}}
\begin{lem}
\label{lem:properties-K-star}With $\beta$ as in Definition~\ref{def:Super-Poincar=0000E9-inequality},
let $K(u):=u\,\beta(1/u)$ for $u>0$ and $K(0):=0$. Define $K^{*}\colon[0,\infty)\rightarrow[0,\infty]$
to be the convex conjugate, $K^{*}(v):=\sup_{u\ge0}\left\{ uv-K(u)\right\} $
for $v\in[0,\infty)$. Let $K^{-}\left(v\right):=\sup\{u\geqslant0\colon K\left(u\right)\le uv\}$.
Write $a:=\sup\left\{ \frac{\left\Vert f\right\Vert _{2}^{2}}{\Phi\left(f\right)}:f\in\ELL_{0}\left(\mu\right)\setminus\left\{ 0\right\} \right\} $,
then
\begin{enumerate}
\item for $v\in\left[0,a\right)$, it holds that $K^{-}\left(v\right)\in\left[0,\infty\right)$,
and one can write 
\[
K^{*}\left(v\right)=\sup_{0\leq u\leq K^{-}(v)}\left\{ uv-K(u)\right\} ;
\]
\item $K^{*}(0)=0$ and for $v\neq0$, $K^{*}(v)>0$;
\item $v\mapsto K^{*}(v)$ is convex, continuous and strictly increasing
on its domain;
\item for $v\in\left[0,a\right]$, it holds that $K^{*}(v)\leq v$;
\item the function $v\mapsto v^{-1}K^{*}(v)$ is increasing.
\end{enumerate}
\end{lem}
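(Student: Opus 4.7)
The plan is to establish (1)--(5) in turn by exploiting the identity $uv-K(u)=u\bigl[v-\beta(1/u)\bigr]$, or equivalently, after substituting $s=1/u$, the representation
\[
K^{*}(v)=\sup_{s>0}\frac{v-\beta(s)}{s}.
\]
Most of the work reduces to the monotonicity of $\beta$ together with $\beta(s)\downarrow 0$ as $s\to\infty$.

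For (1), the condition $K(u)\leq uv$ is equivalent to $\beta(1/u)\leq v$, and by monotonicity of $\beta$ this cuts out an interval $[0,K^{-}(v)]$. Finiteness of $K^{-}(v)$ for $v<a$ is forced by taking the canonical right-continuous choice of $\beta$ which saturates at $a$ near $s=0$, as discussed after Definition~\ref{def:Super-Poincar=0000E9-inequality}; for $u>K^{-}(v)$ the term $u[v-\beta(1/u)]$ is non-positive, so the supremum defining $K^{*}(v)$ restricts to $[0,K^{-}(v)]$. Property (2) follows from $K\geq 0$ (giving $K^{*}(0)=0$) and from choosing $s$ large enough that $\beta(s)<v$ (yielding a strictly positive value in the supremum). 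Property (3) is standard: $K^{*}$ is convex as a pointwise supremum of affine functions of $v$; finite-valued convex functions are continuous on the interior of their effective domain, and continuity at $v=0$ follows from $K^{*}(v)\geq 0$ combined with the observation that $u[v-\beta(1/u)]\to-u\beta(1/u)\leq 0$ as $v\to 0^{+}$; strict monotonicity follows because for $v_{2}>v_{1}>0$ one can pick $u>0$ nearly attaining the supremum at $v_{1}$, which gives $K^{*}(v_{2})-K^{*}(v_{1})\geq u(v_{2}-v_{1})>0$.

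The main obstacle is (4), $K^{*}(v)\leq v$ on $[0,a]$. I would first prove this for $v$ in the range of $w:=\|f\|_{2}^{2}/\Phi(f)$, $f\in\ELL_{0}(\mu)\setminus\{0\}$, and then extend by convexity. Since $T=P^{*}P$ is positive semi-definite, $\mathcal{E}(T,f)=\|f\|_{2}^{2}-\|Pf\|_{2}^{2}\leq\|f\|_{2}^{2}$, so dividing the super-Poincar\'e inequality by $\Phi(f)$ and rearranging yields
\[
\frac{w-\beta(s)}{s}\leq\frac{\mathcal{E}(T,f)}{\Phi(f)}\leq w\qquad\text{for all }s>0,
\]
so that $K^{*}(w)\leq w$ for every achievable $w$. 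For an arbitrary $v\in[0,a)$, pick an achievable $w\in[v,a)$ (which exists by definition of $a$ as a supremum) and use convexity together with $K^{*}(0)=0$: writing $v=(v/w)\cdot w+(1-v/w)\cdot 0$ gives
\[
K^{*}(v)\leq (v/w)\,K^{*}(w)\leq (v/w)\cdot w=v,
\]
and continuity from (3) extends this to $v=a$. Property (5) is then a one-line consequence of the convexity established in (3) together with $K^{*}(0)=0$: for $0<v_{1}<v_{2}$ the same convex-combination argument yields $K^{*}(v_{1})\leq(v_{1}/v_{2})K^{*}(v_{2})$, i.e.\ $K^{*}(v_{1})/v_{1}\leq K^{*}(v_{2})/v_{2}$.
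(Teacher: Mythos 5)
Your argument for part (4) takes a genuinely different route from the paper's. The paper first derives the pointwise lower bound $\beta(s)\geq a(1-s)$ from the super-Poincar\'e inequality together with $\mathcal{E}(P^{*}P,f)\leq\|f\|_{2}^{2}$, hence $K(u)\geq a(u-1)$; it then deduces $K^{*}(v)\leq a$ on $[0,a]$ and finally applies convexity (with $K^{*}(0)=0$) to conclude $K^{*}(v)\leq v$. You instead read off $K^{*}(w)\leq w$ directly for achievable ratios $w=\|f\|_{2}^{2}/\Phi(f)$ from the same two ingredients and then extend by convexity; this is tidier for (4) taken in isolation. Parts (2), (3), and (5) follow essentially the paper's reasoning.

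The weak point is (1). You claim that finiteness of $K^{-}(v)$ for $v<a$ ``is forced by taking the canonical right-continuous choice of $\beta$ which saturates at $a$ near $s=0$'', citing the discussion after Definition~\ref{def:Super-Poincar=0000E9-inequality}. That discussion only says one \emph{may} take $\beta\leq a$; it does not assert $\beta(0^{+})\geq a$, and the lemma must hold for any valid $\beta$, not a particular canonical representative. What is actually needed is the inequality $\beta(s)\geq a(1-s)$ (equivalently $\beta(0^{+})\geq a$), which forces $\beta(1/u)>v$ for all sufficiently large $u$ whenever $v<a$, and hence bounds $K^{-}(v)$. This is the first step of the paper's proof, and it is in fact already implicit in your own computation for (4): rearranging $(w-\beta(s))/s\leq w$ gives $\beta(s)\geq w(1-s)$, and the supremum over achievable $w$ yields $\beta(s)\geq a(1-s)$. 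You have the ingredient but do not invoke it where it is required. A smaller quibble: in (3) the pointwise observation that $u[v-\beta(1/u)]\to-K(u)\leq 0$ as $v\to0^{+}$ for each fixed $u$ does not by itself control the supremum defining $K^{*}(v)$; right-continuity at $0$ is more cleanly obtained from convexity and $K^{*}\geq0=K^{*}(0)$ (for $0<v<v'$, $K^{*}(v)\leq(v/v')K^{*}(v')\to0$), which is what the paper's appeal to ``classical arguments'' amounts to.
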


\begin{proof}
To begin with, take $s\in\left[0,1\right)$, and compute that

\begin{align*}
\left\Vert f\right\Vert _{2}^{2} & \leqslant s\mathcal{E}\left(P^{*}P,f\right)+\beta\left(s\right)\Phi\left(f\right)\\
 & \leqslant s\left\Vert f\right\Vert _{2}^{2}+\beta\left(s\right)\Phi\left(f\right).
\end{align*}
Rearranging shows that for all $f\in\ELL_{0}\left(\mu\right)\setminus\left\{ 0\right\} $,
it holds that $\beta\left(s\right)\geqslant\left(\left\Vert f\right\Vert _{2}^{2}/\Phi\left(f\right)\right)\cdot\left(1-s\right)$,
and hence that

\begin{align*}
\beta\left(s\right) & \geqslant\left(\sup_{f\in\ELL_{0}\left(\mu\right)\setminus\left\{ 0\right\} }\frac{\left\Vert f\right\Vert _{2}^{2}}{\Phi\left(f\right)}\right)\cdot\left(1-s\right)\\
 & =a\cdot\left(1-s\right).
\end{align*}
It thus follows that $K\left(u\right)=u\beta\left(1/u\right)\geqslant a\cdot\left(u-1\right)$.
Now, write
\begin{align*}
K^{-}(v) & =\sup\{u\geqslant0\colon K\left(u\right)\le uv\}\\
 & \leqslant\sup\{u\geqslant0\colon a\cdot\left(u-1\right)\le uv\},
\end{align*}
to see that for $v\in\left[0,a\right)$, $K^{-}\left(v\right)<\infty$.
Now, for $u>K^{-}\left(v\right)$, it holds by definition that $K\left(u\right)>u\cdot v$,
and hence that $u\cdot v-K\left(u\right)<0$. Given that at $u=0$,
it holds that $u\cdot v-K\left(u\right)=0$, we can safely restrict
the supremum to be taken over the interval $\left[0,K^{-}\left(v\right)\right]$
as claimed. 

Write $K^{*}\left(0\right):=\sup_{u\ge0}\left\{ -K(u)\right\} $.
By nonnegativity of $K$, it is clear that $K^{*}\left(0\right)\leqslant0$,
and taking $u=0$ in the supremum ensures that $K^{*}\left(0\right)=0$.
For $v>0$, note that $u^{-1}K\left(u\right)$ tends to $0$ at $0$,
and so there exists some positive $u$ such that $K\left(u\right)<uv$,
from which one can deduce that $K^{*}\left(v\right)>0$. 

Convexity and continuity of $K^{*}$ follow from classical arguments.
Moreover, for $0<v<v'$, convexity implies that

\[
\frac{K^{*}\left(v\right)-K^{*}\left(0\right)}{v}\leqslant\frac{K^{*}\left(v'\right)-K^{*}\left(0\right)}{v'},
\]
from which one can deduce that $K^{*}$ is strictly increasing.

To upper-bound $K^{*}$, let $v\in\left[0,a\right]$, and compute
\begin{align*}
K^{*}\left(v\right) & =\sup_{u\ge0}\left\{ uv-K\left(u\right)\right\} \\
 & \leqslant\sup_{u\ge0}\left\{ uv-a\cdot\left(u-1\right)\right\} \\
 & =a+\sup_{u\geqslant0}\left\{ u\cdot\left(v-a\right)\right\} \\
 & =a,
\end{align*}
so for $v\in\left[0,a\right]$, it holds that $K^{*}\left(v\right)\leqslant a$.
Applying Jensen's inequality allows us to deduce that $K^{*}\left(v\right)\leqslant v$
on $\left[0,a\right]$.

The final point follows immediately from convexity of $K^{*}$ and
the fact that $K^{*}(0)=0$. 

\end{proof}
\begin{proof}[Proof of Lemma~\ref{lem:F_a_properties}]
Note that $x\mapsto F_{a}(x)$ is well-defined since $K^{*}$ is
continuous and $K^{*}(v)>0$ and clearly strictly decreasing with
$F_{a}\left(a\right)=0$.\textcolor{black}{{} Convexity of $F_{a}$
follows from monotonicity of $K^{*}$.} Further, $\lim_{x\downarrow0}F_{a}(x)=\infty$
since $1/K^{*}(v)\geq v^{-1}$ for $0<v<a$ as in Lemma~\ref{lem:properties-K-star}.
As such, there is a well-defined inverse function $F_{a}^{-1}:(0,\infty)\to(0,a)$,
with $F_{a}^{-1}(x)\to0$ as $x\to\infty$.
\end{proof}
\begin{proof}[Proof of Theorem \ref{thm:WPI_F_bd}.]
 Since $v\mapsto1/K^{*}(v)$ is decreasing and $k\mapsto\|P^{k}f\|_{2}$
decreasing, for $n\geq1$,
\begin{align*}
F_{a}\big(\|P^{n}f\|_{2}^{2}/\Phi(f)\big)-F_{a}\big(\|P^{n-1}f\|_{2}^{2}/\Phi(f)\big) & =\int_{\|P^{n}f\|_{2}^{2}/\Phi(f)}^{\|P^{n-1}f\|_{2}^{2}/\Phi(f)}1/K^{*}(v)\,{\rm d}v\\
 & \geq\frac{\|P^{n-1}f\|_{2}^{2}-\|P^{n}f\|_{2}^{2}}{K^{*}\big(\|P^{n-1}f\|_{2}^{2}/\Phi(f)\big)\Phi(f)}\\
 & =\frac{\mathcal{E}(P^{*}P,P^{n-1}f)/\Phi(f)}{K^{*}\big(\|P^{n-1}f\|_{2}^{2}/\Phi(f)\big)}.
\end{align*}
Now, applying the optimised \textcolor{black}{weak} Poincaré inequality
with $f$ replaced by $P^{n-1}f$, one can bound

\[
\frac{\mathcal{E}\left(P^{*}P,P^{n-1}f\right)}{\Phi\left(P^{n-1}f\right)}\geqslant K^{*}\left(\frac{\left\Vert P^{n-1}f\right\Vert _{2}^{2}}{\Phi\left(P^{n-1}f\right)}\right).
\]
By assumption, $\Phi\left(P^{n-1}f\right)\leqslant\Phi\left(f\right)$,
and hence $\frac{\left\Vert P^{n-1}f\right\Vert _{2}^{2}}{\Phi\left(P^{n-1}f\right)}\geqslant\frac{\left\Vert P^{n-1}f\right\Vert _{2}^{2}}{\Phi\left(f\right)}$.
Additionally, since $u\mapsto u^{-1}K^{*}\left(u\right)$ is increasing
(Lemma~\ref{lem:properties-K-star}), we obtain the conclusion

\begin{align*}
\left(\frac{\left\Vert P^{n-1}f\right\Vert _{2}^{2}}{\Phi\left(P^{n-1}f\right)}\right)^{-1}\cdot K^{*}\left(\frac{\left\Vert P^{n-1}f\right\Vert _{2}^{2}}{\Phi\left(P^{n-1}f\right)}\right) & \geqslant\left(\frac{\left\Vert P^{n-1}f\right\Vert _{2}^{2}}{\Phi\left(f\right)}\right)^{-1}\cdot K^{*}\left(\frac{\left\Vert P^{n-1}f\right\Vert _{2}^{2}}{\Phi\left(f\right)}\right)\\
\implies K^{*}\left(\frac{\left\Vert P^{n-1}f\right\Vert _{2}^{2}}{\Phi\left(P^{n-1}f\right)}\right) & \geqslant\left(\frac{\Phi\left(f\right)}{\Phi\left(P^{n-1}f\right)}\right)\cdot K^{*}\left(\frac{\left\Vert P^{n-1}f\right\Vert _{2}^{2}}{\Phi\left(f\right)}\right).
\end{align*}
We thus see that
\begin{align*}
\frac{\mathcal{E}\left(P^{*}P,P^{n-1}f\right)}{\Phi\left(P^{n-1}f\right)} & \geqslant\frac{\Phi\left(f\right)}{\Phi\left(P^{n-1}f\right)}\cdot K^{*}\left(\frac{\left\Vert P^{n-1}f\right\Vert _{2}^{2}}{\Phi\left(f\right)}\right)\\
\implies\frac{\mathcal{E}\left(P^{*}P,P^{n-1}f\right)}{\Phi\left(f\right)} & \geqslant K^{*}\left(\frac{\left\Vert P^{n-1}f\right\Vert _{2}^{2}}{\Phi\left(f\right)}\right).
\end{align*}
Combining this with our earlier inequalities, we see that 
\[
{\color{red}{\color{black}F_{a}\big(\|P^{n}f\|_{2}^{2}/\Phi(f)\big)-F_{a}\big(\|P^{n-1}f\|_{2}^{2}/\Phi(f)\big)\geqslant1.}}
\]
As a result, for $n\geq1$, it holds that
\[
F_{a}\big(\|P^{n}f\|_{2}^{2}/\Phi(f)\big)-F_{a}\big(\|f\|_{2}^{2}/\Phi(f)\big)\geq n,
\]
from which we obtain
\[
\|P^{n}f\|_{2}^{2}\leq\Phi\left(f\right)F_{a}^{-1}(n).
\]
\end{proof}
Weak Poincaré inequalities were considered in \cite{Rockner2001},
who used it to derive subexponential convergence rates for continuous-time
Markov semigroups. We give here the analogous discrete-time result,
and note that the obtained rate is weaker than our previous Theorem~\ref{thm:WPI_F_bd}.
\begin{prop}
Assume that a weak Poincaré inequality for $P^{*}P$ holds as in Definition~\ref{def:Weak-Poincar=0000E9-inequality}.
Then, for any $f\in\ELL_{0}(\mu)$ with $\Phi(f)<\infty$, we have
the bound
\[
\|P^{n}f\|_{2}^{2}\le\tilde{\gamma}(n)\left(\|f\|_{2}^{2}+\Phi(f)\right),
\]
where
\begin{equation}
\tilde{\gamma}(n)=\inf\left\{ r>0\colon\big(1-1/\alpha(r)\big)^{n}\le r\right\} ,\label{eq:beta(n)}
\end{equation}
and $\tilde{\gamma}$ satisfies $\gamma(n)\downarrow0,$ as $n\to\infty$,
replacing $\alpha$ with $\alpha\vee1$ if necessary.
\end{prop}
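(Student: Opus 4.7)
The plan is to follow the standard Röckner--Wang argument, which is much more direct than the $F_a^{-1}$ route of Theorem~\ref{thm:WPI_F_bd}. The essential observation is that the Dirichlet form of $P^*P$ admits the clean identity
\[
\mathcal{E}(P^*P,g) = \langle (\Id - P^*P)g,g\rangle = \|g\|_2^2 - \|Pg\|_2^2,
\]
so the weak Poincar\'e inequality applied to $g\in\ELL_{0}(\mu)$ becomes, after rearrangement,
\[
\|Pg\|_2^2 \leq \bigl(1-1/\alpha(r)\bigr)\,\|g\|_2^2 + \frac{r}{\alpha(r)}\,\Phi(g),\qquad r>0.
\]
Here we may, and do, replace $\alpha$ by $\alpha\vee 1$ so that the first factor lies in $[0,1)$.

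Next I would specialise this recursion to $g=P^{k}f$ and use the non-expansiveness $\Phi(P^{k}f)\leq\Phi(f)$ guaranteed by~\eqref{eq:phi_condn}, yielding
\[
\|P^{k+1}f\|_2^2 \leq \bigl(1-1/\alpha(r)\bigr)\|P^{k}f\|_2^2 + \tfrac{r}{\alpha(r)}\Phi(f).
\]
Iterating this inequality and summing the geometric series gives, for every $n\in\mathbb{N}$ and every $r>0$,
\[
\|P^{n}f\|_2^2 \leq \bigl(1-1/\alpha(r)\bigr)^{n}\|f\|_2^2 + r\bigl(1-(1-1/\alpha(r))^{n}\bigr)\Phi(f) \leq \bigl(1-1/\alpha(r)\bigr)^{n}\|f\|_2^2 + r\,\Phi(f).
\]
Now I would restrict to those $r>0$ satisfying $(1-1/\alpha(r))^{n}\leq r$; any such $r$ yields $\|P^{n}f\|_2^2 \leq r\,(\|f\|_2^2+\Phi(f))$, and taking the infimum over admissible $r$ gives the claimed bound with $\tilde{\gamma}(n)$ as in~\eqref{eq:beta(n)}.

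Finally, to establish $\tilde{\gamma}(n)\downarrow 0$ as $n\to\infty$, I would fix arbitrary $r>0$: since $\alpha(r)<\infty$, the factor $1-1/\alpha(r)\in[0,1)$ and hence $(1-1/\alpha(r))^{n}\to 0$, so for $n$ sufficiently large the admissibility condition holds, giving $\tilde{\gamma}(n)\leq r$; letting $r\downarrow 0$ completes the argument, and monotonicity of $\tilde{\gamma}$ is immediate from the definition. The only real obstacle is keeping track of the replacement $\alpha\mapsto\alpha\vee 1$ (needed so that the contraction factor is nonnegative and the geometric sum is bounded by $r$); everything else is bookkeeping, and the resulting rate is weaker than Theorem~\ref{thm:WPI_F_bd} because the bound couples $\|f\|_2^2$ and $\Phi(f)$ through the single parameter $r$ rather than decoupling them via the Legendre transform $K^{*}$.
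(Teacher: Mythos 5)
Your argument is correct and is essentially identical to the paper's own proof: both rely on the identity $\mathcal{E}(P^*P,g)=\|g\|_2^2-\|Pg\|_2^2$, rearrange the weak Poincar\'e inequality into a one-step contraction with $\theta(r)=1-1/\alpha(r)$, iterate using the non-expansiveness of $\Phi$, sum the geometric series to collapse the constant term to $r\Phi(f)$, and then optimise over admissible $r$. You also supply the monotone-decay argument for $\tilde\gamma$ which the paper leaves implicit, and correctly flag the role of the replacement $\alpha\mapsto\alpha\vee1$.
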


\begin{proof}
We will use the following identity; for any $g\in\ELL(\mu)$,
\begin{align*}
\|Pg\|_{2}^{2}-\|g\|_{2}^{2} & =\bigl\langle P^{*}Pg,g\bigr\rangle-\bigl\langle g,g\bigr\rangle\\
 & =-\mathcal{E}(P^{*}P,g).
\end{align*}
This implies in particular for $n\geq1$,
\begin{align*}
\|P^{n}f\|_{2}^{2}-\|P^{n-1}f\|_{2}^{2} & =-\mathcal{E}(P^{*}P,P^{n-1}f)\\
 & \leq-\|P^{n-1}f\|_{2}^{2}/\alpha(r)+(r/\alpha(r))\Phi(P^{n-1}f).
\end{align*}
Therefore with $u_{n}:=\|P^{n}f\|_{2}^{2}$ and using the nonexpansive
property of $\Phi$, we have
\begin{equation}
u_{n}\leq\big(1-1/\alpha(r)\big)u_{n-1}+(r/\alpha(r))\Phi(f),\label{eq:un_iteration}
\end{equation}
 and by iterating this, we obtain
\begin{align*}
u_{n} & \le\big(1-1/\alpha(r)\big)^{n}u_{0}+r\Phi(f).
\end{align*}
Thus if we define $\tilde{\gamma}(n)$ as in (\ref{eq:beta(n)}),
the desired bound on $\|P^{n}f\|_{2}^{2}$ is immediate.
\end{proof}
\begin{rem}
The rate (\ref{eq:beta(n)}) is weaker than the rate of Theorem~\ref{thm:WPI_F_bd}
because in the proof of Theorem~\ref{thm:WPI_F_bd}, we are essentially
optimizing over $r>0$ at every step of the iteration (\ref{eq:un_iteration}),
rather than fixing the same $r>0$ for every iteration and then optimizing
only at the end as in (\ref{eq:beta(n)}). It is noted in \cite[Corollary 2.4]{Rockner2001},
through an additional iterative argument, it is possible in some cases
to recover improved rates. This is not necessary for us, since our
Theorem~\ref{thm:WPI_F_bd} automatically returns the improved rates
obtained by \cite{Rockner2001} for their examples.
\end{rem}

\begin{proof}[Proof of Lemma~\ref{lem:rescaling-beta-rescaling-invF}]
For the first part, notice that $u[v-c_{1}\beta(c_{2}/u)]=c_{1}c_{2}(u/c_{2})\{v/c_{1}-\beta(c_{2}/u)\}$.
Then,
\[
{\color{red}{\color{black}\tilde{F}_{a}(w)=\int_{w}^{a}\frac{{\rm d}v}{c_{1}c_{2}K^{*}(v/c_{1})}=\frac{1}{c_{2}}\int_{w/c_{1}}^{a/c_{1}}\frac{c_{1}{\rm d}v'}{c_{1}K^{*}(v')}=c_{2}^{-1}F_{a/c_{1}}(w/c_{1}),}}
\]
and the result follows.
\end{proof}
\begin{proof}[Proof of Lemma \ref{lem:rate-beta-decays-polynomial}.]
 From direct calculation, we find that
\[
K^{*}(v)=C\left(c_{0},c_{1}\right)v^{1+c_{1}^{-1}},
\]
where $C\left(c_{0},c_{1}\right):=\frac{c_{0}c_{1}}{\left(c_{0}(1+c_{1})\right)^{1+c_{1}^{-1}}}$.
We then calculate that, for a fixed $a>0$,
\[
F(v)=\int_{w}^{a}\frac{\dif v}{K^{*}(v)}=\tilde{C}\left(c_{0},c_{1}\right)\left[w^{-c_{1}^{-1}}-a^{-c_{1}^{-1}}\right],
\]
with $\tilde{C}\left(c_{0},c_{1}\right):=c_{1}/C\left(c_{0},c_{1}\right)=(1+c_{1})^{1+c_{1}^{-1}}c_{0}^{c_{1}^{-1}}$.
Inverting this, we find that
\begin{align*}
F^{-1}(n) & =\left(\frac{1}{n/\tilde{C}\left(c_{0},c_{1}\right)+a^{-c_{1}^{-1}}}\right)^{c_{1}}\\
 & \le\tilde{C}\left(c_{0},c_{1}\right)^{c_{1}}n^{-c_{1}}\\
 & =c_{0}(1+c_{1})^{1+c_{1}}n^{-c_{1}}.
\end{align*}
\end{proof}
\begin{proof}[Proof of Lemma~\ref{lem:rate-beta-decays-exponentially}.]
 We have
\begin{align*}
K^{*}(v) & =\sup_{u\in[0,\infty)}\left\{ uv-u\eta_{0}\exp\left(-\eta_{1}u^{-\eta_{2}}\right)\right\} .
\end{align*}
For some $\eta>0$ and $v\in(0,1]$, take \textbf{$u_{v}:=\left(\frac{1}{\eta}\log\left(\frac{1}{v}\right)\right)^{-1/\eta_{2}}$},
that is such that \textbf{$v=\exp\left(-\eta u_{v}^{-\eta_{2}}\right)<1$}.
As a result we have a lower bound on $K^{*}$ as
\[
K^{*}(v)\geqslant v\left(\frac{1}{\eta}\log\left(\frac{1}{v}\right)\right)^{-1/\eta_{2}}-\eta_{0}v^{\eta_{1}/\eta}.
\]
Provided that $\eta\text{\ensuremath{\in\left(0,\eta_{1}\right)}}$,
the second term will decay faster as $v\downarrow0$ and the first
statement follows. 

The second statement follows upon noticing that for $0\le w\le v_{0}$,
and writing $M:=\int_{v_{0}}^{a}\frac{\dif v}{K^{*}(v)}$,
\begin{align*}
F_{a}(w) & \leq C^{-1}\int_{w}^{v_{0}}v^{-1}\left(-\log v\right)^{1/\eta_{2}}{\rm d}v+M\\
 & =C^{-1}\frac{\eta_{2}}{1+\eta_{2}}\left[\left(-\log w\right)^{(\eta_{2}+1)/\eta_{2}}-\left(-\log v_{0}\right)^{(\eta_{2}+1)/\eta_{2}}\right]+M\\
 & \le C^{-1}\frac{\eta_{2}}{1+\eta_{2}}\left(-\log w\right)^{(\eta_{2}+1)/\eta_{2}}+M.
\end{align*}
This leads for $n\geq M$ to
\[
F_{a}^{-1}(n)\leq\exp\left(-\left(C\frac{1+\eta_{2}}{\eta_{2}}(n-M)\right)^{\eta_{2}/(1+\eta_{2})}\right).
\]
For $n\geq M$, by concavity of $[0,M]\ni x\mapsto(n-x)^{\gamma}$
for $\gamma\in(0,1)$ we have $n^{\gamma}-\gamma(n-M)^{\gamma-1}M\leq(n-M)^{\gamma}$
and $(n-M)^{\gamma-1}\leq(\lceil M\rceil-M)^{\gamma-1}$ so we conclude
that there exists $C'>0$ such that for $n\in\mathbb{N}$,
\[
F_{a}^{-1}(n)\leq C'\exp\left(-\left(C\frac{1+\eta_{2}}{\eta_{2}}n\right)^{\eta_{2}/(1+\eta_{2})}\right).
\]
\end{proof}
\begin{proof}[Proof of Lemma~\ref{lem:rate-beta-decays-log}.]
\textcolor{black}{We have}

\textcolor{black}{
\begin{align*}
K^{*}\left(v\right) & =\sup_{u\in[0,\infty)}\left\{ uv-uc_{0}\cdot\left(\log\max\left(c_{1},\frac{1}{u}\right)\right)^{-p}\right\} \\
 & =\max\left\{ \sup_{u\in\left[0,c_{1}^{-1}\right]}\left\{ uv-uc_{0}\cdot\left(\log\left(\frac{1}{u}\right)\right)^{-p}\right\} ,\sup_{u\geqslant c_{1}^{-1}}\left\{ uv-uc_{0}\log c_{1}\right\} \right\} .
\end{align*}
For $v<v_{0}:=\left(\log c-1\right)^{-1/p}$, take $u=\exp\left(-\left(1+\left(v/c_{0}\right)^{-1/p}\right)\right)$
to write}

\textcolor{black}{
\begin{align*}
K^{*}\left(v\right) & \geqslant\exp\left(-\left(1+\left(v/c_{0}\right)^{-1/p}\right)\right)\cdot v-c_{0}\cdot\exp\left(-\left(1+\left(v/c_{0}\right)^{-1/p}\right)\right)\cdot\left(1+\left(v/c_{0}\right)^{-1/p}\right)^{-p}\\
 & =\exp\left(-\left(1+\left(v/c_{0}\right)^{-1/p}\right)\right)\cdot\left\{ v-c_{0}\cdot\left(1+\left(v/c_{0}\right)^{-1/p}\right)^{-p}\right\} \\
 & =\exp\left(-\left(1+\left(v/c_{0}\right)^{-1/p}\right)\right)\cdot\left\{ v-c_{0}\cdot\frac{v/c_{0}}{\left(1+\left(v/c_{0}\right)^{1/p}\right)^{p}}\right\} \\
 & =v\cdot\exp\left(-\left(1+\left(v/c_{0}\right)^{-1/p}\right)\right)\cdot\left\{ 1-\left(1+\left(v/c_{0}\right)^{1/p}\right)^{-p}\right\} \\
 & \geqslant v\cdot\exp\left(-\left(1+\left(v/c_{0}\right)^{-1/p}\right)\right)\cdot p\cdot\left(v/c_{0}\right)^{1/p}\\
 & =C\cdot v^{1+1/p}\cdot\exp\left(-\left(v/c_{0}\right)^{-1/p}\right).
\end{align*}
Thus, we can bound
\begin{align*}
F_{a}\left(w\right) & =\int_{w}^{a}\frac{{\rm d}v}{K^{*}\left(v\right)}\\
 & =\int_{w}^{v_{0}}\frac{{\rm d}v}{K^{*}\left(v\right)}+\int_{v_{0}}^{a}\frac{{\rm d}v}{K^{*}\left(v\right)}\\
 & \leqslant C^{-1}\cdot\int_{w}^{v_{0}}\frac{{\rm d}v}{v^{1+1/p}\cdot\exp\left(-\left(v/c_{0}\right)^{-1/p}\right)}+C^{'}\\
 & =C^{-1}\cdot\int_{\left(v_{0}/c_{0}\right)^{-1/p}}^{\left(w/c_{0}\right)^{-1/p}}c_{0}^{-1/p}\cdot p\cdot\exp z\,{\rm d}z+C^{'}\\
 & \leqslant C^{''}\cdot\exp\left(\left(w/c_{0}\right)^{-1/p}\right)+C^{'},
\end{align*}
and invert to deduce that
\[
F_{a}^{-1}\left(n\right)\leqslant c_{0}\cdot\left(\log\left(\frac{n-C^{'}}{C^{''}}\right)\right)^{-p}.
\]
Recalling that $F_{a}^{-1}$ is a priori bounded from above by $a<\infty$,
one then concludes that for some constant $C^{'''}>0$, it holds that
$F_{a}^{-1}\left(n\right)\leqslant C^{'''}\cdot\log\max\left(n,2\right)^{-p}$,
as claimed.}
\end{proof}

\subsection{Proofs for Section~\ref{subsec:Reversible-case}}
\begin{proof}[Proof of Lemma~\ref{lem:P-to-P2-gap}]
From routine calculations, the conclusion is equivalent to
\[
(1-c_{\mathrm{gap}})\langle f,f\rangle\ge\langle(P^{2}-c_{\mathrm{gap}}P)f,f\rangle.
\]
By the spectral theorem \cite[Chapter 9]{Conway90}, this can be expressed
as
\begin{equation}
(1-c_{\mathrm{gap}})\int_{-1}^{1}\dif\nu_{f}(\lambda)\ge\int_{-1}^{1}(\lambda^{2}-c_{\mathrm{gap}}\lambda)\,\dif\nu_{f}(\lambda).\label{eq:lem15_eq}
\end{equation}
It is easily seen that for any $0<c\le1$ we have that, $\lambda^{2}-c\lambda\le1-c$
for all $\lambda\ge-1+c$. By choosing $c=c_{\mathrm{gap}}$, and
see that equation (\ref{eq:lem15_eq}) holds for this choice of $c$
under the assumption of a left spectral gap.
\end{proof}
\textcolor{black}{To prove Theorem~\ref{prop:left-WPI-P}, we first
need the following technical lemma, which is taken from \cite[Theorem VI.9]{ReedSimon80}:}
\begin{lem}
\textcolor{black}{For any bounded positive linear operator $A:\mathcal{H}\to\mathcal{H}$
on a Hilbert space $\mathcal{H}$, there is a unique bounded positive
linear operator $B:\mathcal{H}\to\mathcal{H}$ with $B^{2}=A$, which
is realised as the limit (in operator norm) $B=\Id+\sum_{n=1}^{\infty}c_{n}(\Id-A)^{n}$,
where the constants $(c_{n})$ are known explicitly and the series
$\sum c_{n}$ converges absolutely. Furthermore, $B$ commutes with
every bounded operator which commutes with $A$.\label{lem:sqrt_lemma}}
\end{lem}

\begin{proof}[Proof of Theorem~\ref{prop:left-WPI-P}]
\textcolor{black}{Firstly, since $P$ is a reversible Markov transition
kernel, we have that $\Id+P\ge0$ as an operator on $\ELL_{0}(\mu)$.
Thus by Lemma~(\ref{lem:sqrt_lemma}), we can define a square root
operator $(\Id+P)^{1/2}$, which also commutes with $(\Id-P)$. Thus
we have the representation $(\Id-P^{2})=(\Id-P)(\Id+P)=(\Id+P)^{1/2}(\Id-P)(\Id+P)^{1/2}$. }

\textcolor{black}{Let us now fix some $f\in\ELL_{0}(\mu)$, and we
set $g:=(\Id+P)^{1/2}f$. We have 
\begin{equation}
\langle f,(\Id+P)f\rangle=\|(\Id+P)^{1/2}f\|_{2}^{2}=\|g\|_{2}^{2}.\label{eq:f-gnorm}
\end{equation}
Note that $g\in\ELL_{0}(\mu)$, since from Lemma~\ref{lem:sqrt_lemma},
we have the representation $g=(\Id+P)^{1/2}f=\Id+\sum_{k=1}^{\infty}c_{k}P^{k}f$
for some known absolutely convergent series $(c_{k})$; thus by $\mu$-invariance
of $P$, $\mu(g)=0$ since $\mu(f)=0$. Thus we can write,
\begin{align*}
\langle f,(\Id-P^{2})f\rangle & =\langle f,(\Id+P)^{1/2}(\Id-P)(\Id+P)^{1/2}f\rangle\\
 & =\langle g,(\Id-P)g\rangle.
\end{align*}
Now, since $g\in\ELL_{0}(\mu)$, we have from (\ref{eq:beta+}) that
for any $s>0$,
\begin{equation}
\|g\|_{2}^{2}\le s\langle g,(\Id-P)g\rangle+\beta_{+}(s)\Phi(g).\label{eq:WPI_g}
\end{equation}
Furthermore we have from (\ref{eq:f-gnorm}) and (\ref{eq:beta-})
that for any $s>0$,
\begin{equation}
\|f\|_{2}^{2}\le s\|g\|^{2}+\beta_{-}(s)\Phi(f).\label{eq:WPI_fg}
\end{equation}
Thus combining (\ref{eq:WPI_g}) and (\ref{eq:WPI_fg}), we find for
any $s>0$, and $s_{1}>0$, $s_{2}>0$ with $s_{1}s_{2}>0$,
\begin{align*}
\|f\|_{2}^{2} & \le\beta_{-}(s_{1})\Phi(f)+s_{1}s_{2}\langle g,(\Id-P)g\rangle+s_{1}\beta_{+}(s_{2})\Phi(g)\\
 & =s\langle f,(\Id-P^{2})f\rangle+\beta_{-}(s_{1})\Phi(f)+s_{1}\beta_{+}(s_{2})\Phi\left((\Id+P)^{1/2}f\right)\\
 & \le s\langle f,(\Id-P^{2})f\rangle+\left(\beta_{-}(s_{1})+s_{1}\beta_{+}(s_{2})\right)\left[\Phi(f)\vee\Phi\left((\Id+P)^{1/2}f\right)\right].
\end{align*}
Thus taking an infimum over $s_{1},s_{2}$, we have arrived at (\ref{eq:P2_WPI_beta}).}

\textcolor{black}{Finally, we check that $\beta$ and $\tilde{\Phi}$
satisfy the necessary conditions as in Definition~\ref{def:WPI_rev}.
The conditions for $\beta$ are established in the proof of Theorem~\ref{thm:chaining-with-WPI}.
For $\tilde{\Phi}$, the fact that $\tilde{\Phi}(cf)=c^{2}\tilde{\Phi}(f)$
is clear, and we have
\begin{align*}
\Phi\left((\Id+P)^{1/2}P^{n}f\right) & =\Phi\left(P^{n}(\Id+P)^{1/2}f\right)\\
 & \le\Phi\left((\Id+P)^{1/2}f\right),
\end{align*}
as desired. Finally, since $\Phi\le\tilde{\Phi}$, we immediately
have that for $f\in\ELL_{0}(\mu)$, $\|f\|_{2}^{2}\le a\tilde{\Phi}(f)$.}
\end{proof}
\begin{proof}[Proof of Proposition~\ref{prop:necessity-reversible-scenario}]
The proof borrows ideas from \cite{Rockner2001}. Recall that a\textcolor{black}{{}
weak }Poincaré inequality holds if we can find a function $\beta$
such that for all $f$ satisfying $\Phi\left(f\right)<\infty$, it
holds that $\|f\|_{2}^{2}\leq s\mathcal{E}(P^{*}P,f)+\beta(s)\Phi(f)$.
The sharpest such $\beta$ can then be recovered as 

\[
\beta\left(s\right):=\sup_{f:0<\Phi\left(f\right)<\infty}\left\{ \frac{\|f\|_{2}^{2}}{\Phi(f)}-s\frac{\mathcal{E}(P^{*}P,f)}{\Phi\left(f\right)}\right\} .
\]
Defining $u\left(f\right):=\frac{\|f\|_{2}^{2}}{\Phi(f)},v\left(f\right):=\frac{\mathcal{E}(P^{*}P,f)}{\Phi\left(f\right)}$,
we thus seek to find uniform upper bounds on the scale-free quantity
$u\left(f\right)-s\cdot v\left(f\right)$ over the set of such $f$.

By self-adjointness of $P$, for any $n\geqslant0$ we can apply the
spectral theorem to write

\[
P^{n}=\int_{\sigma\left(P\right)}\lambda^{n}\,\dif E_{\lambda},
\]
where $\sigma\left(P\right)$ is the spectrum of $P$ and $\left\{ E_{\lambda}:\lambda\in\sigma\left(P\right)\right\} $
is the corresponding spectral family. For $f\in\ELL_{0}\left(\mu\right)$
such that $\|f\|_{2}^{2}=1$, write
\[
\|P^{n}f\|_{2}^{2}=\int_{\sigma\left(P\right)}\left|\lambda\right|^{2n}\,\dif\left\Vert E_{\lambda}f\right\Vert ^{2}\geqslant\left(\int_{\sigma\left(P\right)}\left|\lambda\right|^{2}\,\dif\left\Vert E_{\lambda}f\right\Vert ^{2}\right)^{n}=\left\Vert Pf\right\Vert _{2}^{2n},
\]
noting that $\dif\left\Vert E_{\lambda}f\right\Vert ^{2}$ is a probability
measure and applying Jensen's inequality. By a scaling argument, one
can then deduce that $\left\Vert Pf\right\Vert _{2}\leqslant\left\Vert f\right\Vert _{2}^{1-1/n}\cdot\left\Vert P^{n}f\right\Vert _{2}^{1/n}$.

Combining this with our assumption, we obtain the estimate $\left\Vert Pf\right\Vert _{2}^{2}\leqslant\left\Vert f\right\Vert _{2}^{2\left(1-1/n\right)}\cdot\left(\gamma\left(n\right)\cdot\Phi\left(f\right)\right)^{1/n}$.
Rearranging this, we see that

\begin{align*}
\mathcal{E}\left(P^{*}P,f\right) & =\|f\|_{2}^{2}-\left\Vert Pf\right\Vert _{2}^{2}\geqslant\|f\|_{2}^{2}-\left\Vert f\right\Vert _{2}^{2\left(1-1/n\right)}\cdot\left(\gamma\left(n\right)\cdot\Phi\left(f\right)\right)^{1/n},
\end{align*}
and dividing through by $\Phi\left(f\right)$ tells us that $v\left(f\right)\geqslant u\left(f\right)-\gamma\left(n\right)^{1/n}u\left(f\right)^{1-1/n}$.
One can then write that

\begin{align*}
u\left(f\right)-s\cdot v\left(f\right) & \leqslant s\cdot\gamma\left(n\right)^{1/n}u\left(f\right)^{1-1/n}-\left(s-1\right)\cdot u\left(f\right)\\
 & \leqslant\frac{s^{n}}{\left(s-1\right)^{n-1}}\cdot\frac{\left(n-1\right)^{n-1}}{n^{n}}\cdot\gamma\left(n\right),
\end{align*}
where the second inequality comes from taking the supremum of the
right-hand side over $u\left(f\right)>0$, provided that $n\geqslant2$.
One can then deduce that

\[
\beta\left(s\right)=\sup_{f:0<\Phi\left(f\right)<\infty}\left\{ u\left(f\right)-s\cdot v\left(f\right)\right\} \leqslant\inf_{n\geq2}\left\{ \frac{s^{n}}{\left(s-1\right)^{n-1}}\cdot\frac{\left(n-1\right)^{n-1}}{n^{n}}\cdot\gamma\left(n\right)\right\} =:\beta_{0}\left(s\right).
\]

Defining $\beta_{1}\left(s\right):=\sup_{t\geqslant s}\beta_{0}\left(t\right)$,
one can see that $\beta_{1}$ is also a valid upper bound, and is
decreasing. Moreover, writing $s=n+\delta$ with $n\in\text{\ensuremath{\mathbb{N}}},\delta\in\left[0,1\right)$,
one can bound

\begin{align*}
\beta_{0}\left(s\right) & \leqslant\frac{s^{n}}{\left(s-1\right)^{n-1}}\cdot\frac{\left(n-1\right)^{n-1}}{n^{n}}\cdot\gamma\left(n\right)\\
 & =\left(\frac{n+\delta}{n}\right)^{n}\cdot\left(\frac{n-1}{n-1+\delta}\right)^{n-1}\cdot\gamma\left(n\right)\\
 & \leqslant\exp\left(\delta\right)\cdot1\cdot\gamma\left(n\right)\\
 & \leqslant e\cdot\gamma\left(\left\lfloor s\right\rfloor \right),
\end{align*}
which vanishes as $s$ grows. Moreover, since $\gamma$ is decreasing,
one can write

\[
\beta_{1}\left(s\right)=\sup_{t\geqslant s}\beta_{0}\left(t\right)\leqslant e\cdot\sup_{t\geqslant s}\gamma\left(\left\lfloor t\right\rfloor \right)=e\cdot\gamma\left(\left\lfloor s\right\rfloor \right),
\]
to conclude that $\beta_{1}$ decreases to $0$.

\textcolor{black}{Assume now that our a priori bound has the form}

\textcolor{black}{
\[
\left\Vert P^{n}f\right\Vert _{2}^{2}\leqslant\Phi\left(f\right)\cdot F^{-1}\left(n+F\left(\frac{\left\Vert f\right\Vert _{2}^{2}}{\Phi\left(f\right)}\right)\right)
\]
with $F$ decreasing, continuous, and blowing up at $0$, $F^{-1}$
decreasing, continuous, and convex, and $\log\left(-\mathrm{D}F^{-1}\right)$
convex. Taking $n=1$, rearrangement shows that}

\textcolor{black}{
\[
\frac{\mathcal{E}\left(P^{*}P,f\right)}{\Phi\left(f\right)}\geqslant v-F^{-1}\left(1+F\left(v\right)\right),
\]
with $v=\frac{\left\Vert f\right\Vert _{2}^{2}}{\Phi\left(f\right)}$.
As such, it suffices to prove that the mapping $K_{1}^{*}:v\mapsto v-F^{-1}\left(1+F\left(v\right)\right)$
is $0$ at $0$, increasing, and convex.}

\textcolor{black}{For the first statement, since $F^{-1}$ is decreasing
and non-negative $0\leq F^{-1}\left(1+F\left(v\right)\right)\leq F^{-1}\left(F\left(v\right)\right)=v$,
we deduce}

\textcolor{black}{
\begin{align*}
\lim_{v\to0^{+}}\left\{ v-F^{-1}\left(1+F\left(v\right)\right)\right\}  & =0.
\end{align*}
}

\textcolor{black}{For the second statement, compute the derivative
of $K_{1}^{*}$:}

\textcolor{black}{
\[
\left(\mathrm{D}K_{1}^{*}\right)\left(v\right)=1-\frac{\left(\mathrm{D}F^{-1}\right)\left(1+F\left(v\right)\right)}{\left(\mathrm{D}F^{-1}\right)\left(F\left(v\right)\right)}.
\]
Noting that $F^{-1}$ is decreasing and convex, it follows that the
fractional term is less than $1$, from which the claim follows.}

\textcolor{black}{Finally, for convexity, observe that the mapping
$H_{1}:v\mapsto\log\left(1-\left(\mathrm{D}K_{1}^{*}\right)\left(v\right)\right)$
is a monotone decreasing function of $\mathrm{D}K_{1}^{*}$, and hence
it suffices to show that $H_{1}$ is decreasing. By assumption, the
mapping $L:v\mapsto\log\left(\left(-\mathrm{D}F^{-1}\right)\left(v\right)\right)$
is convex, and $F$ is decreasing, from which it follows by inspection
that $H_{1}=L\circ\left(1+F\right)-L\circ F$ is decreasing, as required.
We thus deduce convexity of $K_{1}^{*}$.}

\textcolor{black}{Under the assumption that our a priori bound takes
the form}

\textcolor{black}{
\[
\frac{\left\Vert P^{n}f\right\Vert _{2}^{2}}{\Phi\left(f\right)}\leqslant\left(\mathrm{Id}-\tilde{K}^{*}\right)^{\circ n}\left(\frac{\left\Vert f\right\Vert _{2}^{2}}{\Phi\left(f\right)}\right),
\]
we can take $n=1$ as before and rearrange to directly deduce that
the desired WPI holds.}
\end{proof}

\subsubsection*{Calculations for Remark~\textcolor{red}{\ref{rem:nec-comment}}}

\textcolor{black}{In Remark~\ref{rem:nec-comment}, it is mentioned
that for the examples in Section~\ref{subsec:Examples-of-b}, the
upper bounds which are implied by that remark are sufficient for recovering
a closely related $\beta$. The relevant calculations are supplied
here.}

\textcolor{black}{Recall the bound
\[
\beta\left(s\right)\leqslant\sup_{t\geqslant s}\inf_{n\geqslant2}\left\{ \gamma\left(n\right)\cdot\left(\frac{n}{t-1}\right)^{-1}\cdot\exp\left(\frac{n}{t-1}\right)\right\} .
\]
In the case where $\gamma\left(n\right)=\left(\log\left(n+c\right)\right)^{-p}$
for some $c>1$, write}

\textcolor{black}{
\begin{align*}
\beta\left(s\right) & \leqslant\sup_{t\geqslant s}\inf_{n\geqslant2}\left\{ \left(\log\left(n+c\right)\right)^{-p}\cdot\left(\frac{n}{t-1}\right)^{-1}\cdot\exp\left(\frac{n}{t-1}\right)\right\} \\
 & \leqslant\sup_{t\geqslant s}\left\{ \left(\log\left(\left\lceil t\right\rceil -1+c\right)\right)^{-p}\cdot\left(\frac{\left\lceil t\right\rceil -1}{t-1}\right)^{-1}\cdot\exp\left(\frac{\left\lceil t\right\rceil -1}{t-1}\right)\right\} \\
 & \leqslant\sup_{t\geqslant s}\left\{ \left(\log\left(\left\lceil t\right\rceil -1+c\right)\right)^{-p}\cdot1\cdot\exp\left(2\right)\right\} \\
 & =e^{2}\cdot\left(\log\left(\left\lceil s\right\rceil -1+c\right)\right)^{-p}\\
 & \leqslant e^{2}\cdot\left(\log\left(s-1+c\right)\right)^{-p}.
\end{align*}
In the case where $\gamma\left(n\right)=\left(n+c\right)^{-p}$ for
some $c>0$, write}

\textcolor{black}{
\begin{align*}
\beta\left(s\right) & \leqslant\sup_{t\geqslant s}\inf_{n\geqslant2}\left\{ \left(n+c\right)^{-p}\cdot\left(\frac{n}{t-1}\right)^{-1}\cdot\exp\left(\frac{n}{t-1}\right)\right\} \\
 & \leqslant\sup_{t\geqslant s}\left\{ \left(\left\lceil t\right\rceil -1+c\right)^{-p}\cdot\left(\frac{\left\lceil t\right\rceil -1}{t-1}\right)^{-1}\cdot\exp\left(\frac{\left\lceil t\right\rceil -1}{t-1}\right)\right\} \\
 & \leqslant\sup_{t\geqslant s}\left\{ \left(\left\lceil t\right\rceil -1+c\right)^{-p}\cdot1\cdot\exp\left(2\right)\right\} \\
 & =e^{2}\cdot\left(\left\lceil s\right\rceil -1+c\right)^{-p}\\
 & \leqslant e^{2}\cdot\left(s-1+c\right)^{-p}.
\end{align*}
Finally, in the case where $\gamma\left(n\right)=\exp\left(-n^{\psi}\right)$
with $\psi\in\left(0,1\right)$, then}

\textcolor{black}{
\[
\beta\left(s\right)\leqslant\sup_{t\geqslant s}\inf_{n\geqslant2}\left\{ \exp\left(-n^{\psi}\right)\cdot\left(\frac{n}{t-1}\right)^{-1}\cdot\exp\left(\frac{n}{t-1}\right)\right\} .
\]
Now, take $n=\left\lceil C\cdot\left(t-1\right)^{\frac{1}{1-\psi}}\right\rceil $
to see that}

\textcolor{black}{
\begin{align*}
\beta\left(s\right) & \leqslant\sup_{t\geqslant s}\left\{ \exp\left(-\left\lceil C\cdot\left(t-1\right)^{\frac{1}{1-\psi}}\right\rceil ^{\psi}\right)\cdot\left(\frac{\left\lceil C\cdot\left(t-1\right)^{\frac{1}{1-\psi}}\right\rceil }{t-1}\right)^{-1}\cdot\exp\left(\frac{\left\lceil C\cdot\left(t-1\right)^{\frac{1}{1-\psi}}\right\rceil }{t-1}\right)\right\} \\
 & \leqslant\sup_{t\geqslant s}\left\{ \exp\left(-C^{\psi}\cdot\left(t-1\right)^{\frac{\psi}{1-\psi}}\right)\cdot\left(C\cdot\left(t-1\right)^{\frac{\psi}{1-\psi}}\right)^{-1}\cdot\exp\left(C\cdot\left(t-1\right)^{\frac{\psi}{1-\psi}}+\frac{1}{t-1}\right)\right\} \\
 & \leqslant\frac{e}{C}\cdot\sup_{t\geqslant s}\left\{ \frac{\exp\left(-\left(C^{\psi}-C\right)\cdot\left(t-1\right)^{\frac{\psi}{1-\psi}}\right)}{\left(t-1\right)^{\frac{\psi}{1-\psi}}}\right\} .
\end{align*}
Let $C\in\left(0,1\right)$ so that $C^{\psi}-C$ is maximised (and
in particular, is positive) and takes the value $C_{\psi}$, so that}

\textcolor{black}{
\begin{align*}
\beta\left(s\right) & \leqslant\frac{e}{C}\cdot\sup_{t\geqslant s}\left\{ \frac{\exp\left(-C_{\psi}\cdot\left(t-1\right)^{\frac{\psi}{1-\psi}}\right)}{\left(t-1\right)^{\frac{\psi}{1-\psi}}}\right\} \\
 & =\frac{e}{C}\cdot\frac{\exp\left(-C_{\psi}\cdot\left(s-1\right)^{\frac{\psi}{1-\psi}}\right)}{\left(s-1\right)^{\frac{\psi}{1-\psi}}}\\
 & \leqslant\frac{e}{C}\cdot\exp\left(-C_{\psi}\cdot\left(s-1\right)^{\frac{\psi}{1-\psi}}\right).
\end{align*}
}

\subsection{Proofs for Section~\ref{subsec:Illustration:-Independent-MH}}
\begin{proof}[Proof of Proposition~\ref{prop:IMH-WPI}.]
\textcolor{black}{We compute directly:
\begin{align*}
\|f\|_{2}^{2} & =\frac{1}{2}\int_{\E\times\E}\dif x\,\dif y\,\pi(x)\pi(y)\left[f(y)-f(x)\right]^{2}\\
 & =\frac{1}{2}\int_{A(s)}\dif x\,\dif y\,\pi(x)\pi(y)\left[f(y)-f(x)\right]^{2}\\
 & \quad+\frac{1}{2}\int_{A(s)^{\complement}}\dif x\,\dif y\,\pi(x)\pi(y)\left[f(y)-f(x)\right]^{2}\\
 & \le\frac{s}{2}\int_{A(s)}\dif x\,\dif y\,\pi(x)\pi(y)\left(w^{-1}(x)\wedge w^{-1}(y)\right)\left[f(y)-f(x)\right]^{2}\\
 & \quad+\frac{1}{2}\int_{A(s)^{\complement}}\dif x\,\dif y\,\pi(x)\pi(y)\left[f(y)-f(x)\right]^{2}\\
 & =\frac{s}{2}\int_{\E\times\E}\dif x\,\dif y\,\pi(x)\pi(y)\left(w^{-1}(x)\wedge w^{-1}(y)\right)\left[f(y)-f(x)\right]^{2}\\
 & \quad+\frac{1}{2}\int_{A(s)^{\complement}}\dif x\,\dif y\,\pi(x)\pi(y)\left[f(y)-f(x)\right]^{2}\left(1-s\left(w^{-1}(x)\wedge w^{-1}(y)\right)\right)\\
 & \le s\mathcal{E}(P,f)+\frac{\pi\otimes\pi(A(s)^{\complement})}{2}\|f\|_{\mathrm{osc}}^{2}.
\end{align*}
}
\end{proof}
\begin{proof}[Proof of Lemma~\ref{lem:indep_ex}]
Direct calculation:
\begin{align*}
\frac{\pi\otimes\pi(A(s)^{\complement})}{2} & =\frac{1}{2}\left[1-\int_{A(s)}\dif x\,\dif y\,a_{1}^{2}\exp\left(-a_{1}(x+y)\right)\right]\\
 & =\frac{1}{2}\left[1-\int_{0}^{\frac{\log s}{a_{2}-a_{1}}}\dif x\int_{0}^{\frac{\log s}{a_{2}-a_{1}}}\dif y\,a_{1}^{2}\exp\left(-a_{1}(x+y)\right)\right]\\
 & =\frac{1}{2}\left[1-\int_{0}^{\frac{\log s}{a_{2}-a_{1}}}\dif x\,a_{1}\exp(-a_{1}x)\left[-\exp(-a_{1}y)\right]_{0}^{\frac{\log s}{a_{2}-a_{1}}}\right]\\
 & =\frac{1}{2}\left[1-\int_{0}^{\frac{\log s}{a_{2}-a_{1}}}\dif x\,a_{1}\exp(-a_{1}x)\left(1-s^{-\frac{a_{1}}{a_{2}-a_{1}}}\right)\right]\\
 & =\frac{1}{2}\left[1-\left(1-s^{-\frac{a_{1}}{a_{2}-a_{1}}}\right)^{2}\right].
\end{align*}
The inequality follows from $1-\left(1-s^{-\frac{a_{1}}{a_{2}-a_{1}}}\right)^{2}=\left(2-s^{-\frac{a_{1}}{a_{2}-a_{1}}}\right)s^{-\frac{a_{1}}{a_{2}-a_{1}}}$.
\end{proof}
\begin{rem}[in relation to Proposition~\ref{prop:Phi-is-2p-norm}]
 \textcolor{black}{In fact, given the invariance of the original
integral under shifts of the form $f\leftarrow f+c$, one can always
refine the above estimate to }

\textcolor{black}{
\[
\int_{A}\mu\left(dx\right)P\left(x,{\rm d}y\right)\left(f\left(x\right)-f\left(y\right)\right)^{2}\leqslant\mu\otimes P\left(A\right)^{1/q}\cdot\tilde{\Phi}\left(f\right),
\]
where
\[
\tilde{\Phi}\left(f\right):=\inf_{m\in\mathbb{R}}\left\{ \Phi\left(f-m\right)\right\} .
\]
One can also verify that $\tilde{\Phi}$ is non-expansive under the
action of $P$ by noting that $P\left(f-m\right)=Pf-m$, and hence
that}

\textcolor{black}{
\begin{align*}
\tilde{\Phi}\left(Pf\right) & =\inf_{m\in\mathbb{R}}\left\{ \Phi\left(Pf-m\right)\right\} \\
 & =\inf_{m\in\mathbb{R}}\left\{ \Phi\left(P\left(f-m\right)\right)\right\} \\
 & \leqslant\inf_{m\in\mathbb{R}}\left\{ \Phi\left(f-m\right)\right\} \\
 & =\tilde{\Phi}\left(f\right).
\end{align*}
}
\end{rem}

\section{Proofs for Section \ref{sec:Application-to-pseudo-marginal}}
\begin{proof}[Proof of Lemma~\ref{lem:averaging}.]
 We centre the $W_{i}$, which are nonnegative, and then use the
binomial theorem: writing $\overline{W}_{i}:=W_{i}-1$ for each $i$,
\begin{align*}
\mathbb{E}\left[\left|\frac{1}{N}\sum_{i=1}^{N}W_{i}\right|^{p}\right] & =\mathbb{E}\left[\left(1+\frac{1}{N}\sum_{i=1}^{N}\overline{W}_{i}\right)^{p}\right]\\
 & =1+\sum_{k=2}^{p}\left(\begin{array}{c}
p\\
k
\end{array}\right)\Ebb\left[\left(\frac{1}{N}\sum_{i=1}^{N}\overline{W}_{i}\right)^{k}\right]\\
 & \le1+\sum_{k=2}^{p}\left(\begin{array}{c}
p\\
k
\end{array}\right)\Ebb\left[\left|\frac{1}{N}\sum_{i=1}^{N}\overline{W}_{i}\right|^{k}\right],
\end{align*}
where we used the fact that each $\Ebb\overline{W}_{i}=0$ to cancel
the $k=1$ summand. We now make use of the Marcinkiewicz--Zygmund
inequality, which tells us that for $k\ge2$, there exist universal
constants $\left\{ B_{k}\right\} $, such that
\begin{align*}
\Ebb\left[\left|\frac{1}{N}\sum_{i=1}^{N}\overline{W}_{i}\right|^{k}\right] & \le B_{k}\Ebb\left[\left(\frac{1}{N^{2}}\sum_{i=1}^{N}\overline{W}_{i}^{2}\right)^{k/2}\right]\\
 & \le B_{k}N^{-k/2}\Ebb\left[\left|\overline{W}_{1}\right|^{k}\right],
\end{align*}
where for the latter inequality we use Minkowski's inequality with
exponent $k/2$. This gives us the bound, for some constants $\left\{ C_{p,k}\right\} $,
\[
\mathbb{E}\left[\left|\frac{1}{N}\sum_{i=1}^{N}W_{i}\right|^{p}\right]\le1+\sum_{k=2}^{p}N^{-k/2}C_{p,k}\Ebb\left[\left|\overline{W}_{1}\right|^{k}\right].
\]
\end{proof}
\begin{proof}[Proof of Proposition \ref{prop:ABC-example}.]
 Since under $Q_{x}$, $\ell_{{\rm ABC}}(x)\sum_{j=1}^{N}W(z_{j})\sim\mathrm{Bin}\big(N,\ell_{{\rm ABC}}(x)\big)$,
we can write
\[
Q_{x}\left[\mathcal{W}_{N}^{p+1}\right]=\frac{\mathbb{E}\left[\mathrm{Bin}(N,\ell_{{\rm ABC}}(x))^{p+1}\right]}{(N\ell_{{\rm ABC}}(x))^{p+1}}.
\]
The (non-centered) moments of a binomial random variable are known
\cite{Knoblauch08} to have the form,
\[
\mathbb{E}\left[\mathrm{Bin}(N,\ell_{{\rm ABC}}(x))^{p+1}\right]=\sum_{k=1}^{p+1}c_{p+1,k}N^{\underline{k}}\ell_{{\rm ABC}}(x)^{k},
\]
for appropriate coefficients $c_{p+1,k}$ and where $N^{\underline{k}}=N(N-1)\cdots(N-k+1)$
is the $k$th falling power of $N$. The result follows immediately
from this identity. 
\end{proof}
\begin{proof}[Proof of Proposition \ref{prop:prod-avg-bound}.]
 We observe that by independence, we may write for a given $x\in\mathsf{X}$,
\[
Q_{x}(W^{p})=Q_{x}\left(\prod_{t=1}^{T}W_{t}^{p}\right)=\prod_{t=1}^{T}Q_{x,t}\left(W_{t}^{p}\right),
\]
for some distributions $Q_{x,1},\ldots,Q_{x,T}$. By Lemma~\ref{lem:averaging},
we have (with dependence on $t$ and $x$ suppressed from $C_{p,k}$)
\begin{align*}
Q_{x,t}\left(W_{t}^{p}\right) & \leq1+\sum_{k=2}^{p}N^{-k/2}C_{p,k}Q_{x,t}\left[|W_{t}-1|^{k}\right]\\
 & \leq1+\max_{k\in\{2,\ldots,p\}}\left\{ C_{p,k}Q_{x,t}\left[|W_{t}-1|^{k}\right]\right\} \sum_{k=2}^{p}N^{-k/2}\\
 & =:1+M_{t,p}\sum_{k=2}^{p}N^{-k/2},
\end{align*}
for some constants $\{C_{p,k}\}$ and $M_{t,p}$. Since $\sum_{k=2}^{\infty}N^{-k/2}=1/(N-\sqrt{N})$,
we deduce that there exists functions $M_{t,p}\leq M_{p}$ such that
for any $x\in\mathsf{X}$,
\[
Q_{x}(W^{p})=Q_{x}\left(\prod_{t=1}^{T}W_{t}^{p}\right)\leq\prod_{t=1}^{T}\left(1+\frac{M_{t,p}(x)}{N-\sqrt{N}}\right)\leq\left(1+\frac{M_{p}(x)}{N-\sqrt{N}}\right)^{T}.
\]
For the given choice of $N$,
\[
Q_{x}(W^{p})\leq\left(1+\frac{M_{p}(x)}{\alpha T}\right)^{T}\leq\exp\left(\frac{M_{p}(x)}{\alpha}\right).
\]
Hence, we obtain,
\begin{align*}
\int\pi({\rm d}x)\tilde{\pi}_{x}(W\geq s) & \leq\int\pi({\rm d}x)\frac{\tilde{\pi}_{x}(W^{p-1})}{s^{p-1}}\\
 & =\int\pi({\rm d}x)\frac{Q_{x}(W^{p})}{s^{p-1}}\\
 & \leq s^{-p+1}\int\pi({\rm d}x)\exp\left(\frac{M_{p}(x)}{\alpha}\right),
\end{align*}
as required.
\end{proof}
\begin{proof}[Proof of Lemma \ref{lem:beta-s-log-normal}.]
We calculate directly: recalling that $\tilde{\pi}_{x}(\dif w)=wQ_{x}(\dif w)$,
\begin{align*}
\tilde{\pi}_{x}(W & \ge s)=\int_{s}^{\infty}\frac{1}{\sigma\sqrt{2\pi}}\exp\left(-\frac{\left(\log w+\sigma^{2}/2\right)^{2}}{2\sigma^{2}}\right)\,\dif w\\
 & =\frac{1}{\sigma\sqrt{2\pi}}\int_{\log s}^{\infty}\exp\left(-\frac{\left(z-\sigma^{2}/2\right)^{2}}{2\sigma^{2}}\right)\,\dif z\\
 & =\mathbb{P}\left(Z\ge\frac{\log s-\sigma^{2}/2}{\sigma}\right),
\end{align*}
where we used the substitution $z=\log w$, and in the final expression
$Z\sim\mathcal{N}(0,1)$. The result then follows from standard sub-Gaussian
tail bounds, e.g. \cite[Chapter 2, Prop. 2.5]{Wainwright19}.
\end{proof}
\begin{proof}[Proof of Lemma \ref{lem:log_normal_lambert}.]
 This follows by first bounding $F(w)$ using Lemma\ \ref{lem:boundK*}:
\begin{align*}
F(w) & =\int_{w}^{1}\frac{\dif v}{K^{*}(v)}\\
 & \le2\exp(\sigma^{2}/2)\int_{w}^{1}\frac{1}{v}\exp\left(\sigma\sqrt{-2\log\frac{v}{2}}\right)\,\dif v\\
 & =2\exp(\sigma^{2}/2)\int_{-\log\frac{1}{2}}^{-\log\frac{w}{2}}\exp\left(\sigma\sqrt{2z}\right)\,\dif z\\
 & =2\exp(\sigma^{2}/2)\left[\sigma^{-2}\exp\left(\sigma\sqrt{2z}\right)\left(\sigma\sqrt{2z}-1\right)\right]_{-\log\frac{1}{2}}^{-\log\frac{w}{2}}\\
 & \le2\exp(\sigma^{2}/2)\sigma^{-2}\exp\left(\sigma\sqrt{-2\log\frac{w}{2}}\right)\sigma\sqrt{-2\log\frac{w}{2}}.
\end{align*}
where we made use of the substitution $z=-\log\frac{v}{2}$. The result
then follows from inverting this relationship, using the fact that
$\mathsf{W}$ is the inverse of the map $x\mapsto x\exp(x)$.
\end{proof}
\begin{proof}[Proof of Proposition \ref{prop:mixing-time-lognormal}.]
 For notational simplicity we may drop the argument of some of the
functions involved. Let $C=\mathsf{B}C_{\mathrm{P}}/2\sigma_{0}^{2}$
then the constraint gives
\[
-\frac{1}{\sigma^{2}}\mathsf{W}^{2}\left(\frac{C\sigma^{4}}{\exp(\sigma^{2}/2)}\right)=-H(\epsilon)=-2\log\left(\frac{2}{C_{{\rm P}}\epsilon^{2}}\right).
\]
We may express $C$ in terms of $\sigma$,
\[
C(\sigma)=\sqrt{H}\frac{\exp\left(\sigma^{2}/2+\sigma\sqrt{H}\right)}{\sigma^{3}},
\]
which we optimise w.r.t. $\sigma$, to obtain the minimiser $\bar{\sigma}_{\star}=(-\sqrt{H}+\sqrt{H+12})/2.$
Since 
\[
\frac{6}{H(1+12/H)^{1/2}}\leq\frac{1}{2}\int_{0}^{12/H}(1+u)^{-1/2}{\rm d}u\leq\frac{6}{H},
\]
for $H\geq1$,
\[
\frac{1}{4}\frac{3}{\sqrt{H}}\leq\frac{3}{\sqrt{H}(1+12/H)^{1/2}}\leq\bar{\sigma}_{\star}\leq\frac{3}{\sqrt{H}},
\]
while $\lim_{H\to\infty}\sqrt{H}\bar{\sigma}_{\star}=3$. We may bound
the budget by taking $\bar{\sigma}(\epsilon)=3/\sqrt{H(\epsilon)}$
so $\bar{N}(\epsilon)=\sigma_{0}^{2}H(\epsilon)/9$,
\[
\mathsf{B}(\epsilon)=\frac{2\sigma_{0}^{2}}{C_{{\rm P}}}C(\sigma(\epsilon))=\frac{2\sigma_{0}^{2}H(\epsilon)^{2}\exp\left(\frac{9}{2H}+3\right)}{27C_{{\rm P}}}\leq\frac{2\sigma_{0}^{2}H(\epsilon)^{2}\exp\left(15/2\right)}{27C_{{\rm P}}},
\]
and we have 
\[
\bar{n}(\epsilon)\leq\frac{2H(\epsilon)\exp\left(15/2\right)}{3C_{{\rm P}}}.
\]
Taking $H(\epsilon)=2\log\left(\frac{2}{C_{{\rm P}}\epsilon^{2}}\right)$
then gives the result.
\end{proof}
\begin{proof}[Proof of Lemma \ref{lem:bound-asymp-var}.]
 We use the following expression for the asymptotic variance,
\begin{align*}
v(f,P) & =2\langle(\Id-P)^{-1}f,f\rangle-\|f\|_{2}^{2}\\
 & =2\sum_{n=0}^{\infty}\langle f,P^{n}f\rangle-\|f\|_{2}^{2}\\
 & =2\sum_{l=0}^{\infty}\left\langle P^{l}f,({\rm Id}+P)P^{l}f\right\rangle -\|f\|_{2}^{2}\\
 & \le4\sum_{l=0}^{\infty}\|P^{l}f\|_{2}^{2}-\|f\|_{2}^{2}\\
 & \le4\sum_{l=0}^{\infty}\Phi(f)F^{-1}(l)-\|f\|_{2}^{2}.
\end{align*}
\end{proof}

\section*{Acknowledgments}

We would like to thank Gareth Roberts \textcolor{black}{and Chris
Sherlock} for useful comments. Research of CA, AL and AQW supported
by EPSRC grant `CoSInES (COmputational Statistical INference for Engineering
and Security)' (EP/R034710/1), and research of CA and SP supported
by EPSRC grant Bayes4Health, `New Approaches to Bayesian Data Science:
Tackling Challenges from the Health Sciences' (EP/R018561/1).

\end{document}